\newtheorem{theorem}{Theorem}[section]
\newtheorem{lemma}[theorem]{Lemma}
\newtheorem{observation}[theorem]{Observation}
\newcommand{\bigoh}{\mathcal{O}}
\newif\iflong
\newif\ifshort
\tikzset{
    ncbar angle/.initial=90,
    ncbar/.style={
        to path=(\tikztostart)
        -- ($(\tikztostart)!#1!\pgfkeysvalueof{/tikz/ncbar angle}:(\tikztotarget)$)
        -- ($(\tikztotarget)!($(\tikztostart)!#1!\pgfkeysvalueof{/tikz/ncbar angle}:(\tikztotarget)$)!\pgfkeysvalueof{/tikz/ncbar angle}:(\tikztostart)$)
        -- (\tikztotarget)
    },
    ncbar/.default=0.5cm,
}
\tikzset{square left brace/.style={ncbar=0.2cm}}
\tikzset{square right brace/.style={ncbar=-0.2cm}}
\tikzset{round left paren/.style={ncbar=0.5cm,out=120,in=-120}}
\tikzset{round right paren/.style={ncbar=0.5cm,out=60,in=-60}}
\tikzset{decorate sep/.style 2 args={decorate,decoration={shape backgrounds,shape=circle,shape size=#1,shape sep=#2}}}
\def\boxit#1{\vbox{\hrule\hbox{\vrule\kern4pt
  \vbox{\kern1pt#1\kern1pt}
\kern2pt\vrule}\hrule}}
\newcommand{\nat}{\mathbb{N}}
\newcommand{\umatch}{p-{\sc Matching}}
\newcommand{\wmatcheq}{p-{\sc WT-Matching}}
\newcommand{\SSS}{\mathcal{S}}
\newcommand{\TT}{{\mathcal Trim}}
\newcommand{\RR}{{\mathcal Reduce}}
\newcommand{\CC}{{\mathcal Compact}}
\newcommand{\RC}{{\mathcal Red}$-${\mathcal Com}}
\newcommand\Algphase[1]{%
\vspace*{-.7\baselineskip}\Statex\hspace*{\dimexpr-\algorithmicindent-2pt\relax}\rule{\textwidth}{0.4pt}%
\Statex\hspace*{-\algorithmicindent}\textbf{#1}%
\vspace*{-.7\baselineskip}\Statex\hspace*{\dimexpr-\algorithmicindent-2pt\relax}\rule{\textwidth}{0.4pt}%
}
\newcommand{\NP}{\mbox{{\sf NP}}}
\newcommand{\paraproblem}[4]{\noindent {\sc #1}
\\
{\bf Given:} #2\\
{\bf Parameter:} #3\\
{\bf Goal:} #4}
\begin{document}

\title{\Large Optimal Streaming Algorithms for Graph Matching}

\author{
Jianer Chen\thanks{Department of Computer Science and Engineering, Texas A\&M University, College Station,  TX 77843, USA. Email: {\tt chen@cse.tamu.edu.}} \and 
Qin Huang\thanks{Department of Computer Science and Engineering,
    Texas A\&M University, College Station,  TX 77843, USA. Email: {\tt huangqin@tamu.edu.}}
    
\and Iyad Kanj\thanks{School of Computing, DePaul University, Chicago, IL 60604, USA. Email: {\tt ikanj@cs.depaul.edu}.} 
  
    \and Ge Xia\thanks{Department of Computer Science, Lafayette College, Easton, PA 18042, USA. Email: {\tt xiag@lafayette.edu.}} 
    }

\date{}

\maketitle





\thispagestyle{empty} 
\begin{abstract}  
We present parameterized streaming algorithms for the graph matching problem in both the dynamic and the insert-only models. For the dynamic streaming model, we present a one-pass algorithm that, \emph{w.h.p.}~(with high probability), computes a maximum-weight $k$-matching of a weighted graph in $\tilde{O}(Wk^2)$ space\footnote{The notation $\tilde{O}()$ hides a poly-logarithmic factor in the input size.} and that has $\tilde{O}(1)$ update time, where $W$ is the number of distinct edge weights. For the insert-only streaming model, we present a one-pass algorithm that runs in $\bigoh(k^2)$ space and has $\bigoh(1)$ update time, and that, \emph{w.h.p.}, computes a maximum-weight $k$-matching of a weighted graph. The space complexity and the update-time complexity achieved by our algorithms for unweighted $k$-matching in the dynamic model and for weighted $k$-matching in the insert-only model are optimal.

A notable contribution of this paper is that the presented algorithms {\it do not} rely on the apriori knowledge/promise that the cardinality of \emph{every} maximum-weight matching of the input graph is upper bounded by the parameter $k$. This promise has been a critical condition in previous works, and lifting it required the development of new tools and techniques. 
%
%
%
\end{abstract}
\section{Introduction}
\label{subsec:def}

Emerging applications in big-data involve processing graphs of tremendous size~\cite{stream_survey}. 
For such applications, it is infeasible to store the graph when processing it.  This issue has given rise to a new computational model, referred to as the \emph{graph streaming model}. A graph \emph{stream} $\SSS$ for an underlying graph $G$ is a sequence of elements of the form $(e, op)$, where $op$ is an operation performed to edge $e$. 
In the \emph{insert-only} streaming model, each operation is an edge-insertion, while in the \emph{dynamic} streaming model each operation is either an edge-insertion or an edge-deletion (with a specified weight if $G$ is weighted). The graph streaming model demands performing the computation within limited space and time resources.

The graph matching problem, both in unweighted and weighted graphs, is one of the most extensively-studied problems in the streaming model. There has been a vast amount of work on its approximation and parameterized complexity, as will be discussed shortly.   

A matching $M$ in a graph $G$ is a \emph{$k$-matching} if $|M|=k$. A \emph{maximum-weight $k$-matching} in a weighted graph $G$ is a $k$-matching whose weight is maximum over all $k$-matchings in $G$. In this paper, we study parameterized streaming algorithms for the weighted and unweighted $k$-matching problem in both the dynamic and the insert-only streaming models. In these problems, we are given a graph stream and a parameter $k \in \nat$, and the goal is to compute a $k$-matching or a maximum-weight $k$-matching. We present results that improve several results in various aspects and that achieve optimal complexity upper bounds.

\subsection{Related Work}
\label{subsec:related}
    Most of the previous works on the graph matching problem in the streaming model have focused on approximating a maximum matching~\cite{soda16-5, soda17, soda16-9, soda16-15, soda16-18, soda16-23, soda16-26, soda16-31, soda16-32, soda16-37, soda16-38, soda-21, soda16-41, new-soda-17, soda16-48}, with the majority of these works pertaining to the (simpler) insert-only  model. The most relevant to ours are the works of~\cite{Chitnis2016, spaa2015, soda15, soda16-22}, which studied parameterized streaming algorithms for the maximum matching problem. We survey these works next.

Under the promise that the cardinality of \emph{every} maximal matching at every instant of the stream is at most $k$, the authors in~\cite{spaa2015,soda15} presented a one-pass dynamic streaming algorithm that \emph{w.h.p.}~computes a maximal matching in an unweighted graph stream. Their algorithm runs in $\tilde{O}(k^2)$ space and has $\tilde{O}(k^2)$ update time.  

The authors in~\cite{Chitnis2016} considered the problem of computing a maximum matching in the dynamic streaming model. For an unweighted graph $G$, under the promise that the cardinality of a maximum matching at every instant of the stream is at most $k$, they presented a sketch-based algorithm that \emph{w.h.p.}~computes a maximum matching of $G$, and that runs in $\tilde{O}(k^2)$ space and has $\tilde{O}(1)$ update time. They proved an $\Omega(k^2)$ lower bound on the space complexity of any randomized algorithm for the parameterized maximum matching problem, even in the insert-only model, thus showing that the space complexity of their algorithm is optimal (modulo a poly-logarithmic factor); their lower bound result carries over to the $k$-matching problem.
They extended their algorithm to weighted graphs, and presented under the same promise an algorithm for computing a maximum-weight matching that runs in space $\tilde{O}(k^2 W)$ and has $\tilde{O}(1)$ update time, where $W$ is the number of distinct edge weights. 

We remark that the previous work on the weighted matching problem in the streaming model~\cite{Chitnis2016}, as well as our current work, assumes that the weight of each edge remains the same during the stream. Other works on weighted graph streams make the same assumption~\cite{ahn2012graph,ahn2015correlation,goel2012single,kapralov2014spanners}. The reason behind this assumption is that---as shown in this paper, if this assumption is lifted, we can derive a lower bound on the space complexity of the $k$-matching problem that is at least linear in the size of the graph, and hence, can be much larger than the desirable space complexity. 

Fafianie and Kratsch~\cite{soda16-22} studied kernelization streaming algorithms in the insert-only model for the \NP-hard $d$-{\sc Set Matching} problem (among others), which for $d=2$, is equivalent to the $k$-matching problem in unweighted graphs. Their result implies a one-pass kernelization streaming algorithm for $k$-matching in unweighted graphs that computes a kernel of size $\bigoh(k^2 \log k)$, and that runs in $\bigoh(k^2)$ space and has $\bigoh(\log k)$ update time.
 
We mention that Chen et al.~\cite{chen2020} studied algorithms for $k$-matching in unweighted and weighted graphs in the RAM model; their algorithms use limited computational resources and \emph{w.h.p.}~compute a $k$-matching.   Clearly, the RAM model is very different from the streaming model. 
In order to translate their algorithm to the streaming model, it would require $\Omega($n$ k)$ space and multiple passes, where $n$ is the number of vertices. However, we mention that (for the insert-only model), in one of the steps of our algorithm, we were inspired by a graph operation for constructing a reduced graph, which was introduced in their paper. 

Finally, there has been some work on computing matchings in special graph classes, and with respect to parameters other than the cardinality of the matching (e.g., see~\cite{Niedermeier4,  Niedermeier2, Niedermeier3, Niedermeier1}).

\iflong As is commonly the case in the relevant literature, we work under the assumption that each basic operation on words takes constant time and space. \fi

\subsection{Results and Techniques}
\noindent {\bf Results and Techniques for the Dynamic Model.} We give a one-pass sketch-based one-sided error streaming algorithm that, for a weighted graph $G$, if $G$ contains a $k$-matching then, with probability at least 
 $1-\frac{11}{20k^3\ln (2k)}$, the algorithm computes a maximum-weight $k$-matching of $G$ and if $G$ does not contain a $k$-matching then the algorithm reports that correctly. The algorithm runs in $\tilde{O}(Wk^2)$ space and has $\tilde{O}(1)$ update time, where $W$ is the number of distinct weights in the graph. A byproduct of this result is a one-pass one-sided error streaming algorithm for unweighted $k$-matching running in $\tilde{O}(k^2)$ space and having $\tilde{O}(1)$ update time. For $k$-matching in unweighted graphs, the space and update-time complexity of our algorithm are optimal (modulo a poly-logarithmic factor of $k$). 

The above results achieve the same space and update-time complexity as the results in~\cite{Chitnis2016}, but generalize them in the following ways. First, our algorithm can be used to solve the unweighted/weighted $k$-matching problem for any $k \in \nat$, whereas the algorithm in~\cite{Chitnis2016} can \emph{only} be used to solve the maximum (resp.~maximum-weight) matching problem \emph{and} under the promise that the given parameter $k$ is at least as large as the cardinality of every maximum (resp.~maximum-weight) matching. In particular, if one wishes to compute a maximum-weight $k$-matching where $k$ is smaller than the cardinality of a maximum-weight matching, then the algorithm in~\cite{Chitnis2016} cannot be used or, in the unweighted case, incurs a complexity that depends on the cardinality of the maximum matching. Second, the correctness of the algorithm in~\cite{Chitnis2016} relies heavily on the promise that the cardinality of every maximum (resp.~maximum-weight) matching is at most $k$, in the sense that, if this promise is not kept then, for certain instances of the problem, \emph{w.h.p.}~the subgraph returned by the algorithm in~\cite{Chitnis2016} does not contain a maximum (resp.~maximum-weight) matching.  Third, if we are to work under the same promise as in~\cite{Chitnis2016}, which is that the cardinality of every maximum (resp.~maximum-weight) matching is at most $k$, then  \emph{w.h.p.}, our algorithm ~computes a maximum (resp.~maximum-weight) matching. Therefore, in these respects, our results generalize those in~\cite{Chitnis2016}. 

Another byproduct of our result for weighted $k$-matching is a one-pass streaming approximation algorithm that, for any $\epsilon > 0$, \emph{w.h.p.}~computes a $k$-matching that is within a factor of $1+\epsilon$ from a maximum-weight $k$-matching in $G$; the algorithm runs in $\tilde{O}(k^2\epsilon^{-1}\log W')$ space and has $\tilde{O}(1)$ update time, where $W'$ is the ratio of the maximum edge-weight to the minimum edge-weight in $G$. This result matches the approximation result in~\cite{Chitnis2016}, which achieves the same bounds, albeit under the aforementioned promise. 

We complement the above with a space lower-bound result \ifshort {\rm ($\spadesuit$)} \fi showing that, if the restriction that the weight of each edge remains the same during the stream is lifted, which---as mentioned before--is a standard assumption, then \emph{even} computing a 1-matching whose weight is within a $(6/5)$-factor from the maximum-weight 1-matching, by any randomized streaming algorithms requires space that is at least linear in the size of the graph, and hence, can be much larger than the desirable space complexity.   \ifshort This result is given in the appendix.\fi

As mentioned before, the results in~\cite{Chitnis2016} rely on the assumption that the cardinality of every maximum (resp.~maximum-weight) matching is at most $k$. This assumption is essential for their techniques to work since it is used to upper bound the number of ``large'' vertices of degree at least $10k$ by $\bigoh(k)$, and the number of ``small'' edges whose both endpoints have degree at most $10k$ by $\bigoh(k^2)$. These bounds allow the sampling of a set of edges that \emph{w.h.p.}~contains a maximum (resp.~maximum-weight) matching.

To remove the reliance on the promise, we prove a structural result that can be useful in its own right for $k$-subset problems (in which the goal is to compute a $k$-subset with certain prescribed properties from some universe $U$).  Intuitively, the result states that, for any $k$-subset $S \subseteq U$, \emph{w.h.p.}~we can compute $k$ subsets $T_1, \ldots, T_k$ of $U$ that interact ``nicely'' with $S$. More specifically, (1) the sets $T_i$, for $i \in [k]$, are pairwise disjoint, (2) $S$ is contained in their union $\bigcup_{i\in [d]} T_i$, and  (3) each $T_i$ contains exactly one element of $S$.  To prove the theorem, we show that we can randomly choose an $\bigoh(\log k)$-wise independent hash function that partitions $S$ ``evenly''. We then show that we can randomly choose $\bigoh(k/\log k)$-many hash functions, from a set of universal hash functions, such that there exist $k$ integers $p_1, \ldots, p_k$, satisfying that $T_i$ is the pre-image of $p_i$ under one of the chosen hash functions.  

We then apply the above result to obtain the sets $T_i$ of vertices that \emph{w.h.p.}~induce the edges of the desired $k$-matching. Afterwards, we use $\ell_0$-sampling to select a smaller subset of edges induced by the vertices of the $T_i$'s that \emph{w.h.p.}~contains the desired $k$-matching. From this smaller subset of edges, a maximum-weight $k$-matching can be extracted.

\noindent {\bf Results and Techniques for the Insert-Only Model.} We present a one-pass one-sided error algorithm for computing a maximum-weight $k$-matching that runs in $\bigoh(k^2)$ space and has $\bigoh(1)$ update time.  The space and update-time complexity of our algorithm are optimal.

Our techniques rely on partitioning the graph (using hashing), and defining an auxiliary graph whose vertices are the different parts of the partition; the auxiliary graph is updated during the stream. By querying  this auxiliary graph, the algorithm can compute a ``compact'' subgraph of size  $\bigoh(k^2)$ that, \emph{w.h.p.}, contains the edges of the desired $k$-matching. A maximum-weight $k$-matching can then be extracted from this compact subgraph.

Fafianie and Kratsch~\cite{soda16-22} studied kernelization streaming algorithms in the insert-only model for the \NP-hard $d$-{\sc Set Matching} problem, which for $d=2$, is equivalent to $k$-matching in unweighted graphs. Their result implies a one-pass kernelization streaming algorithm for $k$-matching that computes a kernel of size $\bigoh(k^2 \log k)$ bits using $\bigoh(k^2)$ space and $\bigoh(\log k)$ update time. In comparison, our algorithm computes a compact subgraph, which is a kernel of the same size as in~\cite{soda16-22} (and from which a $k$-matching can be extracted); moreover, our algorithm treats the more general weighted case, and achieves a better update time of $\bigoh(1)$ than that of~\cite{soda16-22}, while matching their upper bound on the space complexity.

\ifshort
\section{Preliminaries}
\label{sec:prelim}
We refer to the appendix and to the following books for more details~\cite{Cygan,Diestel,fptbook,flumgrohe,upfal,niedermeier}. We write ``\emph{u.a.r.}''~as an abbreviation for ``uniformly at random''. For a positive integer $i$, let $[i]^-$ denote the set of numbers $\{0, 1, \ldots, i-1\}$, $[i]$ denote the set of numbers $\{1, \ldots, i\}$, and $\llcorner i \lrcorner$ denote the binary representation of $i$.

\noindent {\bf Computational Model \& Problem Definition.}
In a parameterized graph streaming problem $Q$, we are given an instance of the form $(\SSS, k)$, where $\SSS$ is graph stream of some underlying graph $G$ and $k \in \nat$, and we are asked to compute a solution for $(\SSS, k)$~\cite{soda15}. 

A parameterized streaming algorithm ${\cal A}$ for $Q$ generally uses a \emph{sketch}, which is a data structure that supports a set of update operations~\cite{alon, soda15, siam03, jacm06}. The algorithm ${\cal A}$ can update the sketch after processing each element of $\SSS$; the time taken to update the sketch---after processing an element---is referred to as the \emph{update time} of the algorithm. The space used by ${\cal A}$ is the space needed to compute and store the sketch, and that needed to solve the instance of the problem.  
to extract a solution from the sketch.  We study the following problems: 
\begin{itemize}
    \item \umatch: Given a graph stream $\SSS$ of an unweighted graph $G$ and a parameter $k$, compute a $k$-matching in $G$ or report that no $k$-matching exists.
    \item \wmatcheq: A graph stream $\SSS$ of a weighted graph $G$ and a parameter $k$, compute a $k$-matching of maximum weight in $G$ or report that no $k$-matching exists.
\end{itemize}

We will assume that $V(G)=\{0, \ldots, n-1\}$, and that the length of $\SSS$ is polynomial in $n$. We will design parameterized streaming algorithms for the above problems. Our algorithms first extract a subgraph $G'$ of the graph stream $G$ such that \emph{w.h.p.}~$G'$ contains a $k$-matching or a maximum-weight $k$-matching 
of $G$ if and only if $G$ contains one. In the case where the size of $G'$ is a function of $k$, such algorithms are referred to as \emph{kernelization streaming algorithms}~\cite{Chitnis2016}.  We note that result in~\cite{Chitnis2016} also computes a subgraph containing the edges of the maximum (resp.~maximum-weight) matching, without computing the matching itself, as there are efficient algorithms for extracting a maximum matching (resp.~maximum-weight) or a $k$-matching (resp.~maximum-weight $k$-matching) from that subgraph~\cite{gabow1,gabow2,tarjanbook}.

\noindent {\bf Probability.} A set $\{X_1, \ldots, X_j\}$ of random variables is \emph {$\lambda$-wise independent} if for any subset 
$J \subseteq \{1, \ldots, j\}$ with $|J| \le \lambda$ and for any values $x_i$: 
$\Pr(\wedge_{i\in J} X_i=x_i)=\prod_{i\in J} \Pr(X_i=x_i)$. 
 
\begin{theorem} {\rm (Theorem 2 in \cite{Chernoff1995})} \label{chernoff-bound}
Given any 0-1 random variables $X_1, \ldots, X_j$, let $X=\sum_{i=1}^j X_i$ and $\mu=E[X]$. For any $\delta > 0$, 
if the $X_i$'s are $\lceil \mu\delta \rceil$-wise independent, then:\\ 
\hspace*{1cm}$ \Pr(X \ge \mu(1+\delta)) \le 
  \begin{cases}
    e^{-\mu \delta^2/3}       & \quad \text{if } \delta <1 \\
    e^{-\mu \delta/3}     & \quad \text{if } \delta \ge 1
  \end{cases}
$
\end{theorem}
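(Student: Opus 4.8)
The statement is a limited-independence version of the classical Chernoff--Hoeffding bound, and my plan is to run the classical proof with the moment generating function $E[e^{sX}]$ --- whose evaluation requires full independence --- replaced by a single integer moment that can be computed using only $\lceil\mu\delta\rceil$-wise independence. Throughout set $t=\lceil\mu\delta\rceil$, $a=\mu(1+\delta)$, and $p_i=\Pr(X_i=1)$, so $\sum_i p_i=\mu$ and (since $\mu>0$) $a>\mu\delta$, whence $\lceil a\rceil\ge t$ and $a>t-1$.

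\smallskip
\noindent The right object is the binomial moment $E\bigl[\binom{X}{t}\bigr]$. Expanding $\binom{X}{t}=\sum_{S\subseteq[j],\,|S|=t}\prod_{i\in S}X_i$, a degree-$t$ polynomial in the $X_i$, and using that the $X_i$ are $0$--$1$ and $t$-wise independent, linearity of expectation gives $E\bigl[\binom{X}{t}\bigr]=\sum_{|S|=t}\prod_{i\in S}p_i=e_t(p_1,\dots,p_j)$, the $t$-th elementary symmetric polynomial of the marginals. Comparing term by term with the multinomial expansion of $\bigl(\sum_i p_i\bigr)^t=\mu^t$ --- in which each squarefree degree-$t$ monomial $\prod_{i\in S}p_i$ occurs exactly $t!$ times and all other monomials are nonnegative --- yields $e_t(p)\le\mu^t/t!$. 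Since $X$ is integer-valued and $y\mapsto y(y-1)\cdots(y-t+1)/t!$ is nondecreasing on $[t-1,\infty)$ (in particular over the integers), the event $\{X\ge a\}$ is contained in $\bigl\{\binom{X}{t}\ge\binom{\lceil a\rceil}{t}\bigr\}$ with $\binom{\lceil a\rceil}{t}\ge\binom{a}{t}>0$, so Markov's inequality gives
\[
\Pr\bigl(X\ge\mu(1+\delta)\bigr)\ \le\ \frac{E[\binom{X}{t}]}{\binom{\lceil a\rceil}{t}}\ \le\ \frac{\mu^t/t!}{\binom{a}{t}}\ =\ \frac{\mu^t}{\,a(a-1)\cdots(a-t+1)\,}\ =\ \prod_{i=0}^{t-1}\frac{\mu}{\,\mu(1+\delta)-i\,},
\]
and one checks that each factor here is strictly less than $1$ (because $i\le t-1<\mu\delta$), so the bound is always nontrivial.

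\smallskip
\noindent It remains to show the product is at most the classical Chernoff expression $\bigl(e^{\delta}/(1+\delta)^{1+\delta}\bigr)^{\mu}$ up to a mild multiplicative factor. To do this I would rewrite the product as $\mu^{t}\,\Gamma(c)/\Gamma(c+t)$ with $c:=a-t+1\in(\mu,\mu+1]$ and estimate the Gamma ratio by Stirling's formula (equivalently, bound $\sum_{i=0}^{t-1}\ln(a-i)$ from below by $\int_{0}^{t-1}\ln(a-x)\,dx$ plus the boundary term $\ln(a-t+1)$, using that the integrand is decreasing and positive on $[0,t-1]$). Substituting $a=\mu(1+\delta)$ and using $\mu\delta\le t\le\mu\delta+1$, the exponent reduces, after discarding lower-order contributions, to $\mu\bigl(\delta-(1+\delta)\ln(1+\delta)\bigr)$. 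Finally the elementary inequalities $e^{\delta}/(1+\delta)^{1+\delta}\le e^{-\delta^{2}/3}$ for $0<\delta<1$ and $e^{\delta}/(1+\delta)^{1+\delta}\le e^{-\delta/3}$ for $\delta\ge1$ give the two cases in the statement.

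\smallskip
\noindent I expect the genuinely delicate part to be precisely this last estimate: one has to verify that the constant-order slack introduced by the ceiling $t=\lceil\mu\delta\rceil$ (so that $c$ and $t$ deviate from the ``ideal'' values $\mu+1$ and $\mu\delta$) and by the Stirling error terms is absorbed by the gap between the exact exponent $\delta-(1+\delta)\ln(1+\delta)$ and the weaker claimed exponents $-\delta^{2}/3$, $-\delta/3$. This gap is comfortable once $\mu\delta$ exceeds a small absolute constant; for $\mu\delta$ below that constant (so that $t$ is a tiny constant) one instead argues directly, e.g.\ via $\Pr(X\ge a)\le\mu^{t}/t!$ or even via Markov applied to $X$ itself, followed by a short case check. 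Everything else is the routine transcription of the classical Chernoff argument with one bounded moment standing in for the moment generating function.
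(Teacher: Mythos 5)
You should first note that the paper itself gives no proof of this statement: it is imported verbatim, with citation, from \cite{Chernoff1995} (Schmidt--Siegel--Srinivasan), so there is no in-paper argument to compare against. Your route is in fact the standard proof technique from that source: replace the moment generating function by the $t$-th binomial moment with $t=\lceil\mu\delta\rceil$, use $t$-wise independence only to evaluate $E\bigl[\binom{X}{t}\bigr]=e_t(p_1,\dots,p_j)$, bound this by $\mu^t/t!$, and apply Markov together with monotonicity of $y\mapsto\binom{y}{t}$ to get $\Pr(X\ge a)\le \mu^t/\bigl(a(a-1)\cdots(a-t+1)\bigr)$ for $a=\mu(1+\delta)$. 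This half of your write-up is complete and correct, including the checks $\lceil a\rceil\ge t$, $t-1<\mu\delta<a$, and $\binom{\lceil a\rceil}{t}\ge\binom{a}{t}>0$.

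The genuine gap is the second half. The assertion that the falling-factorial ratio is at most $e^{-\mu\delta^2/3}$ (resp.\ $e^{-\mu\delta/3}$) is precisely the technical content of the cited theorem, and you only outline it (Stirling, ``discard lower-order contributions,'' ``absorb the slack,'' unspecified small-$\mu\delta$ case analysis); as written this is a plan, not a proof. Two specific points need repair. First, your sum-versus-integral step invokes positivity of $\ln(a-x)$ on $[0,t-1]$, which requires $a\ge t$; this can fail when $\mu$ is small relative to $\lceil\mu\delta\rceil-\mu\delta$ (e.g.\ $\mu=0.1$, $\delta=1$ gives $a=0.2<t=1$), so that regime must either be folded into the small-$\mu\delta$ case or the comparison redone using monotonicity alone. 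Second, the fallback for small $\mu\delta$ should be made concrete; it does work, e.g.\ plain Markov gives $\Pr(X\ge a)\le 1/(1+\delta)$, which is below both stated bounds whenever $\mu\delta\le 3\ln 2$, and for $\mu\delta\ge 2$ (so $t\ge 2$) the continuum estimate, whose leading term is indeed $\mu\bigl(\delta-(1+\delta)\ln(1+\delta)\bigr)$, has enough slack against $-\mu\delta^2/3$ and $-\mu\delta/3$ to absorb the rounding and Stirling errors. So the approach is the right one and completable, but the delicate calibration you yourself flag is exactly what remains to be done before this matches the cited result.
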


\noindent {\bf $\ell_0$-Samplers.} 
Let $0<\delta<1$ be a parameter. Let 
$\SSS=(i_1, \Delta_1), \ldots, (i_p, \Delta_p), \ldots$ 
be a stream of updates of an underlying vector $\textbf{x} \in \mathbb{R}^n$, where $i_j \in [n]$ and $\Delta_j \in \mathbb{R}$.
The $j$-th update $(i_j, \Delta_j)$ updates the $i_j$-th 
coordinate of $\textbf{x}$ by setting $\textbf{x}_{i_j}=\textbf{x}_{i_j}+\Delta_j$. An 
{\it $\ell_0$-sampler} for $\textbf{x} \neq 0$ either fails with probability at most $\delta$, or conditioned on not failing, for any non-zero coordinate $\textbf{x}_j$ of $\textbf{x}$, returns the pair $(j, \textbf{x}_j)$ with probability 
$\frac{1}{||\textbf{x}||_0}$, where $||\textbf{x}||_0$ is the $\ell_0$-norm of $\textbf{x}$, which is the same as the number of non-zero 
coordinates of $\textbf{x}$. (We refer to \cite{l0-sampling}.)

\begin{lemma}[Follows from Theorem~2.1 in~\cite{Chitnis2016}]
\label{lem:sampler}
 Let $0 < \delta < 1$ be a parameter. There exists a linear sketch-based $\ell_0$-sampler algorithm that, given a dynamic graph stream, either returns FAIL with probability at most $\delta$, or returns an edge
chosen \emph{u.a.r.}~from the edges of the stream that have
been inserted and not deleted. This algorithm
can be implemented using $\bigoh(\log^2{n} \cdot \log(\delta^{-1}
))$ bits of space and $\tilde{O}(1)$ update time, where $n$ is the number of vertices.
\end{lemma}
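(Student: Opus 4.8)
The plan is to reduce the statement to the generic linear‑sketch $\ell_0$‑sampler (Theorem~2.1 of~\cite{Chitnis2016}; see also~\cite{l0-sampling}) applied to a vector whose coordinates are indexed by the potential edges of $G$. First, I would fix a pairing bijection, computable in $\bigoh(1)$ time, between the $\binom{n}{2}$ unordered pairs $\{u,v\}$ with $u,v\in V(G)=\{0,\ldots,n-1\}$ and the coordinates $\{1,\ldots,N\}$ of a vector $\mathbf{x}\in\mathbb{R}^N$, where $N=\binom{n}{2}=\bigoh(n^2)$. The edge stream $\SSS$ is then reinterpreted as a stream of updates to $\mathbf{x}$: an insertion of an edge $e$ becomes the update $(\mathrm{index}(e),+1)$ and a deletion of $e$ becomes $(\mathrm{index}(e),-1)$. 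Under the standard well‑formedness assumption that an edge is deleted only when it is currently present, after processing all of $\SSS$ the vector $\mathbf{x}$ is exactly the $0/1$ indicator vector of the current edge set of $G$; in particular $\|\mathbf{x}\|_0=|E(G)|$, and the non‑zero coordinates of $\mathbf{x}$ are precisely the edges that have been inserted and not deleted.

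Next I would instantiate the generic $\ell_0$‑sampler on $\mathbf{x}$ with failure parameter $\delta$. By its guarantee, the algorithm either outputs $\mathrm{FAIL}$ with probability at most $\delta$, or, conditioned on not failing, returns a pair $(j,\mathbf{x}_j)$ with $j$ chosen uniformly at random among the non‑zero coordinates of $\mathbf{x}$; translating $j$ back through the fixed bijection yields an edge drawn uniformly at random from the edges inserted and not deleted, which is exactly the claimed output. Since the sampler is a linear sketch of $\mathbf{x}$ and each element of $\SSS$ triggers a single‑coordinate update to $\mathbf{x}$, the resulting graph‑stream procedure is itself a linear sketch and processes each stream element with one sketch update.

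Finally I would account for resources. The generic $\ell_0$‑sampler on an $N$‑dimensional vector can be implemented with $\bigoh(\log^2 N\cdot\log(\delta^{-1}))$ bits of space and $\mathrm{polylog}(N)$ update time; substituting $N=\bigoh(n^2)$ gives $\log N=\bigoh(\log n)$, hence space $\bigoh(\log^2 n\cdot\log(\delta^{-1}))$ bits and update time $\mathrm{polylog}(n)=\tilde{O}(1)$, as required. The only delicate points—and the place I expect the (minor) obstacle to lie—are verifying that squaring the dimension from the $n$ vertices to the $\Theta(n^2)$ edge slots costs only a constant factor in $\log N$ (so the bounds are unaffected) and that the $\bigoh(1)$‑time pairing function does not inflate the update time; a secondary subtlety is that if the stream is not well‑formed (e.g.\ a spurious deletion), one should either adopt the well‑formedness convention of~\cite{Chitnis2016} or reduce all coordinates modulo~$2$, after which the same argument applies with ``current edge set'' read accordingly.
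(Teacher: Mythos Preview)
Your proposal is correct and matches the paper's treatment: the paper does not give its own proof of this lemma but simply cites it as following from Theorem~2.1 in~\cite{Chitnis2016} (and~\cite{l0-sampling,pods11}), which is precisely the reduction you describe---index an $\bigoh(n^2)$-dimensional vector by unordered vertex pairs, feed $\pm 1$ updates to a generic $\ell_0$-sampler, and read off the space and update-time bounds with $\log N=\Theta(\log n)$. There is nothing to add.
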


\noindent {\bf Hash Functions.}
Let $U$ be a universe of elements that we will refer to as \emph{keys}.  
For a collection $\mathcal{H}$ of hash functions and a hash function $h \in \mathcal{H}$, we write 
$h \in_{u.a.r} \mathcal{H}$ to denote that $h$ is chosen \emph{u.a.r.}~from $\mathcal{H}$. Let $S \subseteq U$ and $r$ be a positive integer. A 
hash function $h: U \longrightarrow [r]^-$ is \emph{perfect} w.r.t.~$S$ if it is injective on $S$. 

A set $\mathcal{H}$ of hash functions, each mapping $U$ to $[r]^-$, is 
called \emph{universal} if for each pair of distinct keys $x, y \in U$, the number of hash functions $h\in \mathcal{H}$ for which $h(x)=h(y)$ is at most $|\mathcal{H}|/r$. 
Let $p \geq |U|$ be a prime number.  A universal set $\mathcal{H}$ of hash functions from $U$ to $[r]^-$ can be constructed as follows 
(see chapter 11 in \cite{Cormen}):
$\mathcal{H}=\{h_{a,b,r} \mid 1\le a \le p-1, 0\le b \le p-1\},$
where $h_{a,b,r}$ is defined as $h_{a,b,m}(x)=((ax+b) \mod p) \mod r$.

\begin{theorem} [Theorem 11.9 in \cite{Cormen}] \label{lemma-hash}
Let $U$ be a universe and $\mathcal{H}$ be a universal set of hash functions, each mapping $U$ to
$[r^2]^-$. For any set $S \subseteq U$ of $r$ elements and any hash function $h \in_{u.a.r.} \mathcal{H}$, the probability that $h$ is perfect w.r.t.~$S$ is larger than $1/2$.
\end{theorem}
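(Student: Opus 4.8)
The plan is to control the expected number of colliding pairs under a uniformly random $h \in \mathcal{H}$ and then invoke Markov's inequality. First I would fix the set $S$ of $r$ keys and, for each unordered pair $\{x,y\}$ of distinct elements of $S$, introduce the indicator random variable $X_{xy}$ that equals $1$ if $h(x)=h(y)$ and $0$ otherwise, the randomness being over the choice of $h \in_{u.a.r.} \mathcal{H}$. Because $\mathcal{H}$ is universal with codomain $[r^2]^-$ (of size $r^2$), the defining property of universality gives $\Pr[h(x)=h(y)] \le 1/r^2$, and hence $E[X_{xy}] \le 1/r^2$.

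Next I would set $X = \sum_{\{x,y\}} X_{xy}$, the total number of colliding pairs within $S$; by construction, $h$ is perfect with respect to $S$ precisely when $X = 0$. Linearity of expectation then yields $E[X] \le \binom{r}{2}\cdot \frac{1}{r^2} = \frac{r-1}{2r} < \frac12$. Since $X$ takes only non-negative integer values, Markov's inequality gives $\Pr[X \ge 1] \le E[X] < \frac12$, so $\Pr[X = 0] = 1 - \Pr[X \ge 1] > \frac12$, which is exactly the claimed lower bound on the probability that $h$ is perfect w.r.t.\ $S$.

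There is essentially no genuine obstacle in this argument; the one point that needs care is that the universality bound must be applied with a codomain of size $r^2$ rather than $r$ — it is precisely the quadratic blow-up of the hash range that forces the expected collision count strictly below $1/2$ and makes the Markov step conclusive. A codomain of size $\Theta(r)$ would instead give $E[X] = \Theta(r)$ and no useful guarantee, which is why the statement is phrased for hash functions into $[r^2]^-$.
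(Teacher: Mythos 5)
Your proof is correct, and it is essentially the standard argument behind the cited Theorem 11.9 of Cormen et al.\ (the paper itself imports this result without reproving it): bound the expected number of colliding pairs in $S$ by $\binom{r}{2}\cdot\frac{1}{r^2}=\frac{r-1}{2r}<\frac{1}{2}$ via universality with range $[r^2]^-$, then apply Markov's inequality to conclude $\Pr[\text{no collision}]>\frac{1}{2}$. Nothing further is needed.
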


A family $\mathcal{H}$ of hash functions, each mapping $U$ to $[r]^-$, is called $\kappa$-wise independent if for any $\kappa$ distinct keys $x_1, x_2, ..., x_{\kappa} \in U$, 
and any $\kappa$  (not necessarily 
distinct) values $a_1, a_2, ..., a_{\kappa} \in [r]^-$, we have 
$\Pr_{h \in_{u.a.r} \mathcal{H}} [ h(x_1)=a_1 \wedge h(x_2)=a_2 \wedge \cdots \wedge h(x_{\kappa})= a_{\kappa} ] = \frac{1}{r^{\kappa}}.$
 
 Let $\mathbb{F}$ be a finite field. A $\kappa$-wise independent family $\mathcal{H}$ of hash functions can be constructed as follows (See Construction 3.32 in \cite{Salil}):
 $\mathcal{H} = \{ h_{a_0, a_1, \ldots, a_{\kappa-1}}: \mathbb{F} \rightarrow \mathbb{F}\},$
 where $h_{a_0, a_1, \ldots, a_{\kappa-1}}(x) = a_0+ a_1x + \cdots + a_{\kappa-1}x^{\kappa-1}$ for 
 $a_0, \ldots, a_{\kappa-1} \in \mathbb{F}$.

\begin{theorem}[Corollary 3.34 in \cite{Salil}]  \label{evaluate-time}
For every $u, d, \kappa \in \mathbb{N}$, there is a family of $\kappa$-wise independent functions 
$\mathcal{H}= \{ h: \{0,1\}^ u \rightarrow \{0, 1\}^{d}\}$ such that choosing a random 
function from $\mathcal{H}$ takes space $\bigoh(\kappa \cdot (u+d))$. 
Moreover, evaluating a function from $\mathcal{H}$ takes time polynomial in $u, d, \kappa$. 
\end{theorem}

\fi

\iflong
\section{Preliminaries}
\label{sec:prelim}
For a positive integer $i$, let $[i]^-$ denote the set of numbers $\{0, 1, \ldots, i-1\}$, $[i]$ denote the set of numbers $\{1, \ldots, i\}$, and $\llcorner i \lrcorner$ denote the binary representation of $i$. We write ``u.a.r.'' as an abbreviation for ``uniformly at random''. 

\subsection{Parameterized Complexity} 
\label{subsec:pc}
A \emph{parameterized problem} $Q$ is a subset of 
$\Sigma^* \times \mathbb{N}$, where $\Sigma$ is a fixed, finite alphabet. Each instance is a pair 
$(I, k)$, where $k\in \mathbb{N}$ is called the parameter. A parameterized problem $Q$ is \emph{kernelizable} if there exists a polynomial-time reduction that maps an instance $(I, k)$ of $Q$ to another instance 
$(I', k')$ such that (1) $k' \le g(k)$ and $|I'| \le g(k)$, where $g$ is a computable function and 
$|I'|$ is the length of the instance, and $(2)$ $(I, k)$ is a yes-instance of $Q$ if and only if 
$(I', k')$ is a yes-instance of $Q$. The polynomial-time reduction is called the \emph{kernelization algorithm} and the instance $(I', k')$ is called the kernel of $(I,k)$. We refer to~\cite{Cygan} for more information. 

\subsection{Graphs and Matching} All graphs discussed in this paper are undirected and simple. 
Let $G$ be a graph. We write $V(G)$ and $E(G)$ for the vertex-set and edge-set of $G$, respectively, and write $uv$ for the edge whose endpoints are $u$ and $v$.  
A \emph{matching} $M \subseteq E(G)$ is a set of edges such that no two distinct edges in $M$ share the same endpoint. A matching $M$ is a 
\emph{$k$-matching} if $|M|=k$. A weighted graph $G$ is a graph associated with a weight function $wt: E(G) \longrightarrow \mathbb{R}$; we denote the weight of an edge $e$ by $wt(e)$. Let $M$ be a matching in a weighted graph $G$. The weight of $M$, $wt(M)$, is the sum of the weights of the edges in $M$, that is, $wt(M)=\sum_{e \in M} wt(e)$.   
A \emph{maximum-weight $k$-matching} in a weighted graph $G$ is a $k$-matching whose weight is maximum over all $k$-matchings in $G$.

\subsection{The Graph Streaming Model} 
\label{subsec:graphstream}

A graph \emph{stream} $\SSS$ for an underlying graph $G=(V, E)$ is a sequence of elements, each of the form $(e, op)$, where $op$ is an update to edge $e \in E(G)$. Each update could be an \emph{insertion} of an edge, a \emph{deletion} of an edge, or in the case of a weighted graph an update to the weight of an edge in $G$ (and would include the weight of the edge in that case). In the \emph{insert-only graph streaming} model, a  graph $G=(V, E)$ is given as a stream $\SSS$ of elements in which each operation is an edge-insertion, while in the \emph{dynamic graph streaming} model a graph $G=(V, E)$ is given as a stream $\SSS$ of elements in which the operations could be either edge-insertions or edge-deletions (with specified weights in case $G$ is weighted).

We assume that the vertex set $V(G)$ contains $n$ vertices, identified with the integers $\{0, \ldots, n-1\}$ for convenience, and that the length of the stream $\SSS$ is polynomial in $n$. Therefore, we will treat $v\in V$ as a unique number $v \in [n]^-$. Without loss of generality, since the graph $G$ is undirected, we will assume that the edges of the graph are of the form $uv$, where $u < v$. Since $G$ can have 
at most ${ n \choose 2 } = n(n-1)/2$ edges, each edge can be represented as a unique number in $[n(n-1)/2]^-$. At the beginning of the stream $\SSS$, the stream corresponds to a graph with an empty edge-set. For weighted graphs, 
we assume that the weight of an edge is specified when the edge is inserted or deleted.

\subsection{Computational Model \& Problem Definition}
\label{subsec:problemdefinition}

In a parameterized graph streaming problem $Q$, we are given an instance of the form $(\SSS, k)$, where $\SSS$ is a graph stream of some underlying graph $G$ and $k \in \nat$, and we are queried for a solution for $(\SSS, k)$ either at the end of $\SSS$ or after some arbitrary element/operation in $\SSS$~\cite{soda15}. 

A parameterized streaming algorithm ${\cal A}$ for $Q$ generally uses a \emph{sketch}, which is a data structure that supports a set of update operations~\cite{alon, soda15, siam03, jacm06}. The algorithm ${\cal A}$ can update the sketch after reading each element of $\SSS$; the time taken to update the sketch---after reading an element---is referred to as the \emph{update time} of the algorithm. The space used by ${\cal A}$ is the space needed to compute and store the sketch, and that needed to answer a query on the instance (based on the sketch). The time complexity of the algorithm is the time taken 
to extract a solution from the sketch when answering a query.

We will consider parameterized problems in which we are given a graph stream $\SSS$ and $k \in \nat$, and the goal is to compute a $k$-matching or a maximum-weight $k$-matching in $G$ (if one exists). We will consider both the unweighted and weighted $k$-matching problems, referred to as \umatch{} and \wmatcheq{}, respectively, and in both the insert-only and the dynamic streaming models. For \wmatcheq{}, we will follow the standard literature assumption~\cite{Chitnis2016}, which is that the weight of every edge remains the same throughout $\SSS$; we will consider in Subsection~\ref{subsec:lowerbounds} a generalized version of this problem that allows the weight of an edge to change during the course of stream, and prove lower bound results for this generalization.  We formally define the problems under consideration:
 
\paraproblem{\umatch}{A graph stream $\SSS$ of an unweighted graph 
$G$}{$k$}{Compute a $k$-matching in $G$ or report that no $k$-matching exists}

The parameterized {\sc Weighted Graph Matching} (p-{\sc WT-Matching}) problem is defined as fellows:

\paraproblem{\wmatcheq}{A graph stream $\SSS$ of a weighted graph 
$G$}{$k$}{Compute a maximum-weight $k$-matching in $G$ or report that no $k$-matching exists}

Clearly, if $k>n/2$ then $G$ does not contain a $k$-matching. Therefore, we may assume henceforth that $k\le n/2$.

We will design parameterized streaming algorithms for the above problems. Our algorithms first extract a subgraph $G'$ of the graph stream $G$ such that \emph{w.h.p.}~$G'$ contains a $k$-matching or a maximum-weight $k$-matching 
of $G$ if and only if $G$ contains one. In the case where the size of $G'$ is a function of $k$, such algorithms are referred to as \emph{kernelization streaming algorithms}~\cite{Chitnis2016}.  We note that result in~\cite{Chitnis2016} also computes a subgraph containing the edges of the maximum (resp.~maximum-weight) matching, without computing the matching itself, as there are efficient algorithms for extracting a maximum matching (resp.~maximum-weight) or a $k$-matching (resp.~maximum-weight $k$-matching) from that subgraph~\cite{gabow1,gabow2,tarjanbook}.

\subsection{Probability} For any probabilistic events $E_1, E_2, \ldots, E_r$, the \emph{union bound} states 
that $\Pr(\bigcup_{i=1}^r E_i) \le \sum_{i=1}^r \Pr(E_i)$. For any random variables $X_1, \ldots, X_r$ whose
expectations are well-defined, the linearity of expectation states that $E[\sum_{i=1}^r X_i] = \sum_{i=1}^r E[X_i]$, 
where $E[X_i]$ is the expectation of $X_i$. 
A set of discrete random variables $\{X_1, \ldots, X_j\}$ is called \emph {$\lambda$-wise independent} if for any subset 
$J \subseteq \{1, \ldots, j\}$ with $|J| \le \lambda$ and for any values $x_i$, we have 
$\Pr(\wedge_{i\in J} X_i=x_i)=\prod_{i\in J} \Pr(X_i=x_i)$. 
A random variable is called a \emph{0-1 random variable }, if it only takes 
one of the two values 0, 1. The following theorem bounds the tail probability of the sum of 
0-1 random variables with limited independence (see Theorem 2 in \cite{Chernoff1995}):
\begin{theorem}  \label{chernoff-bound}
Given any 0-1 random variables $X_1, \ldots, X_j$, let $X=\sum_{i=1}^j X_i$ and $\mu=E[X]$. For any $\delta > 0$, 
if the $X_i$'s are $\lceil \mu\delta \rceil$-wise independent, then 
\[ \Pr(X \ge \mu(1+\delta)) \le 
  \begin{cases}
    e^{-\mu \delta^2/3}       & \quad \text{if } \delta <1 \\
    e^{-\mu \delta/3}     & \quad \text{if } \delta \ge 1
  \end{cases}
\]

\end{theorem}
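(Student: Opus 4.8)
The plan is to prove this via the method of bounded moments, which is the standard replacement for the moment-generating-function argument when the variables are only partially independent. Write $\kappa = \lceil \mu\delta\rceil$ and $p_i = E[X_i]$, so that $p_1 + \cdots + p_j = \mu$; we may assume $\mu > 0$ (otherwise $X \equiv 0$ and there is nothing to prove), so $\kappa \ge 1$, and the hypothesis provides us with precisely $\kappa$-wise independence of the $X_i$'s.

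First I would compute the $\kappa$-th binomial moment of $X$. Because $X$ is a sum of $0$--$1$ variables, the combinatorial identity $\binom{X}{\kappa} = \sum_{S \subseteq [j],\,|S| = \kappa}\prod_{i \in S}X_i$ holds (it simply counts the $\kappa$-subsets of $\{i : X_i = 1\}$). Taking expectations and using $\kappa$-wise independence to factor each term yields $E[\binom{X}{\kappa}] = \sum_{|S| = \kappa}\prod_{i\in S}p_i = e_\kappa(p_1,\ldots,p_j)$, the $\kappa$-th elementary symmetric function of the $p_i$'s; expanding $(\sum_i p_i)^\kappa$ and discarding the nonnegative terms that have a repeated index gives $e_\kappa(p_1,\ldots,p_j) \le \mu^\kappa/\kappa!$. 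Next, since $X$ is integer-valued and $m \mapsto \binom{m}{\kappa}$ is nondecreasing on $\mathbb{N}$, the event $X \ge (1+\delta)\mu$ forces $\binom{X}{\kappa} \ge \binom{\lceil(1+\delta)\mu\rceil}{\kappa} \ge \binom{\kappa}{\kappa} = 1 > 0$, using $\lceil(1+\delta)\mu\rceil \ge \lceil\mu\delta\rceil = \kappa$. Applying Markov's inequality to the nonnegative variable $\binom{X}{\kappa}$, substituting the moment bound, and then using $\lceil(1+\delta)\mu\rceil \ge (1+\delta)\mu$ together with $i \le \kappa - 1 < \mu\delta$,
\[
\Pr\big(X \ge (1+\delta)\mu\big)\ \le\ \frac{\mu^\kappa/\kappa!}{\binom{\lceil(1+\delta)\mu\rceil}{\kappa}}\ =\ \prod_{i=0}^{\kappa-1}\frac{\mu}{\lceil(1+\delta)\mu\rceil - i}\ \le\ \prod_{i=0}^{\kappa-1}\frac{1}{\,1+\delta - i/\mu\,}.
\]

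It then remains to show that the right-hand side is at most $e^{-\mu\delta^2/3}$ when $0 < \delta < 1$ and at most $e^{-\mu\delta/3}$ when $\delta \ge 1$. Taking logarithms, this amounts to a lower bound on $\sum_{i=0}^{\kappa-1}\ln(1 + \delta - i/\mu)$; since the summand is decreasing in $i$, it is bounded below by $\int_0^{\kappa}\ln(1+\delta - x/\mu)\,dx$, which after the substitution $u = 1 + \delta - x/\mu$ has a closed form and, because $\kappa$ is within $1$ of $\mu\delta$, recovers (up to lower-order terms) the classical Chernoff exponent $\mu\big((1+\delta)\ln(1+\delta) - \delta\big)$ — i.e.\ the product is essentially the familiar $\big(e^\delta/(1+\delta)^{1+\delta}\big)^\mu$. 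The argument is then completed by the two elementary analytic inequalities $(1+\delta)\ln(1+\delta) - \delta \ge \delta^2/3$ for $0 < \delta \le 1$ and $(1+\delta)\ln(1+\delta) - \delta \ge \delta/3$ for $\delta \ge 1$, each checked by comparing the two sides and their derivatives at the relevant endpoint.

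The first two steps are routine; the main obstacle is the last one. The delicate point is to estimate $\mu^\kappa/\big(\kappa!\,\binom{\lceil(1+\delta)\mu\rceil}{\kappa}\big)$ tightly enough while accounting for the error introduced by rounding both $\mu\delta$ and $(1+\delta)\mu$ up to integers, and to dispose separately of the regime where $\kappa$ (equivalently $\mu$) is so small that the claimed bound is near-trivial (e.g.\ $\kappa = 1$, where Markov applied to $X$ itself already gives $\Pr(X \ge (1+\delta)\mu) \le 1/(1+\delta)$). These losses are exactly what forces the exponent constant $3$ in place of the $2$ appearing in the fully-independent Chernoff bound, and organizing this bookkeeping cleanly — rather than via a crude case split — is the heart of the proof.
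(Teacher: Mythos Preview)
The paper does not prove this theorem: it simply quotes it verbatim as ``Theorem 2 in \cite{Chernoff1995}'' (the Schmidt--Siegel--Srinivasan limited-independence Chernoff bound) and uses it as a black box. So there is no proof in the paper to compare against.

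That said, your outline is the standard argument from that source: bound the $\kappa$-th binomial moment $E\binom{X}{\kappa}$ by $\mu^\kappa/\kappa!$ using $\kappa$-wise independence, apply Markov to $\binom{X}{\kappa}$, and then reduce the resulting product to the classical Chernoff exponent. Your identification of the delicate point --- controlling the losses from the two ceilings $\lceil \mu\delta\rceil$ and $\lceil (1+\delta)\mu\rceil$ and handling the small-$\kappa$ regime --- is exactly right, and is where the constant $3$ (rather than $2$) is paid for. One small caution: the integral-comparison step you sketch, $\sum_{i=0}^{\kappa-1}\ln(1+\delta - i/\mu) \ge \int_0^\kappa \ln(1+\delta - x/\mu)\,dx$, combined with ``$\kappa$ is within $1$ of $\mu\delta$'' will not by itself recover the clean exponent without some care near the upper limit, since the integrand approaches $\ln 1 = 0$ there and the $O(1)$ discrepancy between $\kappa$ and $\mu\delta$ can swallow a constant. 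In the original proof this is handled by bounding the product $\prod_{i=0}^{\kappa-1}\mu/((1+\delta)\mu - i)$ more directly (or by a slightly different grouping of terms) rather than passing through the integral; you may find it cleaner to do the same rather than to try to make the integral comparison quantitative.
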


\subsection{$\ell_0$-sampler} \label{subsec:sampler} Let $0<\delta<1$ be a parameter. Let 
$\SSS=(i_1, \Delta_1), \ldots, (i_p, \Delta_p), \ldots$ 
be a stream of updates of an underlying vector $\textbf{x} \in \mathbb{R}^n$, where $i_j \in [n]$ and $\Delta_j \in \mathbb{R}$.
The $j$-th update $(i_j, \Delta_j)$ updates the $i_j$-th 
coordinate of $\textbf{x}$ by setting $\textbf{x}_{i_j}=\textbf{x}_{i_j}+\Delta_j$. An 
{\it $\ell_0$-sampler} for $\textbf{x} \neq 0$ either fails with probability at most $\delta$, or conditioned on not failing, for any non-zero coordinate $\textbf{x}_j$ of $\textbf{x}$, returns the pair $(j, \textbf{x}_j)$ with probability 
$\frac{1}{||\textbf{x}||_0}$, where $||\textbf{x}||_0$ is the $\ell_0$-norm of $\textbf{x}$, which is the same as the number of non-zero 
coordinates of $\textbf{x}$. For more details, we refer to \cite{l0-sampling}.

Based on the results in~\cite{l0-sampling,pods11}, and as shown in~\cite{Chitnis2016}, we can develop a sketch-based 
$\ell_0$-sampler algorithm for a dynamic graph stream that samples an edge from the stream. More specifically, the following result was shown in~\cite{Chitnis2016}:

\begin{lemma}[Proof of Theorem~2.1 in~\cite{Chitnis2016}]
\label{lem:sampler}
 Let $0 < \delta < 1$ be a parameter. There exists a linear sketch-based $\ell_0$-sampler algorithm that, given a dynamic graph stream, either returns FAIL with probability at most $\delta$, or returns an edge
chosen \emph{u.a.r.}~amongst the edges of the stream that have
been inserted and not deleted. This $\ell_0$-sampler algorithm
can be implemented using $\bigoh(\log^2{n} \cdot \log(\delta^{-1}
))$ bits of space and $\tilde{O}(1)$ update time, where $n$ is the number of vertices of the graph stream.
\end{lemma}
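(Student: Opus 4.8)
The plan is to reduce the problem to a standard turnstile $\ell_0$-sampling task by encoding the dynamic graph stream as a stream of additive updates to a single vector. Fix the bijection mapping an edge $uv$ with $u<v$ to a coordinate index $\iota(uv)\in [\binom{n}{2}]^-$ (computable from $u,v$ with $\bigoh(1)$ arithmetic), and let $\mathbf{x}\in\R^{N}$ with $N=\binom{n}{2}$ be the vector indexed in this way. I would translate each element of $\SSS$ into one update to $\mathbf{x}$: an insertion of $uv$ becomes the update $(\iota(uv),+1)$ and a deletion of $uv$ becomes $(\iota(uv),-1)$. Since the underlying graph is simple, at every instant each coordinate of $\mathbf{x}$ equals $1$ if the corresponding edge is currently present and $0$ otherwise; hence $\|\mathbf{x}\|_0$ equals the number of edges that have been inserted and not deleted, and the support of $\mathbf{x}$ is exactly that edge set. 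Consequently a non-zero $\mathbf{x}$ corresponds to a non-empty current graph, and an $\ell_0$-sampler for $\mathbf{x}$ returns (conditioned on not failing) a uniformly random currently-present edge, which is exactly what the lemma asks for; recovering the edge $uv$ from the returned index is again $\bigoh(1)$ arithmetic.

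Next I would invoke the known linear-sketch $\ell_0$-sampler on $\mathbf{x}$. Concretely, using the construction of~\cite{l0-sampling} (see also~\cite{pods11}) one obtains, for the dimension-$N$ turnstile stream above, a \emph{linear} sketch of $\mathbf{x}$ that (i) uses $\bigoh(\log^2 N\cdot\log(\delta^{-1}))$ bits, (ii) can be updated after each additive update to $\mathbf{x}$, and (iii) at query time either returns FAIL with probability at most $\delta$ or, conditioned on not failing, returns a pair $(j,\mathbf{x}_j)$ with $j$ drawn uniformly from the support of $\mathbf{x}$. Because the sketch is linear, it composes correctly under both $+1$ and $-1$ updates, so edge deletions are handled with no extra machinery. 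Since $N=\binom{n}{2}=\Theta(n^2)$ we have $\log N=\Theta(\log n)$, so the space bound becomes $\bigoh(\log^2 n\cdot\log(\delta^{-1}))$ bits; here we use that $|\SSS|$ is polynomial in $n$, so the $\bigoh(\log n)$-bit counters inside the sketch never overflow. The update-time claim follows by inspecting the data structure: each element of $\SSS$ produces one turnstile update, which propagates to the $\bigoh(\log N)=\bigoh(\log n)$ ``levels'' of the sampler, each a sketch of polylogarithmic size whose maintenance only requires evaluating a constant number of hash functions and updating a polylogarithmic number of cells; all told this is $\mathrm{polylog}(n,\delta^{-1})=\tilde{O}(1)$ per element, plus $\bigoh(1)$ to compute $\iota(uv)$.

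I do not expect a genuine obstacle here—the content is essentially bookkeeping on top of an off-the-shelf sampler—but two points require care. First, several $\ell_0$-sampler constructions return a sample that is only $\epsilon$-close to uniform rather than exactly uniform; to match the statement one either picks a construction that is exactly uniform on the support or folds the $\epsilon$ total-variation slack into the failure probability $\delta$ (adjusting constants accordingly), and I would state explicitly which. Second, one must confirm that the sampler used is indeed a \emph{linear} sketch, since this is precisely what makes the reduction handle edge deletions correctly; this is why I would appeal to the specific construction of~\cite{l0-sampling} (as already done in~\cite{Chitnis2016}) rather than to an arbitrary $\ell_0$-sampler.
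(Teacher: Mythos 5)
Your proposal is correct and matches the standard argument underlying this lemma, which the paper does not prove itself but imports directly from the proof of Theorem~2.1 in~\cite{Chitnis2016} (built on the linear-sketch $\ell_0$-sampler of~\cite{l0-sampling,pods11}): encode the edge set as a $\binom{n}{2}$-dimensional vector, map insertions/deletions to $\pm 1$ turnstile updates, and run the linear $\ell_0$-sampling sketch, giving the stated space and update-time bounds since $\log\binom{n}{2}=\Theta(\log n)$. Your two caveats (exact versus approximate uniformity folded into $\delta$, and linearity of the sketch being what handles deletions) are exactly the right points of care and are consistent with how the cited construction is used.
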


\subsection{Hash Functions}
Let $U$ be a universe of elements that we will refer to as \emph{keys}. We can always identify the elements of $U$ with the numbers $0, \ldots, |U|-1$; therefore, without loss of generality, we will assume henceforth that $U=\{0, 1, \ldots, |U|-1\}$.
For a set $\mathcal{H}$ of hash functions  and a hash function $h \in \mathcal{H}$, we write 
$h \in_{u.a.r.} \mathcal{H}$ to denote that $h$ is chosen \emph{u.a.r.}~from $\mathcal{H}$. Let $S \subseteq U$ and $r$ be a positive integer. A 
hash function $h: U \longrightarrow [r]^-$ is \emph{perfect} w.r.t.~$S$ if it is injective on $S$ (i.e., no two distinct elements $x, y \in S$ cause a collision). 

A set $\mathcal{H}$ of hash functions, each mapping $U$ to $[r]^-$, is 
called \emph{universal} if for each pair of distinct keys $x, y \in U$, the number of hash functions $h\in \mathcal{H}$ for which $h(x)=h(y)$ is at most $|\mathcal{H}|/r$, or equivalently: 
$$ \Pr_{h \in_{u.a.r.} \mathcal{H}}[h(x)=h(y)] \le \frac{1}{r}. $$
Let $p \geq |U|$ be a prime number.  A universal set of hash functions $\mathcal{H}$ from $U$ to $[r]^-$ can be constructed as follows 
(see chapter 11 in \cite{Cormen}):

$$\mathcal{H}=\{h_{a,b,r} \mid 1\le a \le p-1, 0\le b \le p-1\},$$
where $h_{a,b,r}$ is defined as $h_{a,b,r}(x)=((ax+b) \mod p) \mod r$.

\begin{theorem} [Theorem 11.9 in \cite{Cormen}] \label{lemma-hash}
Let $U$ be a universe and $\mathcal{H}$ be a universal set of hash functions, each mapping $U$ to
$[r^2]^-$. For any set $S$ of $r$ elements in $U$ and any hash function $h \in_{u.a.r.} \mathcal{H}$, the probability that $h$ is perfect w.r.t.~$S$ is larger than $1/2$.
\end{theorem}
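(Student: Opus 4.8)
The plan is to bound the probability that $h$ fails to be perfect on $S$ by a first‑moment argument over colliding pairs. First I would, for every unordered pair $\{x,y\}$ of distinct elements of $S$, introduce an indicator random variable $X_{\{x,y\}}$ (over the random choice $h \in_{u.a.r.} \mathcal{H}$) that equals $1$ exactly when $h(x)=h(y)$. Since $\mathcal{H}$ is universal and its functions map into $[r^2]^-$, the defining property of universality gives $\Pr[X_{\{x,y\}}=1] = \Pr_{h}[h(x)=h(y)] \le 1/r^2$. Now set $X = \sum_{\{x,y\} \subseteq S,\ x \neq y} X_{\{x,y\}}$, the number of colliding pairs under $h$; observe that $h$ is perfect with respect to $S$ if and only if $X = 0$.

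Next I would apply linearity of expectation together with the fact that $|S|=r$, so there are exactly $\binom{r}{2}$ pairs to sum over:
\[
E[X] \;=\; \sum_{\{x,y\}} \Pr[X_{\{x,y\}}=1] \;\le\; \binom{r}{2}\cdot\frac{1}{r^2} \;=\; \frac{r(r-1)}{2r^2} \;=\; \frac{1}{2}\cdot\frac{r-1}{r} \;<\; \frac{1}{2}.
\]
Finally, since $X$ is a non-negative integer-valued random variable, Markov's inequality (equivalently, a union bound over the pairs) yields $\Pr[h \text{ is not perfect w.r.t.\ } S] = \Pr[X \ge 1] \le E[X] < \frac12$, hence $\Pr[h \text{ is perfect w.r.t.\ } S] > \frac12$, which is the claim.

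There is no serious obstacle here; the one point that must be handled correctly is invoking universality against the range size $r^2$ rather than $r$ — this is precisely why the statement insists the functions map into $[r^2]^-$, and it is exactly what makes $\binom{r}{2}/r^2$ drop below $1/2$. The strict inequality is immediate from $r-1 < r$ (i.e.\ $r \ge 1$), and one should just note that $S$ having exactly $r$ elements makes $\binom{r}{2}$ the right count of pairs; if $|S| < r$ the bound only improves.
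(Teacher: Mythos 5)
Your proof is correct and is exactly the standard first-moment/Markov argument for this fact (expected number of colliding pairs in $S$ is at most $\binom{r}{2}/r^2 < 1/2$); the paper does not reprove this statement but cites it from the textbook, whose proof is the same argument you give.
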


A set $\mathcal{H}$ of hash functions, each mapping $U$ to $[r]^-$, is called $\kappa$-wise independent if for any $\kappa$ distinct keys $x_1, x_2, ..., x_{\kappa} \in U$, 
and any $\kappa$  (not necessarily 
distinct) values $a_1, a_2, ..., a_{\kappa} \in [r]^-$, we have 
$$ \Pr_{h \in_{u.a.r.} \mathcal{H}} [ h(x_1)=a_1 \wedge h(x_2)=a_2 \wedge \cdots \wedge h(x_{\kappa})= a_{\kappa} ] = \frac{1}{r^{\kappa}}.$$
 
 Let $\mathbb{F}$ be a finite field. A $\kappa$-wise independent set $\mathcal{H}$ of hash functions can be constructed as follows (See Construction 3.32 in \cite{Salil}):
 $$ \mathcal{H} = \{ h_{a_0, a_1, \ldots, a_{\kappa-1}}: \mathbb{F} \rightarrow \mathbb{F}\},$$
 where $h_{a_0, a_1, \ldots, a_{\kappa-1}}(x) = a_0+ a_1x + \cdots + a_{\kappa-1}x^{\kappa-1}$ for 
 $a_0, \ldots, a_{\kappa-1} \in \mathbb{F}$.

\begin{theorem}[Corollary 3.34 in \cite{Salil}]  \label{evaluate-time}
For every $u, d, \kappa \in \mathbb{N}$, there is a family of $\kappa$-wise independent functions 
$\mathcal{H}= \{ h: \{0,1\}^ u \rightarrow \{0, 1\}^{d}\}$ such that choosing a random 
function from $\mathcal{H}$ takes space $\bigoh(\kappa \cdot (u+d))$. 
Moreover, evaluating a function from $\mathcal{H}$ takes time polynomial in $u, d, \kappa$. 
\end{theorem}
\fi
 
\section{The Toolkit}
\label{sec:structural}
In this section, we prove a theorem that can be useful in its own right for subset problems, that is, problems in which the goal is to compute a $k$-subset $S$ ($k \in \nat$) of some universe $U$ such that $S$ satisfies certain prescribed properties.  Intuitively, the theorem states that, for any $k$-subset $S \subseteq U$, \emph{w.h.p.}~we can compute $k$ subsets $T_1, \ldots, T_k$ of $U$ that interact ``nicely'' with $S$. More specifically, (1) the sets $T_i$, for $i \in [k]$, are pairwise disjoint,  (2) $S$ is contained in their union $\bigcup_{i\in [d]} T_i$, and  (3) each $T_i$ contains exactly one element of $S$.   

The above theorem will be used in Section~\ref{sec:dynamic} to design algorithms for p-{\sc Matching} and p-{\sc WT-Matching} in the dynamic streaming model. Intuitively speaking, the theorem will be invoked to obtain the sets $T_i$ of vertices that \emph{w.h.p.}~induce the edges of the desired $k$-matching; however, these sets may not necessarily constitute the desired subgraph as they may not have ``small'' cardinalities. Sampling techniques will be used to select a smaller set of edges induced by the vertices of the $T_i$'s that \emph{w.h.p.}~contains the edges of the $k$-matching.

To prove this theorem, we proceed in two phases. We give an intuitive description of these two phases next. \iflong We refer to Figure~\ref{fig:process} for illustration. \fi \ifshort (We refer to Figure~\ref{fig:process} in Section~\ref{sec:figures} in the appendix for illustration.)\fi 

In the first phase, we choose a hashing function $f$ u.a.r.~from an $\bigoh(\ln k)$-wise independent set of hash functions, which hashes $U$ to a set of $d_1 =\bigoh(k/\ln k)$ integers. We use $f$ to partition the universe $U$ into $d_1$-many subsets $U_i$, each consisting of all elements of $U$ that hash to the same value under $f$. Afterwards, we choose $d_1$ families $F_0, \ldots, F_{d_1-1}$ of hash functions, each containing $d_2= \bigoh(\ln k)$ functions, chosen independently and u.a.r.~from a universal set of hash functions. The family $F_i$, $i \in [d_1]^-$, will be used restrictively to map the elements of $U_i$.
 Since each family $F_i$ is chosen from a universal set of hash function, for the subset $S_i = S \cap U_i$, \emph{w.h.p.}~$F_i$ contains a hash function $f_i$ that is perfect w.r.t.~$S_i$; that is, under the function $f_i$ the elements of $S_i$ are distinguished. \iflong This concludes the first phase of the process, which is described in \textbf{Algorithm~\ref{good-hash}}.\fi 
 \ifshort This concludes the first phase of the process, which is described in \textbf{Algorithm~\ref{good-hash}}. \fi
 
\iflong
\begin{figure*}[ht]
\begin{center}
\scalebox{0.9}{
\begin{tikzpicture}
\draw[shift={(0,3)}]   node[above]{$U$};
\draw (0,0) ellipse (1cm and 3cm);
\draw[shift={(4,3)}]   node[above]{$U$};
\draw (4,0) ellipse (1cm and 3cm);

     \draw[shift={(2,0)}]   node[above]{$f \in_{u.a.r.} {\mathcal H}$};
     \draw[->](1.1,0) -- (2.9,0);

\draw[shift={(4,2.5)}]   node[]{$U_0$};
\draw[shift={(4,1.6)}]   node[rotate=90]{. . .};
\draw[shift={(4,0.55)}]   node[]{$U_{j-1}$};
\draw[shift={(4,-0.15)}]   node[]{$U_{j}$};
\draw[shift={(4,-0.9)}]   node[]{$U_{j+1}$};
\draw[shift={(4,-1.85)}]   node[rotate=90]{. . .};
\draw[shift={(4,-2.6)}]   node[]{$U_{d_1-1}$};

\tkzDefPoint(3.35, 2.3){A}\tkzDefPoint(4, 2.15){B}\tkzDefPoint(4.65, 2.3){C}
\tkzCircumCenter(A,B,C)\tkzGetPoint{O}
\tkzDrawArc(O,A)(C)

\tkzDefPoint(3.08, 1.2){A}\tkzDefPoint(4, 1.05){B}\tkzDefPoint(4.92, 1.2){C}
\tkzCircumCenter(A,B,C)\tkzGetPoint{O}
\tkzDrawArc(O,A)(C)   

\tkzDefPoint(3, 0.4){A}\tkzDefPoint(4, 0.25){B}\tkzDefPoint(5, 0.4){C}
\tkzCircumCenter(A,B,C)\tkzGetPoint{O}
\tkzDrawArc(O,A)(C)  

\tkzDefPoint(3, -0.4){A}\tkzDefPoint(4, -0.55){B}\tkzDefPoint(5, -0.4){C}
\tkzCircumCenter(A,B,C)\tkzGetPoint{O}
\tkzDrawArc(O,A)(C) 

\tkzDefPoint(3.08, -1.2){A}\tkzDefPoint(4, -1.35){B}\tkzDefPoint(4.92, -1.2){C}
\tkzCircumCenter(A,B,C)\tkzGetPoint{O}
\tkzDrawArc(O,A)(C)  

\tkzDefPoint(3.35, -2.25){A}\tkzDefPoint(4, -2.4){B}\tkzDefPoint(4.65, -2.25){C}
\tkzCircumCenter(A,B,C)\tkzGetPoint{O}
\tkzDrawArc(O,A)(C)

    \draw[dotted, color=black, line width=1] (10,4.4) -- (10,-4.4);

    \draw[line width=1, color=green] (10,4) -- (10,2);
    \draw[line width=1, color=red] (10,2) -- (10,-2);
    \draw[line width=1, color=blue] (10,-2) -- (10,-4);
    
\draw[-](9.8,4) -- (10.2,4);
\draw[-](9.8,2) -- (10.2,2);
    \draw[-](9.9,1.2) -- (10.1,1.2);
    \draw[-](9.9,0.4) -- (10.1,0.4);
    \draw[-](9.9,-1.2) -- (10.1,-1.2);
\draw[-](9.8,-2) -- (10.2,-2);
\draw[-](9.8,-4) -- (10.2,-4);

\tkzDefPoint(4.8,0){x}
\tkzLabelPoint[below](x){$x$}
\foreach \n in {x}
  \node at (\n)[circle,fill,inner sep=1.5pt]{};

    \draw[shift={(8.2, 1)}]   node[above, rotate=18]{using $h_1 \in F_j$};
    \draw[->](4.8,0) -- (9.8,1.6);
    \draw[shift={(10,1.6)}]   node[right]{$I_{j\cdot d_2}$};
    
    \draw[->](4.8,0) -- (9.8,0.75);
    \draw[shift={(10,0.75)}]   node[right]{$I_{j\cdot d_2+1}$};
    
    \draw[shift={(8.2, -1.18)}]   node[above, rotate=-18]{using $h_{d_2} \in F_j$};
    \draw[->](4.8,0) -- (9.8,-1.6);
    \draw[shift={(10,-1.6)}]   node[right]{$I_{(j+1)\cdot d_2-1}$};

\draw [decorate,decoration={brace,amplitude=10pt,mirror,raise=4pt},yshift=0pt]
(11.05,2) -- (11.05,4) node [black,midway,xshift=1cm] {\footnotesize
$I'_{j-1}$};

\draw [decorate,decoration={brace,amplitude=10pt,mirror,raise=4pt},yshift=0pt]
(11.05,-2) -- (11.05,2) node [black,midway,xshift=0.8cm] {\footnotesize
$I'_{j}$};

\draw [decorate,decoration={brace,amplitude=10pt,mirror,raise=4pt},yshift=0pt]
(11.05,-4) -- (11.05,-2) node [black,midway,xshift=1cm] {\footnotesize
$I'_{j+1}$};

\draw [decorate,decoration={brace,amplitude=8pt,mirror,raise=3pt},yshift=0pt]
(6,-0.5) -- (6,0.5) node [black,midway,xshift=1cm] {\footnotesize
${\mathcal G}(x)$};

\end{tikzpicture}
}
\end{center}
\caption{Illustration for Algorithms~\ref{good-hash} and ~\ref{alg-mapping}.}
\label{fig:process}

\end{figure*}
\fi

\begin{algorithm}[htbp]
     \textbf{Input:} $|U|, k \in \nat$ where $|U| > 1$  \\
     \textbf{Output:} A family of sets of hash functions
      \begin{algorithmic}[1]
           \State let $u$ and $d$ be the unique positive integers satisfying $2^{u-1}<|U| \le 2^u$ and 
           $2^{d-1}< \frac{k}{\ln k} \le 2^{d}$\label{good-hash-1}
           \State choose $f \in_{u.a.r.} \mathcal{H}$, where $\mathcal{H} = \{ h: \{0,1\}^u \rightarrow \{0,1 \}^{d}  \}$ is a  $\lceil 12\ln k \rceil$-wise
           independent set of hash functions \label{good-hash-2}
           \State let
            $\mathcal{H'}$ be a set of universal hash functions from $U$ to $[\lceil 13\ln k \rceil^2]^-$ 
           \State let $F_i$, for $i \in [2^d]^-$, be a set of $\lceil 8\ln k \rceil$ hash functions chosen independently and u.a.r.~from $\mathcal{H'}$  \label{good-hash-6}
           \State return $\{f, F_0, \ldots, F_{2^d-1}\}$
       \end{algorithmic}
      \caption{: An algorithm for partitioning $U$ and constructing families of hash functions} \label{good-hash}
\end{algorithm}

 In the second phase, we define a relation ${\cal G}$ (from $U$) that, for each $x \in U$, associates a set ${\cal G}(x)$ of integers. This relation extends the hash functions in the $F_j$'s above by (1) ensuring that elements in different parts of $U$ (w.r.t.~the partitioning) are distinguished, in the sense that they are associated with subsets of integers that are contained in disjoint intervals of integers; and (2) maintaining the property that elements of the same part $U_j$ that are distinguished under some function in $F_j$ remain so under the extended relation. To do so, for each part $U_j$, we associate an ``offset'' and create a large gap between any two (consecutive) offsets; we will ensure that all the elements in the same $U_j$ fall within the same interval determined by two consecutive offsets. To compute the set ${\cal G}(x)$, for an element $x \in U_j$, we start with an offset $o_j$ that depends solely on $U_j$ ($o_j=j\cdot d_2\cdot d_3$ in \textbf{Algorithm \ref{alg-mapping}}),  and consider every function in the family $F_j$ corresponding to $U_j$. For each such function $h_i$, we associate an offset $o'_i$ ($o'_i=(i-1)\cdot d_3$ in \textbf{Algorithm \ref{alg-mapping}}), and for $x$ and that particular function $h_i$, we add to ${\cal G}(x)$ the value $g(j, i, x) = o_j + o'_i + h_i(x)$. The above phase is described in \textbf{Algorithm \ref{alg-mapping}}.
 
\begin{algorithm}[htbp]
     \textbf{Input:} $x \in U$, $k \in \nat$, $\{f, F_0, \ldots, F_{d_1-1}\}$  is computed by Algorithm~\ref{good-hash}, where $|F_0|=\cdots=|F_{d_1-1}|$   \\
     \textbf{Output:} a set ${\cal G}(x)$  
      \begin{algorithmic}[1]
          \State let $d_2 = |F_0|=\cdots=|F_{d_1-1}|$ and $d_3=\lceil 13\ln k \rceil^2$
           \State ${\cal G}(x)=\emptyset$
           \State compute $f(\llcorner x \lrcorner)$ and let $j$ be the integer such that 
                      $\llcorner j \lrcorner = f(\llcorner x \lrcorner)$ \label{alg-mapping-2}
           \For {$i=1$ to $d_2$}
                \State let $h_i$ be
           the $i$-th function in $F_j$ (assuming an arbitrary ordering on $F_j$) \label{alg-mapping-33}
           \State let $g(j, i, x)=j\cdot d_2 \cdot d_3+(i-1)\cdot d_3+h_i(x)$ and 
                     let ${\cal G}(x)={\cal G}(x) \cup  \{g(j, i, x)\}$
                      \label{alg-mapping-3}
        \EndFor
           \State return ${\cal G}(x)$ 
       \end{algorithmic}
      \caption{: An algorithm that defines the relation ${\cal G}$ from $U$ to $[d_1 \cdot d_2 \cdot d_3]^-$} 
      \label{alg-mapping}
\end{algorithm}

Now that the relations ${\cal G}(x)$, for $x \in U$, have been defined, we will show in the following theorem that, for any $k$-subset $S$ of $U$, \emph{w.h.p.}~there exist $k$ distinct elements $i_0, \ldots, i_{k-1}$, such that their pre-images ${\cal G}^{-1}(i_0), \ldots, {\cal G}^{-1}(i_{k-1})$ are pairwise disjoint, contain all elements of $S$, and each pre-image contains exactly one element of $S$; those pre-images serve as the desired sets $T_i$, for $i \in [k]$.

Consider \textbf{Algorithm 1} and \textbf{Algorithm 2}, and refer to them for the terminologies used in the subsequent discussions. 
Let $d_1=2^d$, $d_2=\lceil 8\ln k\rceil$ and 
$d_3=\lceil 13\ln k \rceil^2$ as defined in \textbf{Algorithms~\ref{good-hash}} and~\textbf{\ref{alg-mapping}}. For $i\in [d_1\cdot d_2 \cdot d_3]^-$, define $T_i=\{ x \in U \mid i \in {\cal G}(x)\}$.
We define next two sequences of intervals, and prove certain properties about them, that will be used in the proof of Theorem~\ref{alg-hash-THM4}.
For $q\in [d_1\cdot d_2]^-$, let $I_q = \{ r \mid q\cdot d_3 \le r < (q+1)\cdot d_3\}$. For $t \in [d_1]^-$, let $I'_t=\{ r \mid t\cdot d_2 \cdot d_3 \le r < t\cdot d_2 \cdot d_3+ d_2\cdot d_3 \}$. 
Note that each interval $I'_t$ is partitioned into the $d_2$-many intervals $I_q$, for $q =t\cdot d_2, \ldots, t\cdot d_2 + d_2-1$.

\begin{lemma}\ifshort {\rm ($\spadesuit$)} \fi  
\label{lem:claims}
The following statements hold:

\begin{itemize}
\item[(A)] For any two distinct integers $a, b \in I_q$, where $q \in [d_1\cdot d_2]^-$, we have $T_a \cap T_b= \emptyset$.

\item[(B)] For $t\in [d_1]^-$, we have ${\cal G}(U_t) \subseteq I'_t$. Moreover, for any  $a\in I'_t, b\in I'_s$, where $s\neq t$, we have $T_a \cap T_b = \emptyset$. 
\end{itemize}
\end{lemma}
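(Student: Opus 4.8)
The plan is to note that Lemma~\ref{lem:claims} is purely deterministic: it holds for \emph{every} choice of $f$ and of the families $F_0,\dots,F_{d_1-1}$ produced by Algorithm~\ref{good-hash}, so no probabilistic reasoning is needed here (the random part is postponed to Theorem~\ref{alg-hash-THM4}). I would first isolate two elementary bookkeeping facts about the intervals and then read off (A) and (B) from them.

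The facts are: \emph{(i)} for $x\in U$, if $j$ denotes the integer with $\llcorner j\lrcorner=f(\llcorner x\lrcorner)$ (equivalently $x\in U_j$), then $\mathcal{G}(x)=\{g(j,1,x),\dots,g(j,d_2,x)\}$, and since $g(j,i,x)=j\cdot d_2\cdot d_3+(i-1)\cdot d_3+h_i(x)$ with $h_i(x)\in[d_3]^-$, each $g(j,i,x)$ lies in the interval $I_{j\cdot d_2+(i-1)}$; and \emph{(ii)} the map $q\mapsto(t,i)$ defined by $q=t\cdot d_2+(i-1)$ is a bijection from $[d_1\cdot d_2]^-$ onto $[d_1]^-\times[d_2]$, the intervals $I_0,\dots,I_{d_1 d_2-1}$ are pairwise disjoint and partition $[d_1\cdot d_2\cdot d_3]^-$, and, as already observed just above the lemma, each $I'_t$ equals $I_{t\cdot d_2}\cup\dots\cup I_{t\cdot d_2+d_2-1}$, so the $I'_t$ are pairwise disjoint as well.

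Given (i) and (ii), part (A) is immediate: fix $q=t\cdot d_2+(i-1)$ and distinct $a,b\in I_q$, and suppose some $x$ lies in $T_a\cap T_b$. Then $a\in\mathcal{G}(x)$, so $a=g(j,i',x)$ for $j$ as in (i) and some $i'\in[d_2]$; by (i), $a\in I_{j\cdot d_2+(i'-1)}$, and since $a\in I_q$ while distinct $I_r$'s are disjoint, we get $j\cdot d_2+(i'-1)=q$, hence $j=t$ and $i'=i$ by the uniqueness in (ii), so $a=g(t,i,x)$. The same reasoning applied to $b$ (with the same $x$, hence the same $j$) gives $b=g(t,i,x)$, contradicting $a\neq b$; thus $T_a\cap T_b=\emptyset$. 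For part (B): if $x\in U_t$ then $j=t$ in (i), so $\mathcal{G}(x)=\{g(t,1,x),\dots,g(t,d_2,x)\}\subseteq I_{t\cdot d_2}\cup\dots\cup I_{t\cdot d_2+d_2-1}=I'_t$, which gives $\mathcal{G}(U_t)\subseteq I'_t$; and if $a\in I'_t$, $b\in I'_s$ with $s\neq t$ and $x\in T_a\cap T_b$, then $a\in\mathcal{G}(x)\subseteq I'_j$ forces $j=t$ (distinct $I'$'s being disjoint and both containing $a$), while symmetrically $j=s$, contradicting $s\neq t$, so $T_a\cap T_b=\emptyset$.

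The step I expect to need the most care is fact (ii): checking that $q\leftrightarrow(t,i)$ really is a bijection, that the $I_r$ tile $[d_1\cdot d_2\cdot d_3]^-$, and that the boundary estimate $g(j,i,x)<(j+1)\cdot d_2\cdot d_3$ --- needed to keep $g(j,i,x)$ inside $I'_j$ --- survives the extreme case $i=d_2$ and $h_i(x)=d_3-1$. There is no real obstacle: the lemma just records that $\mathcal{G}$ deposits its $d_2$ output values into $d_2$ consecutive length-$d_3$ blocks which together form the length-$(d_2 d_3)$ super-block indexed by $f(x)$, so two indices in the same block $I_q$ must both equal the single value $g(t,i,x)$, and two indices in different super-blocks $I'_t,I'_s$ must originate from elements with different $f$-images.
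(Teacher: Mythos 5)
Your proposal is correct and follows essentially the same route as the paper: both arguments rest on the observation that $g(j,i,x)=j\cdot d_2\cdot d_3+(i-1)\cdot d_3+h_i(x)$ with $h_i(x)\in[d_3]^-$ places each value of ${\cal G}(x)$ in the block $I_{j\cdot d_2+(i-1)}$, that these blocks are pairwise disjoint and tile the super-blocks $I'_t$, and then derive (A) and (B) by contradiction exactly as the paper does (your phrasing "two indices in the same block must both equal $g(t,i,x)$" is just the contrapositive of the paper's "values from different iterations land in different blocks"). The bookkeeping facts you isolate up front are the same boundary computations the paper carries out inline, so no substantive difference.
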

\iflong
\begin{proof}

To prove (A), we
proceed by contradiction. Assume that there exists $x \in U$ such that $x \in T_a \cap T_b$. This implies that both $a$ and $b$ are in ${\cal G}(x)$. Without loss of generality, assume that \textbf{Algorithm \ref{alg-mapping}} adds $a$ to ${\cal G}(x)$ in iteration $i_a$ of Steps \ref{alg-mapping-33}--\ref{alg-mapping-3} and adds $b$ in iteration $i_b$, where $i_a < i_b$. This implies that $a = j\cdot d_2 \cdot d_3+(i_a-1)\cdot d_3+h_{i_a}(x)$ and  $b = j\cdot d_2 \cdot d_3+(i_b-1)\cdot d_3+h_{i_b}(x)$, where $j$ is the integer such that $\llcorner j \lrcorner = f(\llcorner x \lrcorner)$. Since $h_{i_a}(x)<d_3$ and 
$h_{i_b}(x) < d_3$,
it follows that $a$ belongs to the interval $I_{j\cdot d_2+i_a-1}$ and $b$ belongs to the interval $I_{j\cdot d_2+i_b-1}$, which are two distinct intervals (since  $i_a < i_b$), contradicting the assumption that both $a, b \in I_q$.

To prove (B), 
let $t\in [d_1]^-$, and let $x \in U_t$. The set ${\cal G}(x)$ consists of the values $g(t, i, x)= t\cdot d_2 \cdot d_3+(i-1)\cdot d_3+h_i(x)$, for $i=1, \ldots, d_2$. 
Since $0\le h_i(x)< d_3$ for any $i \in [d_2]$ and any $x \in U$,
it follows that $t\cdot d_2 \cdot d_3 \le g(t, i, x) \le t \cdot d_2 \cdot d_3+(d_2-1)\cdot d_3+d_3-1< t\cdot d_2 \cdot d_3+d_2 \cdot d_3$, and hence, $g(t, i, x) \in I'_t$. This proves that ${\cal G}(U_t) \subseteq I'_t$.

To prove that, for any  $a\in I'_t, b\in I'_s$, where $s\neq t$, we have $T_a \cap T_b = \emptyset$, suppose not and let $x \in T_a \cap T_b$. By the first part of the claim, we have $a \in {\cal G}(x) \subseteq I'_t$ and $b \in {\cal G}(x) \subseteq I'_s$, which is a contradiction since $I'_t \cap I'_s=\emptyset$.
\end{proof}
 \fi

\begin{theorem}  \label{alg-hash-THM4}
 For any subset $S \subseteq U$ of cardinality $k \ge 2$, with probability at least $1-\frac{4}{k^3\ln k}$, there exist $k$ sets $T_{i_0}, \ldots, T_{i_{k-1}}$ such that: (1) $|T_{i_j}\cap S|=1$ for $j\in [k]^-$, (2) $S\subseteq \cup_{j\in [k]^-} T_{i_j}$, and 
(3) $T_{i_j} \cap T_{i_l} = \emptyset$ for $j \neq l \in [k]^-$. 
\end{theorem}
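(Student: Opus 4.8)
The plan is to follow the two phases of the construction. Fix $S \subseteq U$ with $|S| = k \ge 2$; for $t \in [d_1]^-$ let $S_t = S \cap U_t$, that is, the set of $x \in S$ with $f(\llcorner x \lrcorner) = \llcorner t \lrcorner$. These sets always partition $S$. I will first isolate a ``good event'' $\mathcal{E}$ holding with probability at least $1 - \frac{4}{k^3 \ln k}$, namely: (i) the outer hash $f$ spreads $S$ evenly, $|S_t| \le \lceil 13 \ln k \rceil$ for every $t$; and (ii) for every non-empty $S_t$, the family $F_t$ contains a function $h^\ast_t$ that is perfect with respect to $S_t$. I will then show that, assuming $\mathcal{E}$, one can write down the $k$ sets $T_{i_0}, \ldots, T_{i_{k-1}}$ explicitly and verify (1)--(3) using Lemma~\ref{lem:claims}.

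For (i), fix $t$ and set $X = |S_t| = \sum_{x \in S} Y_x$, where $Y_x = 1$ iff $f(\llcorner x \lrcorner) = \llcorner t \lrcorner$; these are $0$-$1$ variables, and since $f$ is drawn from a $\lceil 12 \ln k \rceil$-wise independent family, they are $\lceil 12 \ln k \rceil$-wise independent. Because $2^{d-1} < k/\ln k \le 2^d = d_1$, the mean $\mu = E[X] = k/d_1$ satisfies $\tfrac12 \ln k < \mu \le \ln k$. Applying Theorem~\ref{chernoff-bound} with $\delta = 12 \ln k / \mu$ --- for which $\delta \ge 12 > 1$ and $\lceil \mu \delta \rceil = \lceil 12 \ln k \rceil$, exactly the independence available --- gives $\Pr[X \ge \mu(1+\delta)] \le e^{-\mu\delta/3} = k^{-4}$, and since $\mu(1+\delta) = \mu + 12\ln k \le 13 \ln k \le \lceil 13 \ln k \rceil$ this also bounds $\Pr[|S_t| > \lceil 13\ln k\rceil]$; a union bound over the $d_1 < 2k/\ln k$ parts bounds the failure probability of (i) by $\frac{2}{k^3 \ln k}$. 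For (ii), condition on (i). For each non-empty $S_t$ we have $|S_t| \le \lceil 13\ln k \rceil =: r$, and $\mathcal{H}'$ maps into $[r^2]^-$, so by Theorem~\ref{lemma-hash} (whose argument bounds the expected number of colliding pairs by $\binom{|S_t|}{2}/r^2 < 1/2$ and hence applies to any set of at most $r$ elements) a function chosen u.a.r.~from $\mathcal{H}'$ is perfect with respect to $S_t$ with probability greater than $1/2$. The $\lceil 8\ln k \rceil$ functions of $F_t$ are chosen independently, and independently of $f$, so the probability that none of them is perfect with respect to $S_t$ is less than $2^{-\lceil 8\ln k\rceil} \le k^{-8\ln 2}$; a union bound over the non-empty parts bounds the failure probability of (ii) by $\frac{2}{k^3\ln k}$. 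Adding the two, $\Pr[\mathcal{E}] \ge 1 - \frac{4}{k^3\ln k}$.

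Now assume $\mathcal{E}$, and for each non-empty $S_t$ fix a perfect $h^\ast_t \in F_t$, say the $i^\ast_t$-th function of $F_t$. For $x \in S$ let $t(x)$ be its part index and define $\iota(x) = g(t(x), i^\ast_{t(x)}, x) = t(x)\,d_2 d_3 + (i^\ast_{t(x)} - 1)\,d_3 + h^\ast_{t(x)}(x)$; by the definition of ${\cal G}$ in Algorithm~\ref{alg-mapping}, $\iota(x) \in {\cal G}(x)$, so $x \in T_{\iota(x)}$. The crucial point is that $\iota$ is injective on $S$: if $\iota(x) = \iota(y)$ then, since ${\cal G}(z) \subseteq I'_{t(z)}$ by Lemma~\ref{lem:claims}(B) and the intervals $I'$ are pairwise disjoint, $t(x) = t(y) =: t$, hence $i^\ast_{t(x)} = i^\ast_{t(y)}$, and comparing the two expressions forces $h^\ast_t(x) = h^\ast_t(y)$, so $x = y$ as $h^\ast_t$ is perfect on $S_t$. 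Let $\{i_0, \ldots, i_{k-1}\} = \iota(S)$, a set of exactly $k$ indices. Property (2) holds since $x \in T_{\iota(x)}$ for every $x \in S$. For (1), if $y \in T_{i_j} \cap S$ with $i_j = \iota(x_j)$, then $i_j \in {\cal G}(y) \subseteq I'_{t(y)}$ and $i_j \in I'_{t(x_j)}$ give $t(y) = t(x_j) =: t$; the value $i_j$ determines the length-$d_3$ sub-interval $I_q$ containing it, hence the $F_t$-function used, so $i_j = g(t, i^\ast_t, y)$, whence $h^\ast_t(y) = h^\ast_t(x_j)$ and $y = x_j$; thus $T_{i_j} \cap S = \{x_j\}$. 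For (3), take $j \ne l$, so $x_j \ne x_l$: if $t(x_j) \ne t(x_l)$ then $i_j$ and $i_l$ lie in distinct $I'$-intervals and Lemma~\ref{lem:claims}(B) gives $T_{i_j} \cap T_{i_l} = \emptyset$; if $t(x_j) = t(x_l) =: t$, then $i^\ast$ agrees and $i_j = g(t, i^\ast_t, x_j)$, $i_l = g(t, i^\ast_t, x_l)$ are two distinct values (as $h^\ast_t$ is injective on $S_t$) lying in the same $I_q$, so Lemma~\ref{lem:claims}(A) gives $T_{i_j} \cap T_{i_l} = \emptyset$.

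The step I expect to demand the most care is the first phase, specifically the matching of parameters: Theorem~\ref{chernoff-bound} requires $\lceil \mu\delta \rceil$-wise independence, which here must coincide with the $\lceil 12\ln k\rceil$-wise independence of $f$, while the deviation threshold $\mu(1+\delta)$ must fall at or below $\lceil 13\ln k\rceil$ and still yield a per-part tail bound of $k^{-4}$, and each of the two failure modes must contribute at most $\frac{2}{k^3\ln k}$; this is precisely what forces the constants $12$, $8$, $13$ appearing in Algorithm~\ref{good-hash}. The remainder is bookkeeping with the interval families $I_q$ and $I'_t$ already established in Lemma~\ref{lem:claims}.
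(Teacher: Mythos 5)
Your proposal is correct and follows essentially the same route as the paper's proof: the same Chernoff bound with $\delta = 12\ln k/\mu$ and $\lceil 12\ln k\rceil$-wise independence to get each $|S_t|\le 13\ln k$ with failure $1/k^4$, the same use of Theorem~\ref{lemma-hash} and the $\lceil 8\ln k\rceil$ independent functions per part, the same union bounds giving $1-\frac{4}{k^3\ln k}$, and the same final bookkeeping via Lemma~\ref{lem:claims} to show the chosen indices are distinct with pairwise disjoint $T$-sets. Your verification of property (1) is a bit more explicit than the paper's (which gets it implicitly from $|B|=k$, disjointness, and coverage), but this is only a presentational difference.
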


\begin{proof}
For $j\in [d_1]^-$, let $U_j$ be the set of elements in $U$ whose image is $\llcorner j \lrcorner$ under $f$ (defined in Step~\ref{good-hash-2} of \textbf{Algorithm~\ref{good-hash}}), that is  $U_j= \{y\in U \mid f(\llcorner y \lrcorner)=\llcorner j \lrcorner\}$. Clearly, the sets $U_j$, for $j\in [d_1]^-$, partition the universe $U$. We will show that, with probability at least $1-\frac{4}{k^3\ln k}$, there exist $k$ sets $T_{i_0}, \ldots, T_{i_{k-1}}$ 
that satisfy conditions (1)--(3) in the statement of the theorem.

Let $S \subseteq U$ be any subset such that $|S|=k$. For $ j \in [d_1]^-$ and $y \in S$, let $X_{y, j}$ be the random variable defined as $X_{y, j}= 1$ if $f(\llcorner y \lrcorner)=\llcorner j \lrcorner$ and 0 otherwise. Let
$X_j = \sum_{y \in S} X_{y, j}$, and $S_j=\{y\in S \mid f(\llcorner y \lrcorner)=\llcorner j \lrcorner\}$. Thus, $|S_j|=X_j$. 
Since $f$ is $\lceil 12 \ln k \rceil$-wise independent, 
the random variables $X_{y, j}$, for $y\in S$, are $\lceil 12 \ln k \rceil$-wise independent and 
$\Pr(X_{y, j}=1)=\frac{1}{d_1}$. Thus, $E[X_j]=|S| \cdot \frac{1}{d_1}$. 
Since $d_1=2^d$ and $2^{d-1}<\frac{k}{\ln k} \le 2^d$ by definition,
 we have $\frac{k}{\ln k} \le d_1 < \frac{2k}{\ln k}$ and 
$ \frac{\ln k}{2} < E[X_j] \le \ln k$. Applying Theorem \ref{chernoff-bound} with 
$\mu = E[X_j]$ and $\delta=\frac{12\ln k}{E[X_j]}>1$, we get 
$\Pr(X_j \ge (1+\delta) E[X_j])  \le e^{-E[X_j]\delta/3}=\frac{1}{k^4}$. 
Since $E[X_j]\le \ln k$ and $\delta=\frac{12\ln k}{E[X_j]}$, we have
$(1+\delta)E[X_j]\le 13 \ln k$. Hence, $\Pr(X_j \ge 13\ln k) \le \Pr(X_j \ge (1+\delta)E[X_j]) \le \frac{1}{k^4}$. 
Let $\mathcal{E}$ denote the event that 
$\bigwedge_{i \in [d_1]^-} (X_i \le 13\ln k)$. 
By the union bound, we have $\Pr(\mathcal{E}) \ge 1-\frac{d_1}{k^4} 
\ge 1-\frac{2}{k^3\ln k}$, where the last inequality holds since $d_1 < 2k/\ln{k}$. 

Assume that event $\mathcal{E}$ occurs, i.e., that $|S_j|\le 13\ln k$ holds for $j\in [d_1]^-$. 
Consider Step \ref{good-hash-6} in \textbf{Algorithm \ref{good-hash}}. 
Fix $j\in [d_1]^-$, and let $E_j$ be the event that $F_j$ does not contain any perfect hash function w.r.t.~$S_j$. 
Let $h$ be a hash function picked from $\mathcal{H'}$ u.a.r. 
Since $|S_j| \le 13\ln k$ (by assumption), by 
Theorem \ref{lemma-hash}, with probability at least $1/2$, $h$ is perfect w.r.t.~$S_j$. Since $F_j$ consists of $\lceil 8\ln k \rceil$ hash functions chosen 
independently and u.a.r.~from $\mathcal{H'}$, we have $\Pr(E_j) \le (1/2)^{\lceil 8\ln k \rceil} < \frac{1}{k^4}$. 
Applying the union bound, we have $\Pr(\cup_{j\in [d_1]^-} E_j) \le \frac{d_1}{k^4}< \frac{2}{k^3\ln k}$. 
Let $\mathcal{E}'$ be the event that there exist 
$d_1$ functions $f_0, f_1, \ldots, f_{d_1-1}$ such that $f_j \in F_j$ and $f_j$ is perfect 
w.r.t.~$S_j$, $j\in [d_1]^-$. Therefore, 
$\Pr(\mathcal{E}') \ge \Pr(\mathcal{E})(1-\Pr(\cup_{j\in [d_1]^-} E_j)) \ge 1-\frac{4}{k^3\ln k} + \frac{4}{k^6\ln^2 k}  \ge 1-\frac{4}{k^3\ln k}$. 
Suppose that such a set $\{f_0, \ldots, f_{d_1-1}\}$ of functions exists. 
Let $\eta(q)$ be the iteration number $i$ in Step \ref{alg-mapping-33} of \textbf{Algorithm \ref{alg-mapping}} during 
which $f_q \in F_q$ is chosen, for $q\in [d_1]^-$. We define the following (multi-)set $B$ as follows. For each $q \in [d_1]^-$, and for element $x \in S_q$, add to $B$ the element $g(q,\eta(q),x))$ defined in Steps \ref{alg-mapping-33}--\ref{alg-mapping-3} of \textbf{Algorithm \ref{alg-mapping}} (by $\{f, f_0, \ldots, f_{k-1}\}$). Observe that, by the definition of $B$, for every $x \in S$, there exists $a \in B$ such that $x \in T_a$.
We will show next that $B$ contains exactly $k$ distinct elements, and that, for any $a\neq b \in B$,
it holds that $T_a \cap T_b = \emptyset$. The above will show that the sets $\{T_a \mid a \in B\}$ satisfy conditions (1)--(3) of the theorem, thus proving the theorem. 

It suffices to show that for any two distinct elements of $S$, the corresponding elements added to $B$ are distinct. 
Let $x_1$ and $x_2$ be two distinct elements of $S$. Assume that $x_1\in S_j$ and $x_2 \in S_l$, where $j, l \in [d_1]^-$. We distinguish two cases based on whether or not $j=l$.

If $j=l$, we have $g(j, \eta(j), x_1)=j\cdot d_2 \cdot d_3+(\eta(j)-1)\cdot d_3 + f_j(x_1)$ and 
$g(j, \eta(j), x_2)= j \cdot d_2 \cdot d_3 +(\eta(j)-1)\cdot d_3+f_j(x_2)$. 
Since $f_j$ is perfect w.r.t.~$S_j$, we have
$g(j, \eta(j), x_1) \neq g(j, \eta(j), x_2)$. 
Moreover, both $g(j, \eta(j), x_1)$ and $g(j, \eta(j), x_2)$ are in  $I_{j\cdot d_2+(\eta(j)-1)}$ (since $0 \le h_j(x_1), h_j(x_2)<d_3$), where  $j\cdot d_2+(\eta(j)-1) \le (d_1-1) \cdot d_2 + (d_2-1) \in [d_1 \cdot d_2]^-$. By part (A) of Lemma \ref{lem:claims}, it holds that $T_{g(j, \eta(j), x_1)} \cap T_{g(j, \eta(j), x_2)} = \emptyset$.

Suppose now that $j \neq l$. By definition of $S_j, S_l, U_j, U_l$, we have 
$S_j \subseteq U_j$ and $S_l \subseteq U_l$. Consequently, 
 $g(j, \eta(j), x_1) \in {\cal G}(U_j)$ and
 $g(l, \eta(l), x_2) \in {\cal G}(U_l)$ hold. By part (B) of Lemma \ref{lem:claims}, we have ${\cal G}(U_j) \subseteq I'_j$ 
 and ${\cal G}(U_l) \subseteq I'_l$. Therefore, $g(j, \eta(j), x_1) \neq g(l, \eta(l), x_2)$.
 Moreover, $T_{g(j, \eta(j), x_1)} \cap T_{g(l, \eta(l), x_2)} = \emptyset$ holds by part (B) of Lemma \ref{lem:claims} as well. 
\end{proof}

\begin{theorem}\ifshort {\rm ($\spadesuit$)} \fi  \label{alg-hash-space}
\textbf{Algorithm \ref{good-hash}} runs in space $\bigoh(k+(\log k)(\log |U|))$, and 
\textbf{Algorithm \ref{alg-mapping}} runs in space $\bigoh(\log k)$ and in time polynomial in 
$\log |U|$. 
\end{theorem}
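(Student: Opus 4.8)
The plan is to tally the resources of \textbf{Algorithm~\ref{good-hash}} and \textbf{Algorithm~\ref{alg-mapping}} line by line; the only external ingredient is Theorem~\ref{evaluate-time}, which bounds the cost of sampling and evaluating a function from a $\kappa$-wise independent family. In \textbf{Algorithm~\ref{good-hash}}, Step~\ref{good-hash-1} computes $u=\Theta(\log|U|)$ and $d=\Theta(\log k)$ (note $d=\bigoh(\log|U|)$ since $|U|\ge k$), which are integers of $\bigoh(\log\log|U|)$ bits obtained by a bit-scan of $|U|$ and of a sufficiently precise approximation of $k/\ln k$, costing $\bigoh(\log|U|)$ space. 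Step~\ref{good-hash-2} samples $f$ from a $\lceil 12\ln k\rceil$-wise independent family of maps $\{0,1\}^u \to \{0,1\}^d$, which by Theorem~\ref{evaluate-time} has a description of $\bigoh(\lceil 12\ln k\rceil(u+d))=\bigoh((\log k)(\log|U|))$ bits. Step~3 records a prime $p\in[|U|,2|U|)$ (which exists by Bertrand's postulate, and is findable in $\bigoh(\log|U|)$ space by scanning candidates with a space-efficient primality test) together with $r=\lceil 13\ln k\rceil^2$, using $\bigoh(\log|U|)$ bits. Finally Step~\ref{good-hash-6} produces $d_1=2^d=\Theta(k/\ln k)$ families of $d_2=\lceil 8\ln k\rceil$ universal hash functions each, every function being named by a pair $(a,b)$ of integers below $p$. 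Adding these contributions gives the space bound $\bigoh(k+(\log k)(\log|U|))$ claimed for \textbf{Algorithm~\ref{good-hash}}.

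In \textbf{Algorithm~\ref{alg-mapping}}, the data $\{f,F_0,\dots,F_{d_1-1}\}$ is supplied as read-only input. Computing the index $j$ with $\llcorner j\lrcorner=f(\llcorner x\lrcorner)$ is a single evaluation of $f$, which by Theorem~\ref{evaluate-time} runs in time polynomial in $u,d,\lceil 12\ln k\rceil$ --- hence polynomial in $\log|U|$, using $\ln k\le\ln|U|$ --- and $j\in[d_1]^-$ occupies $\bigoh(\log k)$ bits. The loop of Steps~\ref{alg-mapping-33}--\ref{alg-mapping-3} runs $d_2=\lceil 8\ln k\rceil=\bigoh(\log|U|)$ times; each iteration looks up $h_i=h_{a,b,r}\in F_j$, evaluates $h_i(x)=((ax+b)\bmod p)\bmod r$ with a constant number of $\bigoh(\log|U|)$-bit arithmetic operations (time polynomial in $\log|U|$, and $\bigoh(\log|U|)$ scratch space reused from one iteration to the next), and appends $g(j,i,x)=j\cdot d_2 d_3+(i-1)d_3+h_i(x)\in[d_1 d_2 d_3]^-$ to the output set $\mathcal{G}(x)$. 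Apart from $\mathcal{G}(x)$ (written to the output) and the reusable hash-evaluation scratch, the persistent working memory consists only of the loop counter, the index $j$, and the small quantities $d_2,d_3$, for a total of $\bigoh(\log k)$ space; and the running time is $\bigoh(d_2)$ hash evaluations, i.e.\ polynomial in $\log|U|$.

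The one delicate point --- and the only place where the accounting needs care --- is Step~\ref{good-hash-6}: read literally, it stores $\Theta(k)$ universal hash functions, whose $\Theta(\log|U|)$-bit descriptions would total $\Theta(k\log|U|)$ bits, more than the stated bound. I would close the gap in one of two standard ways: (i) charge each $\bigoh(\log|U|)$-bit function name as $\bigoh(1)$ machine words, so that the $F_i$'s cost $\bigoh(d_1 d_2)=\bigoh(k)$ words while $f$ still costs $\bigoh((\log k)(\log|U|))$ bits; or (ii) derive all $d_1 d_2=\bigoh(k)$ pairs $(a,b)$ as the images of the indices $0,\dots,d_1 d_2-1$ under one $\lceil 8\ln k\rceil$-wise independent function, which by Theorem~\ref{evaluate-time} costs only $\bigoh((\log k)(\log|U|))$ bits and still supplies exactly the independence used in the proof of Theorem~\ref{alg-hash-THM4} --- namely mutual independence of the $\lceil 8\ln k\rceil$ functions inside each $F_j$, the cross-$F_j$ step there being a pure union bound. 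I would adopt (ii) to keep a clean bit-complexity statement; every other step is a routine tally requiring no new idea.
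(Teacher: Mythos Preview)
Your line-by-line tally is essentially the paper's argument. The paper computes exactly the same two contributions for \textbf{Algorithm~\ref{good-hash}}: $\bigoh((\log k)(\log|U|))$ for storing $f$ via Theorem~\ref{evaluate-time}, and $\bigoh(d_1 d_2)=\bigoh(k)$ for the families $F_i$; and for \textbf{Algorithm~\ref{alg-mapping}} it counts $|\mathcal{G}(x)|=d_2=\bigoh(\log k)$ and invokes Theorem~\ref{evaluate-time} for the evaluation time of $f$.

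Where you diverge is in your ``delicate point.'' The paper dispatches it via your option~(i): the preliminaries explicitly adopt the word-RAM convention (``each basic operation on words takes constant time and space''), so a universal hash function $h_{a,b,r}$ is declared to occupy $\bigoh(1)$ space, and the $d_1 d_2=\bigoh(k)$ of them cost $\bigoh(k)$. Your option~(ii)---generating all $(a,b)$ pairs from a single $\lceil 8\ln k\rceil$-wise independent seed---is a genuine alternative that yields a clean bit bound and, as you correctly observe, preserves exactly the independence Theorem~\ref{alg-hash-THM4} actually uses (mutual independence of the $d_2$ functions inside each $F_j$, plus a union bound across $j$). The paper does not take this route; if you do, be careful that the seed's range matches $\{1,\dots,p-1\}\times\{0,\dots,p-1\}$ (or argue that near-uniformity suffices), and that the seed is sampled independently of $f$ so that conditioning on the event $\mathcal{E}$ in Theorem~\ref{alg-hash-THM4} leaves the $\Pr(E_j)$ calculation intact.

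One minor inconsistency in your write-up: you account $\bigoh(\log|U|)$ bits of scratch for evaluating $h_i$ but then state the total space of \textbf{Algorithm~\ref{alg-mapping}} as $\bigoh(\log k)$. Under the word model this is fine (the scratch is $\bigoh(1)$ words); under pure bit complexity it is not, so if you insist on option~(ii) for a bit-level statement you should either absorb the scratch into the bound or note that $\bigoh(\log|U|)$ bits is the honest answer there.
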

\iflong
\begin{proof}
In \textbf{Algorithm \ref{good-hash}}, since $f$ is $\lceil 12\ln k \rceil$-wise independent, 
storing $f$ uses space $\bigoh(\ln k \cdot \max\{u,d \})=\bigoh((\log k)(\log |U|))$ (since $k\le |U|$) by Theorem \ref{evaluate-time}. 
Storing a universal hash function 
uses $\bigoh(1)$ space, and thus storing $\{F_0, \ldots, F_{d_1-1}\}$ 
uses $\bigoh(d_1 \cdot d_2)=\bigoh(k)$ space. Therefore, \textbf{Algorithm \ref{good-hash}}
can be implemented in space $\bigoh(k+(\log k)(\log |U|))$. 

For \textbf{Algorithm~\ref{alg-mapping}}, since ${\cal G}(x)$ contains exactly $d_2$ elements, storing ${\cal G}(x)$ takes
$\bigoh(d_2)=\bigoh(\ln k)$ space.
 In Step \ref{alg-mapping-2}, computing $f(\llcorner x \lrcorner)$
 takes time polynomial in $\log |U|$ and $\log k$ by Theorem \ref{evaluate-time}, 
 since $f$ is a $\lceil 12\ln k \rceil$-wise independent hash function from $\{0,1\}^u$ to $\{0,1\}^d$.
 Computing $j$ in Step \ref{alg-mapping-2} takes time polynomial in $d=\bigoh(\log k)$ since 
$f(\llcorner x \lrcorner) \in \{0,1\}^d$. Therefore, Step \ref{alg-mapping-2} can be performed in time 
polynomial in $\log |U|$ and $\log k$, and hence polynomial in $\log |U|$ (since $k \le |U|$). 
Step \ref{alg-mapping-3} can be implemented in time polynomial in $\log k$, since $|F_j|= \lceil 8\ln k \rceil$. 
Altogether, \textbf{Algorithm \ref{alg-mapping}} takes time polynomial in $\log |U|$. This completes the proof. 
\end{proof}
\fi

\section{Dynamic Streaming Model}
\label{sec:dynamic}
In this section, we present results on p-{\sc Matching} and p-{\sc WT-Matching} in the dynamic streaming model. 
The algorithm uses the toolkit developed in the previous section, together with the $\ell_0$-sampling technique discussed in Section~\ref{sec:prelim}. We first give a high-level description of how the algorithm works.  

We will hash the vertices of the graph to a range $R$ of size $\bigoh(k \log^2k)$. For each element $(e=uv, wt(e), op) \in \SSS$, we use the relation ${\cal G}$, discussed in Section~\ref{sec:structural}, and compute the two sets ${\cal G}(u)$ and ${\cal G}(v)$. For each $i \in {\cal G}(u)$ and each $j \in {\cal G}(v)$, we associate an instance of an $\ell_0$-sampler primitive, call it ${\mathscr{C}}_{i,j,wt(uv)}$, and update it according to the operation $op$. Recall that it is assumed that the weight of every edge does not change throughout the stream.  

The solution computed by the algorithm consists of a set of edges created by invoking each of the $\tilde{O}(Wk^2)$ $\ell_0$-sampler algorithms to sample at most one edge from each ${\mathscr{C}}_{i,j,w}$, for each pair of $i, j$ in the range $R$ and each edge-weight of the graph stream. 

The intuition behind the above algorithm (i.e., why it achieves the desired goal) is the following. Suppose that there exists a maximum-weight $k$-matching $M$ in $G$, and let $M=\{u_0u_1, \ldots, u_{2k-2}u_{2k-1}\}$. By Theorem~\ref{alg-hash-THM4}, \emph{w.h.p.}~there exist $i_0, \ldots, i_{2k-1}$ in the range $R$ such that $u_{j} \in T_{i_j}$, for $j \in [2k-1]^-$, and such that the $T_{i_j}$'s are pairwise disjoint. Consider the $k$ $\ell_0$-samplers $\mathscr{C}_{i_{2j},i_{2j+1},wt(u_{2j}u_{2j+1})}$, where $j\in [k]^-$. Then, \emph{w.h.p.}, the $k$ edges sampled from these $k$ $\ell_0$-samplers are the edges of a maximum-weight $k$-matching (since the $T_{i_j}$'s are pairwise disjoint) whose weight equals that of $M$.

\begin{algorithm}[htbp]
\caption{The streaming algorithm $\mathcal{A}_{dynamic}$ in the dynamic streaming model}\label{alg-postprocessing}

\begin{algorithmic}[1]

\vspace*{3mm}
\Algphase{{\bf $\mathcal{A}_{dynamic}$-Preprocess}: The preprocessing algorithm}
\Require $n=|V(G)|$ and a parameter $k \in \nat$

\State let $\mathscr{C}$ be a set of $\ell_0$-sampling primitive instances and  $\mathscr{C}=\emptyset$ \label{preprocess3}
       \State let $\{f, F_0, F_1, \ldots,$ $F_{d_1-1}\}$ be the output of \textbf{Algorithm \ref{good-hash}} on input $(n, 2k)$ \label{preprocess1}
 
\vspace*{3mm}
\Algphase{{\bf $\mathcal{A}_{dynamic}$-Update}: The update algorithm}
\Require The $i$-th update $(e_i=uv, wt(e), op) \in \SSS$
\setcounter{ALG@line}{0}

\State let ${\cal G}(u)$ be the output of \textbf{Algorithm \ref{alg-mapping}} on input ($u, 2k, \{f, F_0, F_1, \ldots, F_{d_1-1} \}$) \label{alg-sampling-5}
     \State let ${\cal G}(v)$ be the output of \textbf{Algorithm \ref{alg-mapping}} on input ($v, 2k, \{f, F_0, F_1, \ldots, F_{d_1-1} \}$) \label{alg-sampling-6}
    \For { $i\in {\cal G}(u)$ and $j \in {\cal G}(v)$}   \label{alg-sampling-7}
          \If { $\mathscr{C}_{i,j,wt(uv)} \notin \mathscr{C}$ } \label{alg-sampling-55}
                \State create the $\ell_0$-sampler $\mathscr{C}_{i,j,wt(uv)}$ \label{alg-sampling-66}
            \EndIf
           \State feed $\langle uv,op \rangle$ to the $\ell_0$-sampling algorithm $\mathscr{C}_{i,j,wt(uv)}$ with parameter $\delta$ \label{alg-sampling-8}
      \EndFor
      
\vspace*{3mm}
\Algphase{{\bf $\mathcal{A}_{dynamic}$-Query}: The query algorithm after the $i$-th update }

\setcounter{ALG@line}{0}

     \State let $E'=\emptyset$ \label{alg-sampling-99}
\For { each $\mathscr{C}_{i,j,w} \in \mathscr{C}$}  \label{alg-sampling-9}
                    \State apply the $\ell_0$-sampler $\mathscr{C}_{i, j, w}$ with parameter $\delta$ to sample an edge $e$ \label{alg-sampling-10}
                    \State if $\mathscr{C}_{i, j, w}$ does not FAIL then set $E'=E' \cup \{e\}$ \label{alg-sampling-11}
       \EndFor
       \State return a maximum-weight $k$-matching in $G'=(V(E'), E')$ if any; otherwise, return $\emptyset$ \label{alg-sampling-12}  
       
\end{algorithmic}
\end{algorithm}

Let $\SSS$ be a graph stream of a weighted graph $G=(V, E)$ with $W$ distinct weights, where $W \in \nat$, and let $n=|V|$ and $k \in \nat$. Choose $\delta  = \frac{1}{20k^4\ln (2k)}$. Let ${\cal A}_{dynamic}$ be the algorithm consisting of the sequence of three subroutines/algorithms  \textbf{$\mathcal{A}_{dynamic}$-Preprocess},  \textbf{$\mathcal{A}_{dynamic}$-Update}, and \textbf{$\mathcal{A}_{dynamic}$-Query}, where \textbf{$\mathcal{A}_{dynamic}$-Preprocess} is applied at the beginning of the stream, \textbf{$\mathcal{A}_{dynamic}$-Update} is applied after each operation, and \textbf{$\mathcal{A}_{dynamic}$-Query} is applied whenever the algorithm is queried for a solution after some update operation. Without loss of generality, and for convenience, we will assume that the algorithm is queried at the end of the stream $\SSS$, even though the query could take place after any arbitrary operation.

\begin{lemma}  \ifshort {\rm ($\spadesuit$)} \fi \label{A_reduce_update}
Let $M'$ be the matching obtained by applying the algorithm ${\cal A }_{dynamic}$ with \textbf{$\mathcal{A}_{dynamic}$-Query} invoked at the end of $\SSS$. If $G$ contains a $k$-matching then, with probability at least $1-\frac{11}{20k^3\ln (2k)}$, $M'$ is a  maximum-weight $k$-matching of $G$.
\end{lemma}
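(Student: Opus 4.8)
The plan is to establish the lemma by combining the structural guarantee of Theorem~\ref{alg-hash-THM4} (applied with the universe being $V(G)$ and the subset being the endpoints of a fixed maximum-weight $k$-matching) with the failure-probability guarantee of the $\ell_0$-samplers from Lemma~\ref{lem:sampler}, and then a union bound over the at most $\tilde O(Wk^2)$ sampler instances actually created. First I would fix, once and for all, a maximum-weight $k$-matching $M = \{u_0u_1, \ldots, u_{2k-2}u_{2k-1}\}$ of $G$ (this exists because $G$ contains a $k$-matching, hence a maximum-weight one), and set $S = \{u_0, \ldots, u_{2k-1}\}$, a subset of $V(G)$ of cardinality $2k$. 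Note $\textbf{Algorithm~\ref{good-hash}}$ was run on input $(n,2k)$, so Theorem~\ref{alg-hash-THM4} applies with $k$ replaced by $2k$: with probability at least $1 - \frac{4}{(2k)^3\ln(2k)}$ there exist indices $i_0, \ldots, i_{2k-1}$ in the range $R = [d_1 d_2 d_3]^-$ such that $|T_{i_\ell}\cap S| = 1$ for all $\ell$, $S\subseteq\bigcup_\ell T_{i_\ell}$, and the $T_{i_\ell}$ are pairwise disjoint; reindexing, we may assume $u_\ell \in T_{i_\ell}$ for each $\ell\in[2k]^-$. Call this event $\mathcal{E}_{\text{hash}}$.

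Next I would argue that, conditioned on $\mathcal{E}_{\text{hash}}$, for each pair $(u_{2j}, u_{2j+1})$ the update algorithm feeds the edge $u_{2j}u_{2j+1}$ to the sampler $\mathscr{C}_{i_{2j},\,i_{2j+1},\,wt(u_{2j}u_{2j+1})}$ (since $i_{2j}\in\mathcal{G}(u_{2j})$ and $i_{2j+1}\in\mathcal{G}(u_{2j+1})$ by definition of $T_i$), and this sampler survives the whole stream because no deletion of $u_{2j}u_{2j+1}$ occurs after its final insertion (the edge is in $G$). So at query time the underlying vector of $\mathscr{C}_{i_{2j},i_{2j+1},wt(u_{2j}u_{2j+1})}$ is nonzero. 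By Lemma~\ref{lem:sampler}, each such sampler returns some edge (rather than FAIL) except with probability $\delta$. Let $\mathcal{E}_{\text{samp}}$ be the event that none of these $k$ designated samplers fails; by the union bound $\Pr(\overline{\mathcal{E}_{\text{samp}}}\mid\mathcal{E}_{\text{hash}}) \le k\delta$. Conditioned on $\mathcal{E}_{\text{hash}}\cap\mathcal{E}_{\text{samp}}$, the $k$ edges $e_j$ sampled from these $k$ samplers satisfy: $e_j$ has weight $wt(u_{2j}u_{2j+1})$ (that is the weight index of the sampler, and the weight-per-edge is invariant), and $e_j$ has one endpoint in $T_{i_{2j}}$ and one in $T_{i_{2j+1}}$; since all $2k$ sets $T_{i_0},\ldots,T_{i_{2k-1}}$ are pairwise disjoint, the $2k$ endpoints of $e_0,\ldots,e_{k-1}$ are distinct, so $\{e_0,\ldots,e_{k-1}\}$ is a $k$-matching of total weight $\sum_j wt(u_{2j}u_{2j+1}) = wt(M)$. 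All these edges lie in $E'$ (they are each sampled by some $\mathscr{C}_{i,j,w}\in\mathscr{C}$ that did not fail), so $G'$ contains a $k$-matching of weight $wt(M)$, hence the $k$-matching $M'$ returned in Step~\ref{alg-sampling-12} — being a maximum-weight $k$-matching \emph{of $G'\subseteq G$} — has $wt(M')\ge wt(M)$, and therefore $wt(M') = wt(M)$, i.e. $M'$ is a maximum-weight $k$-matching of $G$.

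Finally I would collect the failure probabilities. The only bad events are $\overline{\mathcal{E}_{\text{hash}}}$ (probability $\le \frac{4}{(2k)^3\ln(2k)} = \frac{1}{2k^3\ln(2k)}$) and, conditioned on $\mathcal{E}_{\text{hash}}$, $\overline{\mathcal{E}_{\text{samp}}}$ (probability $\le k\delta = \frac{1}{20k^3\ln(2k)}$), so by the union bound the total failure probability is at most $\frac{1}{2k^3\ln(2k)} + \frac{1}{20k^3\ln(2k)} = \frac{11}{20k^3\ln(2k)}$, which gives the claimed success probability $1 - \frac{11}{20k^3\ln(2k)}$. I anticipate the main subtlety is not the probabilistic bookkeeping but the structural claim that the $k$ designated sampler instances are actually among those created and maintained by $\textbf{$\mathcal{A}_{dynamic}$-Update}$, and that their final contents form a valid $k$-matching — this hinges precisely on the pairwise-disjointness of the $T_i$'s (ruling out two sampled edges sharing a vertex) and on the per-edge weight invariance (so the weight tag of the sampler equals the true weight of whatever edge it returns); both must be stated carefully. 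A secondary point worth a sentence is that $M'$ is well-defined and optimal over $G'$, so that $wt(M')\ge wt(M)$ combined with $M'\subseteq E(G)$ (giving $wt(M')\le wt(M)$ by maximality of $M$ in $G$) pins down $wt(M') = wt(M)$.
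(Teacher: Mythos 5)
Your proposal is correct and follows essentially the same route as the paper's proof: fix a maximum-weight $k$-matching, invoke Theorem~\ref{alg-hash-THM4} with parameter $2k$ to get the disjoint sets $T_{i_j}$, observe each matched edge is fed to its designated sampler $\mathscr{C}_{i_{2j},i_{2j+1},wt(u_{2j}u_{2j+1})}$, union-bound the $k$ sampler failures with $\delta=\frac{1}{20k^4\ln(2k)}$, and use disjointness of the $T_{i_j}$'s plus the weight tags to conclude the sampled edges form a $k$-matching of weight $wt(M)$, yielding the combined failure bound $\frac{1}{2k^3\ln(2k)}+\frac{1}{20k^3\ln(2k)}=\frac{11}{20k^3\ln(2k)}$. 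The subtleties you flag (the sampled edge need not be $u_{2j}u_{2j+1}$ itself, only an edge of the same weight between the same pair of $T$-sets, and the final optimality argument over $G'\subseteq G$) are exactly the points the paper also handles.
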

\iflong
\begin{proof}
Suppose that $G$ has a $k$-matching, 
and let  $M=\{u_0u_1, \ldots, u_{2k-2}u_{2k-1}\}$ be a maximum-weight $k$-matching in $G$. 
(Note that we can assume that $u_{2j}<u_{2j+1}$ for every $j\in [k]^-$; see Section~\ref{sec:prelim}.) 
From  \textbf{Algorithm \ref{good-hash}} and 
\textbf{Algorithm \ref{alg-mapping}}, 
it follows that $d_1=O(\frac{k}{\ln k})$, 
$d_2=O(\ln k)$, and $d_3=O(\ln^2 k)$.
For $i\in [d_1\cdot d_2 \cdot d_3]^-$, 
let $T_i=\{ u \in V \mid i \in {\cal G}(u)\}$.
By Theorem \ref{alg-hash-THM4}, with probability at least $1-\frac{4}{(2k)^3 \ln (2k)} = 1- \frac{1}{2k^3\ln (2k)}$,
there exist $i_0, i_1, \ldots, i_{2k-1}$ such that (1) $u_j \in T_{i_j}, j\in [2k]^-$, and (2) 
$T_{i_j} \cap T_{i_l} = \emptyset$ for $j\neq l \in [2k]^-$. Let $\mathcal{E'}$ be the above 
event. Then $\Pr(\mathcal{E'}) \ge 1- \frac{1}{2k^3\ln (2k)}$.
By Step \ref{alg-sampling-8} of \textbf{$\mathcal{A}_{dynamic}$-Update}, $u_{2j}u_{2j+1}$ will be fed into  
$\mathscr{C}_{i_{2j}, i_{2j+1}, wt(u_{2j}u_{2j+1})}$ for
$j\in [k]^-$. Hence, $\mathscr{C}_{i_{2j}, i_{2j+1}, wt(u_{2j}u_{2j+1})}$ is fed at least one edge for every $j\in [k]^-$.

Now, let us compute the sampling success probability 
in Step \ref{alg-sampling-11} of \textbf{$\mathcal{A}_{dynamic}$-Query}. 
Note that this probability involves 
both Step \ref{alg-sampling-8} of 
\textbf{$\mathcal{A}_{dynamic}$-Update} 
and Step  \ref{alg-sampling-11} of \textbf{$\mathcal{A}_{dynamic}$-Query}. In Step \ref{alg-sampling-8} of \textbf{$\mathcal{A}_{dynamic}$-Update} and Step  \ref{alg-sampling-11} of \textbf{$\mathcal{A}_{dynamic}$-Query}, we employ the $\ell_0$-sampling primitive in Lemma \ref{lem:sampler}. Let $\mathcal{E}$ be the event that one edge is sampled successfully for each $\ell_0$-sampler in 
$\{ \mathscr{C}_{i_{2j}, i_{2j+1}, wt(u_{2j}u_{2j+1})} \mid j\in [k]^- \}$. By Lemma \ref{lem:sampler}, one 
$\ell_0$-sampler fails with probability at most $\delta$. Hence, 
 $\Pr(\mathcal{E}) \ge 1 - k\cdot \delta$ by the union bound. Since 
 $\delta = \frac{1}{20k^4\ln (2k)}$, we get $\Pr(\mathcal{E}) \ge 1 - \frac{1}{20k^3\ln (2k)}$. 
Hence,
with probability at least $1-\frac{1}{20k^3\ln (2k)}$, 
$E'$ will contain one edge $e_j$ sampled from $\mathscr{C}_{i_{2j},i_{2j+1}, wt(u_{2j}u_{2j+1})}$ for each 
$j\in [k]^-$.
Note that $e_j$ may be $u_{2j}u_{2j+1}$ or any other edge with the same weight as $u_{2j}u_{2j+1}$. 
Since $T_{i_j} \cap T_{i_l}=\emptyset$ for every $j\neq l \in [2k]^-$, we have that the edges fed to 
$\mathscr{C}_{i_{2a},i_{2a+1}, wt(u_{2a}u_{2a+1})}$ and 
$\mathscr{C}_{i_{2b},i_{2b+1}, wt(u_{2b}u_{2b+1})}$ are 
vertex disjoint, 
for all $a \neq b \in [k]^-$. 
Thus, $\{ e_0, \ldots, e_{k-1}\}$ forms a maximum-weight $k$-matching of $G$.  Applying the union bound, the probability that the graph $G'$ contains a maximum-weight $k$-matching of $G$ is at least 
$1-\Pr(\bar{\mathcal{E}} \cup \bar{\mathcal{E'}})\ge $
$1 -\Pr(\bar{\mathcal{E'}}) - \Pr(\bar{\mathcal{E}}) \geq 1-\frac{1}{2k^3\ln (2k)}-\frac{1}{20k^3\ln (2k)} = 1-\frac{11}{20k^3\ln (2k)}$.
Since $M'$ is a maximum-weight $k$-matching of $G'$, $M'$ is a maximum-weight $k$-matching of $G$ as well. 
\end{proof}
\fi
 
\begin{theorem} \label{theorem7}
The algorithm ${\cal A}_{dynamic}$ outputs a matching $M'$ such that (1) if $G$ contains a $k$-matching then, with probability at least  $1-\frac{11}{20k^3\ln (2k)}$, $M'$ is a maximum-weight $k$-matching of $G$; and (2) if $G$ does not contain a $k$-matching
then $M'=\emptyset$.   Moreover, the algorithm ${\cal A}_{dynamic}$ runs in $\tilde{O}(Wk^2)$ space and has $\tilde{O}(1)$ update time.
\end{theorem}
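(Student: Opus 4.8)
The plan is to split the statement into its correctness part and its resource-usage part, and to derive each from the machinery already assembled. For correctness, the two cases are essentially handled already. Part (1) is exactly Lemma~\ref{A_reduce_update}: if $G$ contains a $k$-matching, then with probability at least $1-\frac{11}{20k^3\ln(2k)}$ the matching $M'$ returned in Step~\ref{alg-sampling-12} of \textbf{$\mathcal{A}_{dynamic}$-Query} is a maximum-weight $k$-matching of $G$, so nothing new is needed. For part (2), I would observe that $E'$ is always a subset of $E(G)$: every edge added to $E'$ in Step~\ref{alg-sampling-11} is an edge returned by some $\ell_0$-sampler $\mathscr{C}_{i,j,w}$, and by Lemma~\ref{lem:sampler} such a sampler only ever returns an edge of the stream that has been inserted and not deleted, i.e.\ an actual edge of $G$. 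Hence $G'=(V(E'),E')$ is a subgraph of $G$; if $G$ has no $k$-matching then neither does $G'$, so Step~\ref{alg-sampling-12} returns $\emptyset$, giving $M'=\emptyset$ deterministically.

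For the space bound I would account for the three contributions separately. First, the preprocessing stores $\{f,F_0,\dots,F_{d_1-1}\}$, the output of \textbf{Algorithm~\ref{good-hash}} on input $(n,2k)$; by Theorem~\ref{alg-hash-space} this uses $\bigoh(k+(\log k)(\log n))=\tilde{O}(k)$ space. Second, I need to bound the number of $\ell_0$-sampler instances created. Each update $(e=uv,wt(e),op)$ touches only the samplers $\mathscr{C}_{i,j,wt(uv)}$ for $i\in{\cal G}(u)$, $j\in{\cal G}(v)$; since $|{\cal G}(u)|=|{\cal G}(v)|=d_2=\lceil 8\ln(2k)\rceil$ and all these indices lie in the range $[d_1d_2d_3]^-$ of size $\bigoh(k\log^2 k)$, and there are at most $W$ distinct weights, the total number of distinct samplers ever created is $\bigoh(W\cdot (k\log^2 k)^2)$ — but more carefully, each sampler is indexed by a pair $(i,j)$ with $i,j$ in a range of size $\bigoh(k\log^2 k)$ and a weight, giving $\bigoh(Wk^2\log^4 k)=\tilde{O}(Wk^2)$ samplers. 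By Lemma~\ref{lem:sampler} with $\delta=\frac{1}{20k^4\ln(2k)}$, each sampler uses $\bigoh(\log^2 n\cdot\log(\delta^{-1}))=\tilde{O}(1)$ bits. So the samplers together occupy $\tilde{O}(Wk^2)$ space, and the set $E'$ and the extracted matching are of size $\bigoh(Wk^2)$ edges, i.e.\ $\tilde{O}(Wk^2)$ bits, which dominates.

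For the update time, on reading the $i$-th update the algorithm runs \textbf{Algorithm~\ref{alg-mapping}} twice (on $u$ and on $v$), each taking time polynomial in $\log n$ by Theorem~\ref{alg-hash-space}, then iterates over the $d_2^2=\bigoh(\log^2 k)$ pairs $(i,j)\in{\cal G}(u)\times{\cal G}(v)$, and for each pair either creates or feeds one update to an $\ell_0$-sampler; by Lemma~\ref{lem:sampler} each such update takes $\tilde{O}(1)$ time. Hence the per-element update time is $\tilde{O}(1)$. The main obstacle in writing this out cleanly is the bookkeeping in the space bound: one must be careful that the set $\mathscr{C}$ only contains the $\tilde{O}(Wk^2)$ samplers that are actually instantiated (as enforced by the check in Step~\ref{alg-sampling-55}) rather than one per triple in the full index range, and that storing $\mathscr{C}$ as a dictionary keyed by $(i,j,w)$ adds only an $\tilde{O}(1)$ overhead per entry; everything else follows by assembling Theorem~\ref{alg-hash-space}, Lemma~\ref{lem:sampler}, and Lemma~\ref{A_reduce_update}.
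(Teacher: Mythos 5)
Your proposal is correct and follows essentially the same route as the paper's proof: part (1) is delegated to Lemma~\ref{A_reduce_update}, part (2) follows from $G'$ being a subgraph of $G$, and the $\tilde{O}(Wk^2)$ space and $\tilde{O}(1)$ update time are obtained by combining Theorem~\ref{alg-hash-space}, the bound $|\mathscr{C}|=\bigoh(Wk^2\log^4 k)$ on instantiated samplers, and Lemma~\ref{lem:sampler} with $\delta=\frac{1}{20k^4\ln(2k)}$. The only cosmetic difference is that the paper spells out the dictionary lookup in Step~\ref{alg-sampling-55} as a binary search over sorted triplets costing $\bigoh(\log W+\log k)$ and cites standard algorithms for the $\bigoh(|E'|)$-space matching extraction in Step~\ref{alg-sampling-12}, details you noted only implicitly.
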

\begin{proof}

First, observe that $G'$ is a subgraph of $G$, since it consists of edges sampled from subsets of edges in $G$. Therefore, statement (2) in the theorem clearly holds true. Statement (1) follows from Lemma \ref{A_reduce_update}. Next, we analyze the update time of algorithm ${\cal A}_{dynamic}$. 

From \textbf{Algorithm \ref{good-hash}} and 
\textbf{Algorithm \ref{alg-mapping}}, 
we have $d_1=O(\frac{k}{\ln k})$, 
$d_2=O(\ln k)$, $d_3=O(\ln^2 k)$ and $|F_i|=O(\ln k)$ for $i\in [d_1]^-$. Thus, 
$|{\cal G}(u)|=O(\ln k)$ holds for all $u\in V$. 
For the update time, it suffices to examine Steps \ref{alg-sampling-5}--\ref{alg-sampling-8} of \textbf{$\mathcal{A}_{dynamic}$-Update}. 
By Theorem \ref{alg-hash-space}, Steps~\ref{alg-sampling-5}--\ref{alg-sampling-6} take time
polynomial in $\log n$, which is $\tilde{O}(1)$.
For Step \ref{alg-sampling-55}, we can index $\mathscr{C}$ using
a sorted sequence of triplets $(i,j,w)$, 
where $i, j\in [d_1\cdot d_2 \cdot d_3]^-$ and $w$ ranges over all possible weights. 
Since $d_1=\bigoh(\frac{k}{\ln k})$, $d_2=\bigoh(\ln k)$ and $d_3=\bigoh(\ln^2 k)$, 
we have $|\mathscr{C}|=\bigoh((d_1\cdot d_2 \cdot d_3)^2 \cdot W)=\bigoh(Wk^2\ln^4 k)$. 
Using binary search on $\mathscr{C}$, one execution of Step \ref{alg-sampling-55} takes time $\bigoh(\log W+ \log k)$. 
Since $|{\cal G}(u)|=O(\ln k)$ for every $u\in V$, and since by Lemma~\ref{lem:sampler} updating the sketch for an $\ell_0$-sampler takes $\tilde{\bigoh{}}(1)$ time, Steps \ref{alg-sampling-7}--\ref{alg-sampling-8} take time $\bigoh(\ln^2 k) \cdot (\bigoh(\log W+ \log k)+\tilde{O}(1))= \tilde{O}(1)$. 
Therefore, the overall update time is $\tilde{O}(1)$. 

Now, we analyze the space complexity of the algorithm. 
First, consider \textbf{$\mathcal{A}_{dynamic}$-Preprocess}. 
Obviously, Step \ref{preprocess3} uses $\bigoh(1)$ space. 
Steps \ref{preprocess3}--\ref{preprocess1} use space $\bigoh(k+(\log k)(\log n))$ 
(including the space used to store 
$\{f, F_0, \ldots, F_{d_1-1}, \mathscr{C}\}$) by Theorem \ref{alg-hash-space}. 
Altogether, \textbf{$\mathcal{A}_{dynamic}$-Preprocess} runs in space $\bigoh(k+(\log k)(\log n))$. 
Next, we discuss \textbf{$\mathcal{A}_{dynamic}$-Update}. 
Steps \ref{alg-sampling-5}--\ref{alg-sampling-6}
 take space $O(\ln k)$ by Theorem \ref{alg-hash-space}. 
Observe that the space used in Steps \ref{alg-sampling-7}--\ref{alg-sampling-8}
is dominated by the space used by the set $\mathscr{C}$ of 
$\ell_0$-sampling primitive instances. 
By Lemma \ref{lem:sampler}, one instance of an $\ell_0$-sampling 
primitive uses space $\bigoh(\log^2 n \cdot \log (\delta^{-1}))$. Since $\delta = \frac{1}{20k^4\ln (2k)}$, we have $\log (\delta^{-1})=\bigoh(\log k)$. It follows that a single
instance of an $\ell_0$-sampler uses space 
$\bigoh(\log^2 n \cdot \log k)$. 
Since $|\mathscr{C}|=\bigoh(Wk^2\ln^4 k)$, Steps~\ref{alg-sampling-7}--\ref{alg-sampling-8} use space 
$\bigoh(Wk^2\log^2n \log^5 k)=\tilde{\bigoh}(Wk^2)$. Finally, consider \textbf{${\cal A}_{dynamic}$-Query}.
The space in Steps \ref{alg-sampling-99} -- \ref{alg-sampling-11} is dominated by the space used by $\mathscr{C}$ and the space needed to store the graph $G'$, and hence $E'$. 
By the above discussion, $\mathscr{C}$ takes space $\tilde{\bigoh}(Wk^2)$.
Since at most one edge is sampled from each $\ell_0$-sampler instance and $|\mathscr{C}|=\bigoh(Wk^2\ln^4 k)$, 
we have $|E'|=|\mathscr{C}|=\bigoh(Wk^2\ln^4 k)$. Step \ref{alg-sampling-12} utilizes space $\bigoh(|E'|)$ \cite{gabow1, gabow2}.
Therefore, 
\textbf{${\cal A}_{dynamic}$-Query} runs in space $\tilde{\bigoh}(Wk^2)$. It follows that the space complexity of $\mathcal{A}_{dynamic}$ is $\tilde{\bigoh}(Wk^2)$. 
\end{proof}


\iflong The space complexity of the above algorithm is large if the number of distinct weights $W$ is large. Under the same promise that the parameter $k$ is at least as large as the size of any maximum matching in $G$, an approximation scheme for \wmatcheq{} that is more space efficient was presented in~\cite{Chitnis2016}. This scheme approximates \wmatcheq{} to within ratio $1 + \epsilon$,  for any $\epsilon > 0$, and has space complexity $\tilde{O}(k^2\epsilon^{-1}\log W')$  and update time $\tilde{O}(1)$. The main idea behind this approximation scheme is to reduce the number of distinct weights in $G$ by rounding 
each weight to the nearest power of $1+\epsilon$. \fi

Using Theorem~\ref{theorem7}, and following the same approach in~\cite{Chitnis2016}, we can obtain the same approximation result as in~\cite{Chitnis2016}, albeit without the reliance on such a strong promise:

\begin{theorem} \ifshort {\rm ($\spadesuit$)} \fi  \label{theorem8}
Let $\SSS$ be a graph stream of a graph $G$, and let $W'=wt(e)/wt(e')$, where $e \in E(G)$ is an edge with the maximum weight and $e' \in E(G)$ is an edge with the minimum weight. Let $0<\epsilon<1$. 
In the dynamic streaming model, there exists an algorithm for p-{\sc WT-Matching} that computes a matching $M'$ 
such that (1) if $G$ contains a maximum-weight $k$-matching $M$, then 
with probability at least $1-\frac{11}{20k^3\ln (2k)}$, 
 $wt(M')>(1-\epsilon)wt(M)$; and (2) if $G$ does not contain a $k$-matching
then $M'=\emptyset$.  
Moreover, the algorithm runs in $\tilde{O}(k^2\epsilon^{-1}\log W')$ space and has $\tilde{O}(1)$ update time.
\end{theorem}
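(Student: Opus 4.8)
The plan is to apply the standard geometric weight-rounding reduction on top of the exact algorithm ${\cal A}_{dynamic}$ of Theorem~\ref{theorem7}: round every edge weight down to the nearest power of $1+\epsilon$, which both collapses the number of distinct weights to $O(\epsilon^{-1}\log W')$ and costs at most a $(1-\epsilon)$ factor in the objective. Concretely, I would define for each edge $e$ of the stream the rounded weight $wt'(e)=(1+\epsilon)^{\lfloor \log_{1+\epsilon} wt(e)\rfloor}$, the largest power of $1+\epsilon$ not exceeding $wt(e)$; since $wt'(e)$ depends only on $wt(e)$, it can be computed in time polynomial in $\log n$ and substituted on the fly inside \textbf{$\mathcal{A}_{dynamic}$-Update}. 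I would then run ${\cal A}_{dynamic}$ (with its $\ell_0$-samplers now indexed by $wt'$-values) on the rewritten stream in which each update $(e, wt(e), op)$ is replaced by $(e, wt'(e), op)$, and return the matching $M'$ it outputs, extracting it as a maximum-weight $k$-matching of $G'=(V(E'),E')$ with respect to $wt'$. Part (2) of the statement is then inherited verbatim from Theorem~\ref{theorem7}.

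For part (1), the two elementary bounds $wt(e)/(1+\epsilon) < wt'(e) \le wt(e)$ give, after summing over the edges of any matching $N$, that $wt(N)/(1+\epsilon) < wt'(N) \le wt(N)$. Let $M$ be a maximum-weight $k$-matching of $G$ with respect to $wt$; then $G$ contains a $k$-matching, so by Theorem~\ref{theorem7}, with probability at least $1-\frac{11}{20k^3\ln(2k)}$ the output $M'$ is a maximum-weight $k$-matching of $G$ with respect to the weights fed to the algorithm, namely $wt'$, whence $wt'(M')\ge wt'(M)$. Assuming this event, chaining the inequalities yields $wt(M')\ge wt'(M')\ge wt'(M) > wt(M)/(1+\epsilon)\ge(1-\epsilon)\,wt(M)$, where the last step uses $(1-\epsilon)(1+\epsilon)=1-\epsilon^2\le 1$ together with $wt(M)>0$ (all edge weights are positive, as $W'$ is a ratio of weights). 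This is exactly the claimed approximation guarantee.

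For the resource bounds, I would observe that every edge weight of $G$ lies in $[wt(e'),wt(e)]$ with $wt(e)/wt(e')=W'$, so $wt'$ takes at most $\log_{1+\epsilon} W'+1 = O(\epsilon^{-1}\log W')$ distinct values over $E(G)$ (using $\ln(1+\epsilon)=\Theta(\epsilon)$ for $0<\epsilon<1$). Feeding this count as the parameter $W$ into Theorem~\ref{theorem7} gives space $\tilde O(Wk^2)=\tilde O(k^2\epsilon^{-1}\log W')$ and update time $\tilde O(1)$, the per-update rounding contributing only a $\mathrm{poly}\log$ overhead.

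The rounding analysis itself is routine; the one point that needs a word of care is that ${\cal A}_{dynamic}$ creates its $\ell_0$-sampler instances lazily — one per distinct triple $(i,j,w)$ of range indices and $wt'$-weight actually encountered in the stream — so that the space really scales with the number of distinct \emph{rounded} weights even though $W'$, or any a priori bound on the edge weights, need not be known in advance. I expect confirming this bookkeeping (and that no other part of ${\cal A}_{dynamic}$ secretly depends on advance knowledge of the weight set) to be the only mild obstacle.
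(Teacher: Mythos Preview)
Your proposal is correct and follows essentially the same approach as the paper: round each edge weight to the nearest power of $1+\epsilon$ so that only $O(\epsilon^{-1}\log W')$ distinct weights remain, then invoke Theorem~\ref{theorem7}. The only cosmetic differences are that the paper rounds up rather than down and argues the $(1-\epsilon)$ loss edge-by-edge (comparing each $e\in M$ to the edge sampled from its $\ell_0$-sampler) instead of via your cleaner global chain $wt(M')\ge wt'(M')\ge wt'(M)>wt(M)/(1+\epsilon)\ge(1-\epsilon)wt(M)$.
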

\begin{proof}
For each edge $e\in E$, round $wt(e)$ and assign it a new weight of $(1+\epsilon)^i$ such that $(1+\epsilon)^{i-1}<wt(e)\le (1+\epsilon)^i$. Thus, there are $\bigoh(\epsilon^{-1}\log W')$ 
distinct weights after rounding. By Theorem \ref{theorem7}, the space and update time are 
$\tilde{\bigoh}(k^2\epsilon^{-1}\log W')$ and $\tilde{\bigoh}(1)$ respectively, and the success probability is at least 
$1-\frac{11}{20k^3\ln(2k)}$. Now we  prove that $wt(M')>(1-\epsilon)wt(M)$. 

Let $e \in M$ and let
$e'$ be the edge sampled from the $\ell_0$-sampler that $e$ is fed to. 
It suffices to prove that $wt(e')>(1-\epsilon)wt(e)$. Assume that $wt(e)$ is rounded to $(1+\epsilon)^i$. 
Then, $wt(e')$ is rounded to 
$(1+\epsilon)^i$ as well. If $wt(e') \ge wt(e)$, we are done; otherwise, 
$(1+\epsilon)^{i-1}<wt(e')<wt(e)\le (1+\epsilon)^i$. It follows that 
$wt(e')>(1+\epsilon)^{i-1}\ge wt(e)/(1+\epsilon)>(1-\epsilon)wt(e)$. 
\end{proof}
The following theorem is a consequence of Theorem~\ref{theorem7} (applied with $W=1$):

\begin{theorem} 
In the dynamic streaming model, there is an algorithm for p-{\sc Matching} such that, on input $(\SSS, k)$, the algorithm outputs a matching $M'$ satisfying that (1) if $G$ contains a  $k$-matching 
then, with probability at least $1-\frac{11}{20k^3\ln (2k)}$, $M'$ is a $k$-matching of $G$; and (2) if $G$ does not contain a $k$-matching 
then $M'=\emptyset$.  
Moreover, the  algorithm runs in 
$\tilde{\bigoh}(k^2)$ space and has $\tilde{\bigoh}(1)$ update time. 
\end{theorem}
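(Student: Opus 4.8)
The plan is to obtain this statement as an immediate instantiation of Theorem~\ref{theorem7} with $W=1$. First I would regard the unweighted graph $G$ underlying the stream $\SSS$ as a weighted graph in which every edge carries weight $1$; this is a weighted stream with exactly $W=1$ distinct weight, and it is processed by the algorithm ${\cal A}_{dynamic}$ of Section~\ref{sec:dynamic} verbatim (Algorithms~\ref{good-hash} and~\ref{alg-mapping}, as well as the $\ell_0$-sampler primitive, are agnostic to whether the attached weights are genuine). Concretely, the p-{\sc Matching} algorithm runs \textbf{$\mathcal{A}_{dynamic}$-Preprocess}, feeds each insertion/deletion to \textbf{$\mathcal{A}_{dynamic}$-Update} with the weight tag $1$, and returns the output of \textbf{$\mathcal{A}_{dynamic}$-Query}.

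Next I would invoke Theorem~\ref{theorem7} with $W=1$ to get correctness. The one observation to record is that, under the all-ones weight function, every $k$-matching of $G$ has weight exactly $k$; hence $G$ contains a $k$-matching if and only if it contains a maximum-weight $k$-matching, and any maximum-weight $k$-matching is in particular a $k$-matching. Therefore statement~(1) of Theorem~\ref{theorem7} yields that, whenever $G$ has a $k$-matching, with probability at least $1-\frac{11}{20k^3\ln(2k)}$ the returned $M'$ is a maximum-weight $k$-matching of $G$ and so a $k$-matching of $G$, while statement~(2) of Theorem~\ref{theorem7} gives $M'=\emptyset$ when $G$ has no $k$-matching.

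For the resource bounds I would just substitute $W=1$ into the complexity part of Theorem~\ref{theorem7}: the space becomes $\tilde{O}(1\cdot k^2)=\tilde{O}(k^2)$ and the update time remains $\tilde{O}(1)$. There is no real obstacle here; the only point worth spelling out explicitly is the elementary remark that maximizing weight under a unit weight function is vacuous, so ``maximum-weight $k$-matching'' collapses to ``$k$-matching'', after which the theorem is a direct corollary of Theorem~\ref{theorem7}.
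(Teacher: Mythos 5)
Your proposal is correct and matches the paper's approach exactly: the paper states this theorem as a direct consequence of Theorem~\ref{theorem7} applied with $W=1$, which is precisely your argument (unit weights make ``maximum-weight $k$-matching'' coincide with ``$k$-matching'', and substituting $W=1$ gives the $\tilde{O}(k^2)$ space and $\tilde{O}(1)$ update time). No gaps.
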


\iflong

\subsection{Lower Bound}
\label{subsec:lowerbounds}

Consider an undirected graphs $G=(V, E)$ with a weight function $wt:E(G) \longrightarrow \mathbb{R}_{\ge 0}$. 
We define a more general dynamic graph streaming model for an undirected graph $G$: $G$ is given as a stream 
$\SSS=(e_{i_1}, \Delta_1(e_{i_1})), \ldots, (e_{i_j}, \Delta_j(e_{i_j})), \ldots$ of updates of the weights of 
the edges, where $e_{i_j}$ 
is an edge and $\Delta_j(e_{i_j})\in \mathbb{R}$, and a parameter $k \in \nat$. 
The $j$-th update $(e_{i_j}, \Delta_j(e_{i_j}))$ updates the weight of $e_{i_j}$ by setting 
$wt(e_{i_j})=wt(e_{i_j})+\Delta_j(e_{i_j})$. We assume that 
$wt(\cdot) \ge 0$ for every update $j$. Initially, $wt(\cdot)=\textbf{0}$. 
This models allows the weight of an edge to dynamically change,  
and generalizes the dynamic graph streaming model in~\cite{Chitnis2016}, 
where $wt(e)$ is either $0$ or a fixed 
value associated with $e$. In particular, each element $(e_{i_j}, \Delta_j(e_{i_j}))$ in $\SSS$ 
is either $(e_{i_j}, wt(e_{i_j}))$ or $(e_{i_j}, -wt(e_{i_j}))$, 
and $(e_{i_j}, wt(e_{i_j}))$ means to insert the edge $e_{i_j}$
while $(e_{i_j}, -wt(e_{i_j}))$ represents the deletion of the 
edge $e_{i_j}$. 

In this subsection, we prove a lower bound for the weighted $k$-matching problem in the more general dynamic streaming model.
This lower bound result holds \emph{even} for parameter value $k=1$. We prove this lower bound via a reduction from the problem of computing the function $F_{\infty}$ of data streams defined as follows:
 
Given a data stream $\SSS' = x_1, x_2, \ldots, x_m$, where each $x_i \in \{1, \ldots,   n'\}$, let $c_i=|\{ j \mid x_j = i \}|$ denote the number of occurrences of $i$ in the stream $\SSS'$. Define 
$F_{\infty} = \max_{1\le i \le n'} c_i$. The following theorem appears in~\cite{Tim}:

\begin{theorem} [\cite{Tim}] \label{Finfty}
For every data stream of length $m$, any randomized streaming algorithm that computes $F_{\infty}$ to 
within a $(1\pm 0.2)$ factor with probability at least $2/3$ requires space $\Omega(\min \{m, n' \})$. 
\end{theorem}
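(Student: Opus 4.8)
The plan is to prove this lower bound by a reduction from two-party \textsc{Disjointness} in communication complexity. Recall the classical theorem of Kalyanasundaram--Schnitger and Razborov: the randomized communication complexity of \textsc{Disjointness} on a ground set of size $N$ is $\Omega(N)$, and this remains true under the \emph{unique-intersection} promise, in which Alice holds a nonempty set $A \subseteq \{1,\dots,N\}$, Bob holds a nonempty set $B \subseteq \{1,\dots,N\}$, it is promised that $|A \cap B| \le 1$, and the players must decide whether $A \cap B = \emptyset$. The bound also holds for one-way (single-message) protocols, since these are a special case of general protocols.

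Set $N = \lfloor \min\{m, n'\}/4 \rfloor$ (if $\min\{m,n'\} < 4$ the claimed bound is vacuous), and restrict all stream elements to lie in $\{1,\dots,2N\}$, which is a subset of $\{1,\dots,n'\}$ since $2N \le n'$. Suppose $\mathcal{A}$ is a one-pass randomized streaming algorithm using $s$ bits of memory that, on any stream of length at most $m$, outputs with probability at least $2/3$ a value $\hat{F}$ with $0.8\,F_\infty \le \hat{F} \le 1.2\,F_\infty$. Given an instance $(A,B)$ of unique-intersection \textsc{Disjointness} over $\{1,\dots,2N\}$, Alice runs $\mathcal{A}$ on the stream consisting of exactly one copy of each $i \in A$ (in any order), then transmits the $s$-bit memory state of $\mathcal{A}$ to Bob; Bob resumes $\mathcal{A}$ on one copy of each $j \in B$ and reads off the resulting estimate $\hat{F}$. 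The combined stream has length $|A| + |B| \le 4N \le m$, so it is a legal input for $\mathcal{A}$.

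In the combined stream the frequency $c_i$ equals $2$ precisely when $i \in A \cap B$ and is at most $1$ otherwise; hence, under the promise, $F_\infty = 2$ when $A \cap B \neq \emptyset$ and $F_\infty = 1$ when $A \cap B = \emptyset$ (nonemptiness of $A, B$ guarantees $F_\infty \ge 1$). Since $1.2 < 1.4 < 1.6 = 0.8 \cdot 2$, thresholding $\hat{F}$ at $1.4$ decides \textsc{Disjointness} correctly with probability at least $2/3$. This exhibits a one-way randomized protocol of cost $s$ for unique-intersection \textsc{Disjointness} on a ground set of size $2N$, so $s = \Omega(N) = \Omega(\min\{m,n'\})$, as claimed.

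The only genuinely hard ingredient is the cited $\Omega(N)$ communication lower bound for unique-intersection \textsc{Disjointness}, which is invoked as a black box; everything else is bookkeeping---choosing $N$ so that the induced stream stays within length $m$ and the universe stays within $\{1,\dots,n'\}$, and checking that a $(1\pm 0.2)$-approximation separates $F_\infty = 1$ from $F_\infty = 2$. For completeness, the same reduction carried out across $p$ passes yields a protocol with $O(p)$ messages of total length $O(ps)$, giving $s = \Omega(\min\{m,n'\}/p)$ for $p$-pass algorithms.
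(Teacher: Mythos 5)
Your proposal is correct: the reduction from unique-intersection \textsc{Disjointness}, with Alice streaming one copy of each element of $A$, Bob continuing with $B$, and the observation that a $(1\pm 0.2)$-approximation separates $F_\infty=1$ from $F_\infty=2$, is exactly the standard argument. Note that the paper itself gives no proof of this statement---it imports it verbatim from the cited source---and that source's proof is essentially the same \textsc{Disjointness}-based reduction you describe, so there is nothing to reconcile beyond minor bookkeeping (your choice of $N$ and the length/universe checks are fine).
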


We remark that, approximating $F_{\infty}$  to 
within a $(1\pm 0.2)$ factor means computing a number that is within $(1\pm 0.2)$ factor from $F_{\infty}$; however, that approximate number may not correspond to the number of occurrences of a value in the stream. 

\begin{theorem}
For every dynamic graph streaming of length $m$ for weighted graphs, any randomized streaming algorithm that, with probability at least $2/3$, 
approximates the maximum-weight $1$-matching of the graph to a $\frac{6}{5}$ factor uses space 
$\Omega(\{m, \frac{(n-1)(n-2)}{2}\})$.
\end{theorem}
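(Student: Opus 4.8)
The plan is to reduce from the $F_{\infty}$ problem of Theorem~\ref{Finfty}, so that a space-efficient streaming algorithm for approximating the maximum-weight $1$-matching would yield a space-efficient streaming algorithm for approximating $F_{\infty}$, contradicting the $\Omega(\min\{m,n'\})$ lower bound. First I would fix an instance of $F_{\infty}$: a data stream $\SSS' = x_1,\dots,x_m$ with each $x_i \in \{1,\dots,n'\}$, where I set $n' = \frac{(n-1)(n-2)}{2}$ so that the values of the data stream can be put in bijection with the edges of a graph on $n-1$ vertices (say, the vertices $\{1,\dots,n-1\}$, using the standard encoding of an edge $uv$ with $u<v$ as a number in $[\binom{n-1}{2}]$). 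I would then build a graph streaming instance on the vertex set $V = \{0,1,\dots,n-1\}$ as follows: map each value $i \in \{1,\dots,n'\}$ to its corresponding edge $e_i = u_iv_i$ among vertices $\{1,\dots,n-1\}$, and translate the $j$-th token $x_j = i$ of $\SSS'$ into the weight update $(e_i, +1)$ in the generalized dynamic model (increment the weight of $e_i$ by $1$). After processing the whole data stream, the weight of edge $e_i$ in $G$ is exactly $c_i$, the number of occurrences of $i$, so the maximum-weight $1$-matching of $G$ has weight exactly $\max_i c_i = F_{\infty}$.

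Next I would argue the approximation-preserving part. If a randomized streaming algorithm $\mathcal{A}$ approximates the maximum-weight $1$-matching within a factor of $6/5$ with probability at least $2/3$, then running $\mathcal{A}$ on the constructed graph stream gives, with probability at least $2/3$, a number $\hat{w}$ with $\frac{5}{6} F_{\infty} \le \hat{w} \le \frac{6}{5} F_{\infty}$ (or the symmetric one-sided version, depending on the exact meaning of ``$6/5$-approximation'' — either way the multiplicative window is within $[1, 6/5]$ on one side). I would then check that this $\hat{w}$ is a $(1\pm 0.2)$ approximation of $F_{\infty}$: since $6/5 = 1.2 = 1+0.2$ and $5/6 > 1 - 0.2 = 0.8$, any value in the interval $[\frac{5}{6}F_\infty, \frac{6}{5}F_\infty]$ lies in $[(1-0.2)F_\infty, (1+0.2)F_\infty]$. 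Hence $\mathcal{A}$, composed with the (space-free, one-pass, constant-time-per-token) reduction above, is a randomized streaming algorithm computing $F_\infty$ to within $(1\pm 0.2)$ with probability $\ge 2/3$. By Theorem~\ref{Finfty}, such an algorithm needs space $\Omega(\min\{m, n'\}) = \Omega(\min\{m, \frac{(n-1)(n-2)}{2}\})$; since the reduction uses only $O(\log n)$ additional space, $\mathcal{A}$ itself must use $\Omega(\min\{m, \frac{(n-1)(n-2)}{2}\})$ space, which is the claimed bound.

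A few details I would want to nail down. The reduction must be implementable as a genuine one-pass streaming transformation with negligible space: each incoming token $x_j = i$ must be converted on the fly to the edge $e_i$ and emitted as $(e_i,+1)$; this is just an index-to-edge decoding computable in $O(\log n)$ space, so it does not inflate the space bound. I would also note that the length of the produced graph stream equals $m$, the length of $\SSS'$, so the parameter $m$ transfers directly. Finally, I should state explicitly that the instance uses only $k=1$ and nonnegative weights (all updates are $+1$, initial weights $\mathbf 0$), so it is a legal instance of the generalized dynamic model defined at the start of the subsection, and the lower bound therefore holds even for $k=1$ and even for the weaker task of mere $6/5$-approximation.

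The main obstacle, such as it is, is bookkeeping rather than conceptual: making sure the constants line up (that a $6/5$-factor matching approximation genuinely implies a $(1\pm 0.2)$-factor $F_\infty$ approximation in whichever convention the paper uses for ``approximates to a factor''), and making sure $n'=\binom{n-1}{2}$ is chosen so that every data-stream value maps to a distinct edge while leaving vertex $0$ unused (so that the graph has exactly $n$ vertices as required). There is no hard analytic step; the argument is a clean black-box reduction invoking Theorem~\ref{Finfty}.
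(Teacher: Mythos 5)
Your proposal is correct and follows essentially the same route as the paper: a black-box reduction from the $F_{\infty}$ lower bound of Theorem~\ref{Finfty}, translating each stream value into a unit-weight increment on a corresponding edge so that the maximum-weight $1$-matching weight equals $F_{\infty}$, and noting that a $\frac{6}{5}$-factor approximation is a $(1\pm 0.2)$-approximation. The only (inessential) difference is bookkeeping: you fix $n' = \frac{(n-1)(n-2)}{2}$ up front, whereas the paper picks $n$ with $(n-1)(n-2)/2 < n' \le n(n-1)/2$ for a given $n'$; your explicit check of the approximation constants and of the $O(\log n)$-space on-the-fly translation is if anything slightly more careful than the paper's write-up.
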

\begin{proof}
Given a data stream $\SSS' = x_1, x_2, \ldots, x_m$, where each $x_i \in \{1, \ldots,   n'\}$, we define a graph stream $\SSS$ for a weighted graph $G$ on $n$ vertices, where $n$ satisfies $(n-1)(n-2)/2 < n' \le n(n-1)/2$. Let $V =\{0 \ldots, n-1\}$ be the vertex-set of $G$. We first define a bijective 
function $\chi: \{(i, j) \mid i < j \in [n]^- \}  \longrightarrow [\frac{n(n-1)}{2}]$. 
Let $\chi^{-1}$ be the inverse function of $\chi$.
Then, we can translate $\SSS'$ to a general dynamic graph streaming $\SSS$ of underlying weighted graph $G$
 by corresponding with $x_i$ the $i$-th  element $(\chi^{-1}(x_i), 1)$ of $\SSS$,  for $i\in [m]$. Observe that 
computing $F_{\infty}$ of $\SSS'$ is equivalent to computing a maximum-weight 1-matching for the graph stream  $\SSS$ of $G$. 
Let $uv$ be a maximum-weight 1-matching of $\SSS$, then $\chi(uv)$ is 
$F_{\infty}$ of $\SSS'$. By Theorem \ref{Finfty}, it follows that any randomized approximation streaming algorithm that approximates the
maximum-weight $1$-matching of $G$ to a $\frac{6}{5}$-factor with probability at least $2/3$ uses space 
$\Omega(\{m, \frac{(n-1)(n-2)}{2}\})$, thus completing the proof.
\end{proof}
\fi
\section{Insert-Only Streaming Model}
\label{sec:insert-only}
In this section, we give a streaming algorithm for {\sc p-WT-Matching}, and hence for \umatch{} as a special case, in the insert-only model. We start by defining some notations. 

Given a weighted graph $G=(V=\{0, \ldots, n-1\}, E)$ along with the weight function $wt: E(G) \longrightarrow \mathbb{R}_{\geq 0}$, and a parameter $k$, 
we define a new function $\beta: E(G) \longrightarrow \mathbb{R}_{\geq 0} \times [n]^- \times [n]^-$ as follows: for $e=uv \in E$, where $u < v$, let $\beta(e)=(wt(e), u, v)$. Observe that $\beta$ is injective.

Define a partial order relation $\prec$ on $E(G)$ as follows: for any two distinct edges $e, e' \in E(G)$, $e \prec e'$ if $\beta(e)$ is lexicographically smaller than $\beta(e')$. For a vertex $v \in V$ and an edge $e$ incident to $v$, define $\Gamma_v$ to be the sequence of edges incident to $v$, sorted in a decreasing order w.r.t.~$\prec$. We say that $e$ is the \emph{$i$-heaviest} edge w.r.t.~$v$ if $e$ is the $i$-th element in $\Gamma_v$.

Let $f: V \longrightarrow [4k^2]^-$ be a hash function. Let $H$ be a subgraph of $G$ (possibly $G$ itself). The function $f$ partitions $V(H)$ into the set of subsets ${\cal V}= \{V_1, \ldots, V_r\}$, where each $V_i$, $i \in [r]$, consists of the vertices in $V(H)$ that have the same image under $f$. A matching $M$ in $H$ is said to be \emph{nice} w.r.t.~$f$ if no two vertices of $M$ belong to the same part $V_i$, where $i \in [r]$, in ${\cal V}$. If $f$ is clear from the context, we will simply write $M$ is nice. We define the \emph{compact} subgraph of $H$ under $f$, denoted $\CC(H,f)$, as the subgraph of $H$ consisting of each edge $uv$ in $H$ whose endpoints belong to different parts, say $u \in V_i$, $v \in V_j$, $i \neq j \in [r]$, and such that $\beta(uv)$ is maximum over all edges between $V_i$ and $V_j$; that is, $\beta(uv) = \max{\{\beta(u'v') \mid u'v' \in E(H) \wedge u' \in V_i \wedge v' \in V_j\}}$. Finally, we define the \emph{reduced compact} subgraph of $H$ under $f$, denoted $\RC(H, f)$, by (1) selecting each edge $uv \in \CC(H, f)$ such that $uv$ is among the $8k$ heaviest edges (or all edges if there are not that many edges) incident to vertices in $V_i$ and among the $8k$ heaviest edges incident to vertices in $V_j$; and then (2) letting $q=k(16k-1)$ and retaining from the selected edges in (1) the  $q$ heaviest edges (or all edges if there are not that many edges).  We have the following:

\begin{lemma}\ifshort {\rm ($\spadesuit$)} \fi \label{lemma:reduced}
The subgraph $\CC(H, f)$ has a nice $k$-matching if and only if $\RC(H, f)$ has a nice $k$-matching. Moreover, if $\CC(H, f)$ (and hence $\RC(H, f)$) has a nice $k$-matching, then the weight of a maximum-weight nice $k$-matching in $\CC(H, f)$ is equal to that in $\RC(H, f)$. 
\end{lemma}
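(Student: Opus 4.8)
The plan is to dispose of the ``if'' direction and one of the two inequalities for free, and then reduce everything to a single rerouting argument applied twice. First, observe that $\RC(H,f)$ is obtained from $\CC(H,f)$ by deleting edges only, and both subgraphs are viewed with the same vertex partition $\mathcal{V}=\{V_1,\dots,V_r\}$ induced by $f$; hence every nice $k$-matching of $\RC(H,f)$ is automatically a nice $k$-matching of $\CC(H,f)$, and in particular $wt^{*}(\RC(H,f))\le wt^{*}(\CC(H,f))$, where $wt^{*}(\cdot)$ denotes the maximum weight of a nice $k$-matching. So the whole lemma follows once one shows: if $\CC(H,f)$ has a nice $k$-matching, then some nice $k$-matching contained in $\RC(H,f)$ has weight exactly $wt^{*}(\CC(H,f))$.

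To this end I would fix $M'$ to be, among all maximum-weight nice $k$-matchings of $\CC(H,f)$, one whose multiset of $\beta$-values is lexicographically largest (this is well defined: there are finitely many matchings, and $\beta$-values lie in the totally ordered set $\mathbb{R}_{\ge 0}\times[n]^-\times[n]^-$). The goal is then to prove that every edge of $M'$ survives \emph{both} reduction steps in the definition of $\RC(H,f)$, so that $M'\subseteq\RC(H,f)$; together with the free inequality above this forces equality of the two optimal weights and produces the required nice $k$-matching in $\RC(H,f)$.

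Both steps use the same exchange move, and two structural facts are key: $\CC(H,f)$ contains at most one edge between any two parts (the unique $\beta$-maximum one, $\beta$ being injective), and after step~(1) each part is incident to at most $8k$ surviving edges. For step~(1): if an edge $e=uv\in M'$ with $u\in V_i$, $v\in V_j$ were \emph{not} among the $8k$ $\beta$-heaviest edges of $\CC(H,f)$ incident to $V_i$, then there would be at least $8k$ such edges heavier than $e$, reaching at least $8k$ distinct parts; since $M'\setminus\{e\}$ occupies only $2k-2$ parts and $8k>2k-2$, at least one such heavier edge $e'$ reaches a part unoccupied by $M'\setminus\{e\}$. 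Then $M'-e+e'$ is again a nice $k$-matching (its endpoints lie in $V_i$ and in a part not used by $M'\setminus\{e\}$, so it is vertex-disjoint from $M'\setminus\{e\}$ and no two of its vertices share a part), of weight $\ge wt(M')$ since $\beta(e')>\beta(e)$ implies $wt(e')\ge wt(e)$, and with a lexicographically larger $\beta$-profile -- contradicting the choice of $M'$ (if the weight merely tied) or contradicting maximality (if it strictly increased). Hence $M'$ lies in the step-(1) subgraph, whose maximum nice-$k$-matching weight is therefore $wt^{*}(\CC(H,f))$. For step~(2) the same move is used: if $e=uv\in M'$ were not among the $q=k(16k-1)$ $\beta$-heaviest step-(1) edges, there would be at least $q$ step-(1) edges heavier than $e$; at most $(2k-2)\cdot 8k=16k^{2}-16k$ of these touch one of the $2k-2$ parts of $M'\setminus\{e\}$, and since $q=16k^{2}-k>16k^{2}-16k$ for $k\ge 1$, a heavier step-(1) edge $e'$ avoids those parts, yielding again a nice $k$-matching $M'-e+e'$ lying entirely in the step-(1) subgraph, of weight $\ge wt(M')$ and with a lexicographically larger $\beta$-profile -- the same contradiction. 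Thus all $k$ edges of $M'$ are retained in step~(2), i.e.\ $M'\subseteq\RC(H,f)$.

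The routine parts of this plan are the well-definedness of the lexicographic choice and the verification that each $M'-e+e'$ really is a nice $k$-matching. The part that needs genuine care -- the main obstacle -- is the two counting arguments that pin down why the exact thresholds $8k$ and $q=k(16k-1)$ suffice: in each step one must bound the number of ``blocked'' heavier edges (those incident to a part already occupied by $M'\setminus\{e\}$) and check that this bound is strictly smaller than the number of edges guaranteed to be heavier than $e$, which is precisely where the at-most-one-edge-per-part-pair property of $\CC(H,f)$ and the at-most-$8k$-edges-per-part property produced by step~(1) are invoked.
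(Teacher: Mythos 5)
Your proof is correct, and it follows the same overall strategy as the paper — an extremal-choice-plus-exchange argument showing that a suitably chosen maximum-weight nice $k$-matching survives both pruning steps, using exactly the same counting thresholds $8k$ and $q=k(16k-1)$ — but the implementation differs in ways worth noting. The paper first passes to an auxiliary quotient graph $\Phi$ on the parts $V_1,\dots,V_r$ (so that nice matchings of $\CC(H,f)$ become ordinary matchings of $\Phi$), and then runs two separate extremal arguments: it picks a maximum-weight $k$-matching maximizing the number of edges retained after the $8k$-pruning, handling a swapped-in edge that is itself not retained by an iterative chain of successively $\beta$-heavier edges that must terminate, and then re-chooses a matching maximizing the number of edges retained among the $q$ heaviest. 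You instead work directly in $\CC(H,f)$ (using its at-most-one-edge-per-part-pair property, which is the same information the quotient encodes), and your single tie-break — lexicographically largest $\beta$-profile among maximum-weight nice $k$-matchings — makes every exchange strictly improving, so the paper's chain construction collapses to a one-shot count ($\ge 8k$ heavier edges incident to $V_i$ reach $\ge 8k$ distinct parts, of which only $2k-2$ are blocked), and the same choice serves both pruning steps; your step-(2) count $(2k-2)\cdot 8k=16k^2-16k<q$ is a slight variant of the paper's bound $2k(8k-1)+k=q$ and is equally valid. The net effect is a somewhat more streamlined argument that avoids the delicate termination step of the paper's chain, at the cost of the explicit correspondence with matchings of $\Phi$ that the paper's formulation makes visible.
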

\iflong
\begin{proof}
As discussed before, the function $f$ partitions $V(H)$ into the set $\mathcal{V}=\{V_1, \ldots, V_r\}$, where each $V_i$, $i\in [r]$, consists of 
the vertices in $V(H)$ that have the same image under $f$. 
Define the auxiliary weighted graph $\Phi$ whose vertices are the parts $V_1, \ldots, V_r$, and such that there is an edge $V_iV_j$ in $\Phi$, $i \neq j \in [r]$, if 
some vertex $u \in V_i$ is adjacent to some vertex $v \in V_j$ in $\CC(H, f)$; we associate with edge $V_iV_j$ the value $\beta(uv)$ and associate the edge $uv$ with $V_iV_j$. Obviously, there is one-to-one correspondence between the nice $k$-matchings in $\CC(H,f)$ and the $k$-matchings of $\Phi$. Let $\mathscr{H}$ be the subgraph of $\Phi$ formed by selecting each edge $V_iV_j$, $i\neq j\in [r]$, 
such that $V_iV_j$ is among the $8k$ heaviest edges (or all edges if there are not that many edges)
incident to $V_i$ and among the $8k$ heaviest edges (or all edges if there are not that many edges) 
incident to $V_j$. Let $\mathscr{H'}$ consist of the $q$ heaviest edges in $\mathscr{H}$; if $\mathscr{H}$ has at most $q$ edges, we let ${\cal H'}={\cal H}$. Since there is a one-to-one correspondence between the nice $k$-matchings in $\CC(H,f)$ and the $k$-matchings of $\Phi$, it suffices to prove the statement of the lemma with respect to matchings in $\Phi$ and ${\cal H'}$; namely, since ${\cal H'}$ is a subgraph of $\Phi$, it suffices to show that: if $\Phi$ has a maximum-weight $k$-matching $M$ then ${\cal H'}$ has a maximum-weight $k$-matching of the same weight as $M$.  

Suppose that $\Phi$ has a maximum-weight $k$-matching $M$. Choose $M$ such that the number of edges in $M$ that remain in ${\cal H}$ is maximized. We will show first that all the edges in $M$ remain in ${\cal H}$. Suppose not, then there is an edge $V_{i_0}V_{i_1} \in M$ such that $V_{i_0}V_{i_1}$ is not among the $8k$ heaviest edges incident to one of its endpoints, say $V_{i_1}$. Since $|V(M)| =2 k < 8k$, it follows that there is a heaviest edge $V_{i_1}V_{i_2}$ incident to $V_{i_1}$ such that $\beta(V_{i_1}V_{i_2}) > \beta(V_{i_0}V_{i_1})$ and $V_{i_2} \notin V_M$.  If $V_{i_1}V_{i_2} \in \mathscr{H}$, then $(M- V_{i_0}V_{i_1}) + V_{i_1}V_{i_2}$
   is a maximum-weight $k$-matching of $\Phi$ that contains more edges of $\mathscr{H}$ than $M$, contradicting our choice of $M$. It follows that 
   $V_{i_1}V_{i_2} \notin \mathscr{H}$. Then, $V_{i_1}V_{i_2}$ is not among the $8k$ 
   heaviest edges incident to $V_{i_2}$. Now apply the above argument to $V_{i_2}$ to 
   select the heaviest edge $V_{i_2}V_{i_3}$ such that $\beta(V_{i_2}V_{i_3}) > \beta(V_{i_1}V_{i_2}) > \beta(V_{i_0}V_{i_1})$ and  $V_{i_3}\notin V_M$. 
   By applying the above argument $j$ times, we obtain a sequence of $j$ vertices $V_{i_1}, V_{i_2}, \ldots, V_{i_j}$, such that 
   (1) $\{V_{i_2}, \ldots, V_{i_j}\} \cap V_M = \emptyset$; and (2) $V_{i_a} \neq V_{i_b}$ for every $a \neq b \in [j]$, which 
   is guaranteed by $\beta(V_{i_a}V_{i_{a+1}})<\beta(V_{i_{a+1}}V_{i_{a+2}})<\cdots < \beta(V_{i_{b-1}}V_{i_{b}})$ and 
   $V_{i_a}V_{i_{a+1}}$ is the heaviest edge incident to $V_{i_a}$ such that $V_{i_{a+1}} \notin V_M$. 
   Since $\Phi$ is finite, the above process must end at an edge $e$ not in $M$ and such that $\beta(e)$ exceeds $\beta(V_{i_0}V_{i_1})$, contradicting our choice of $M$. Therefore, $M \subseteq E({\cal H})$. 
   
   Now, choose a maximum-weight $k$-matching of ${\cal H}$ that maximizes the number of edges retained in ${\cal H'}$. Without loss of generality, call it $M$. We prove that the edges of $M$ are retained in ${\cal H'}$, thus proving the lemma. Suppose that this is not the case. Since each vertex in $V(M)$ has degree at most $8k$ and one of its edges must be in $M$, the number of edges in ${\cal H}$ incident to the vertices in $M$ is at most $2k(8k-1) +k = k(16k-1)=q$. It follows that there is an edge $e$ in ${\cal H'}$ whose endpoints are not in $M$ and such that $\beta(e)$ is larger than the $\beta()$ value of some edge in $M$, contradicting our choice of $M$. 
\end{proof}
\fi

\begin{lemma}\ifshort {\rm ($\spadesuit$)} \fi
\label{lemma:rc}
 Let $f: V \longrightarrow [4k^2]^-$ be a hash function, and let $H$ be a subgraph of $G$. There is an algorithm \textbf{Alg-Reduce($H$, $f$)} that computes $\RC(H, f)$ and whose time and space complexity is $\bigoh(|H|+k^2)$. 
\end{lemma}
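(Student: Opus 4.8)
The plan is to compute $\RC(H,f)$ directly in the three stages of its definition, exploiting the fact that $\CC(H,f)$, the step-(1) subgraph, and the final output are all subsets of $E(H)$ (the last being additionally of size at most $q=k(16k-1)=\bigoh(k^2)$), so that the only genuinely new storage is a part-indexed array of size $4k^2$.

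First I would compute $\CC(H,f)$ by a single pass over $E(H)$. For each edge $uv$, evaluate the pair $(f(u),f(v))$ in $\bigoh(1)$ time (standard word-RAM assumption); discard $uv$ if $f(u)=f(v)$, otherwise look up the unordered pair $\{f(u),f(v)\}$ in a dictionary storing the current $\prec$-maximum edge between those two parts and update it if $\beta(uv)$ is larger. Since $\beta$ is injective these maxima are unambiguous, and after the pass the stored edges are exactly $\CC(H,f)$. This costs $\bigoh(|H|)$ time; the dictionary can be realized by hashing in $\bigoh(|H|)$ space (the nonempty part labels, of which there are at most $\min\{4k^2,2|H|\}$, are simultaneously recorded in an array of size $4k^2$).

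Next I would perform the step-(1) selection. For each nonempty part $V_i$, build the list $L_i$ of edges of $\CC(H,f)$ incident to $V_i$; each edge of $\CC(H,f)$ lands in exactly two such lists, so $\sum_i|L_i|=2|\CC(H,f)|=\bigoh(|H|)$. Run a linear-time selection procedure (with respect to the $\prec$-order) on each $L_i$ to identify its $\min\{8k,|L_i|\}$ heaviest edges and mark each as ``surviving for $V_i$''; this is $\bigoh(\sum_i|L_i|)=\bigoh(|H|)$ in total. An edge of $\CC(H,f)$ is then in the step-(1) set $E_1$ precisely when it is marked for both of its endpoint-parts, checked in $\bigoh(1)$ per edge. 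Finally, run one more linear-time selection on $E_1$ to extract its $\min\{q,|E_1|\}$ heaviest edges; since $E_1\subseteq\CC(H,f)\subseteq E(H)$ this is $\bigoh(|E_1|+q)=\bigoh(|H|+k^2)$, and the result is exactly $\RC(H,f)$. Summing the stages gives $\bigoh(|H|+k^2)$ time and space, since every edge set handled is bounded by $\bigoh(|H|)$ and the only part-indexed bookkeeping is the $\bigoh(k^2)$ array.

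There is no deep obstacle here — this is an implementation lemma — so the points requiring care are purely bookkeeping: using $\beta$ (equivalently $\prec$) consistently as the comparison key in every ``heaviest'' selection and relying on its injectivity to make those selections well defined; correctly skipping intra-part edges when forming $\CC(H,f)$; and double-marking each $\CC(H,f)$-edge so that the step-(1) test ``among the $8k$ heaviest for both endpoint-parts'' is evaluated exactly. The one mildly counterintuitive point worth flagging is that a crude bound on $|E_1|$ is $\Theta(k^3)$ in the worst case; this is harmless because $E_1$ is merely a subset of $\CC(H,f)$ and hence of $E(H)$, so one should store it as a plain (marked) edge list rather than via any per-part-pair structure.
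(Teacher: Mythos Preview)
Your proposal is correct and follows essentially the same three-stage algorithm as the paper: compute $\CC(H,f)$ by selecting the $\beta$-maximum edge between each pair of parts, then run linear-time selection per part to keep the $8k$ heaviest incident edges (retaining an edge only if it survives on both sides), and finally run one more linear-time selection to keep the $q$ heaviest. The only implementation difference is that the paper realizes the first stage via Radix sort on the part-label pairs (giving a worst-case $\bigoh(|H|+k^2)$ bound) rather than a hashed dictionary, so if you want a deterministic rather than expected time bound you should swap in that sorting step.
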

\iflong
\begin{proof}
The algorithm \textbf{Alg-Reduce($H$, $f$)} works as follows. First, for each $v \in V(H)$, it computes $f(v)$ and uses it to partition $V(H)$ into $V_1, \ldots, V_r$, where each $V_i$, $i \in [r]$, consists of the vertices in $V(H)$ that have the same image under $f$. Clearly, the above can be done in time $\bigoh(|H|+k^2)$ (e.g., using Radix sort). Then, it partitions the edges of $H$ into groups $E_{i,j}$, $i \neq j \in [r]$, where each $E_{i,j}$ consists of all the edges in $H$ that go between $V_i$ and $V_j$. Clearly, this can be done in $\bigoh(|H|)$ time (e.g., using Radix sort on the labels of the pairs of parts containing the edges). From each group $E_{i, j}$, among all edge in $E_{i, j}$, the algorithm retains the edge $uv$ corresponding to the maximum value $\beta(uv)$, which clearly can be done in $\bigoh(|H|)$ time. Next, the algorithm groups the remaining edges into (overlapping) groups, where each group $E_i$ consists of all the edges (among the remaining edges) that are incident to the same part $V_i$, for $i \in [r]$; note that each edge appears in exactly two such groups. The algorithm now discards every edge $uv$, where $u \in V_i$, $v \in V_j$, if either $uv$ is not among the heaviest $8k$ edges in $E_i$ or is not among the $8k$ heaviest edges in $E_j$. The above can be implemented in $\bigoh(|H|)$ time by applying a linear-time ($(8k)$-th order) selection algorithm~\cite{Cormen} to each $E_i$ to select the $(8k)$-th heaviest edge in $E_i$, and then discard all edges of lighter weight from $E_i$; an edge is kept if it is kept in both groups that contain its endpoints (which can be easily done, e.g., using a Radix sort). Finally, invoking a $q$-th order selection algorithm~\cite{Cormen}, where $q=k(16k-1)$, we can retain the heaviest $q$ edges among the remaining edges; those edges form $\RC(H, f)$. The above algorithm runs in time $\bigoh(|H|+k^2)$, and its space complexity is dominated by $\bigoh(|H|+k^2)$ as well. This completes the proof. 
\end{proof}
\fi

We now present the streaming algorithm $\mathcal{A}_{Insert}$ for {\sc p-WT-Matching}. 
Let $(\SSS, k)$ be an instance of \wmatcheq, where $\SSS = (e_1,wt(e_1)), \ldots, (e_i,wt(e_i)), \ldots$ is a stream of edge-insertions for a graph $G$. 
For $i \in \mathbb{N}$, let $G_i$ be the subgraph of $G$ consisting of the first $i$ edges $e_1, \ldots, e_i$ of $\SSS$, and for $j \le i$, let $G_{j, i}$ be the subgraph of $G$ whose edges are $\{e_j, \ldots, e_i\}$; if $j > i$, we let $G_{j, i}= \emptyset$. 
Let $f$ be a hash function chosen u.a.r.~from a 
universal set $\mathcal{H}$ of hash functions
mapping $V$ to $[4k^2]^-$. 
The algorithm $\mathcal{A}_{Insert}$, 
after processing the $i$-th element $(e_i,wt(e_i))$, 
computes two subgraphs $G_i^f, G_i^s$
defined as follows. 
For $i =0$, define $G_i^f= G_i^s=\emptyset$. 
Suppose now that $i > 0$. Define $\hat{i}$ to be the largest multiple of $q$ that is smaller than $i$, that is, $i = \hat{i} +p$, where $0 < p \leq q$; and define $i^*$ as the largest multiple of $q$ that is smaller than $\hat{i}$ if $\hat{i} > 0$, and $0$ otherwise (i.e., $i^*=0$ if $\hat{i}=0$). The subgraph $G_i^f$ is defined only when $i$ is a multiple of $q$ (i.e., $i= j\cdot q$ where $j \geq 0$), and is defined recursively for $i = j \cdot q > 0$ as $G_{i}^{f} = \RC(G_{\hat{i}}^{f} \cup G_{i^*+1,\hat{i}})$; that is, $G_{i}^{f}$ is the reduced compact subgraph of the graph consisting of $G_{\hat{i}}^{f}$ plus the subgraph consisting of the edges encountered after $e_{i^*}$, starting from $e_{i^*+1}$ up to $e_{\hat{i}}$.  The subgraph $G_{i}^{s}$ is defined as  
$G_{i}^{s}= G_{\hat{i}}^{f} \cup G_{i^*+1, i}$; that is, $G_{i}^{s}$ consists of the previous (before $i$) reduced compact subgraph plus the subgraph consisting of the edges starting after $i^*$ up to $i$. \ifshort (Refer to Figure~\ref{fig:compact} in Section~\ref{sec:figures} for an illustration.)\fi \iflong We refer to Figure~\ref{fig:compact} for an illustration of the definitions of $G_{i}^{f}$ and $G_{i}^{s}$.\fi

\iflong
\begin{observation}
\label{obs:1}
For each $i$ that is a multiple of $q$, $G_i^f$ contains at most $q$ edges (by the definition of a reduced compact subgraph).
\end{observation}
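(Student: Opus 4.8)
The plan is to read the bound straight off the definition of the reduced compact subgraph, after isolating the trivial base case. First I would dispose of $i=0$ (which is itself a multiple of $q$): by definition $G_0^f=\emptyset$, so $|E(G_0^f)|=0\le q$ and there is nothing to prove.

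For a positive multiple $i=j\cdot q$ of $q$, the recursive definition of $G_i^f$ reads $G_i^f=\RC(G_{\hat i}^f\cup G_{i^*+1,\hat i},f)$ for the appropriate values $\hat i$ and $i^*$ (with $i^*=0$ when $\hat i=0$). Now I would appeal directly to step~(2) of the construction of $\RC(\cdot, f)$: after step~(1) selects some collection of edges (those that, at both of their incident parts, are among the $8k$ heaviest), step~(2) retains only the $q=k(16k-1)$ heaviest of the selected edges, or all of them if fewer than $q$ were selected. In either case the resulting subgraph carries at most $q$ edges, and since $G_i^f$ is exactly this subgraph, $|E(G_i^f)|\le q$, which is the claim.

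There is no genuine obstacle here; the statement is purely a bookkeeping consequence of the definition of $\RC(\cdot, f)$, and the only point that requires a moment's care is that $0$ counts as a multiple of $q$, which is why the separate (but immediate) base case $G_0^f=\emptyset$ must be recorded. I would also note in passing that no induction on $j$ is needed: each $G_i^f$ with $i$ a multiple of $q$ is either empty or the output of a single invocation of $\RC$, so the bound follows from one application of the definition rather than from propagating an estimate through the recursion.
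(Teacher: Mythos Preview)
Your proposal is correct and matches the paper's approach: the paper treats this as an immediate observation, with the parenthetical ``by the definition of a reduced compact subgraph'' being the entire justification, and your argument simply unpacks that definition (step~(2) of $\RC(\cdot,f)$ retains at most $q$ edges) together with the base case $G_0^f=\emptyset$.
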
 

\begin{observation}
\label{obs:2} For each $i$, $G_i^s$ contains at most $3q$ edges. 
\end{observation}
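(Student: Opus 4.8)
The plan is to bound $|E(G_i^s)|$ directly from the recursive definition, splitting $G_i^s = G_{\hat{i}}^{f} \cup G_{i^*+1, i}$ into its two pieces and bounding each separately, using only the trivial fact that $|E(A \cup B)| \le |E(A)| + |E(B)|$ for any two graphs $A, B$ (so overlaps between the two pieces only help us). The degenerate case $i=0$ gives $G_0^s = \emptyset$ and is immediate, so assume $i > 0$.

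First I would invoke Observation~\ref{obs:1}: since $\hat{i}$ is, by construction, a multiple of $q$, the graph $G_{\hat{i}}^{f}$ is a reduced compact subgraph and hence contains at most $q$ edges (and if $\hat{i}=0$ then $G_{\hat{i}}^{f} = \emptyset$, which also has at most $q$ edges). So the only remaining work is to bound the number of stream edges contributed by $G_{i^*+1, i}$, which is exactly $i - i^*$.

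Second, I would do a short case analysis on whether $\hat{i} = 0$. If $\hat{i} = 0$, then by definition $i^* = 0$ and $i \le q$ (no positive multiple of $q$ is strictly below $i$), so $i - i^* = i \le q \le 2q$. If $\hat{i} > 0$, write $\hat{i} = j \cdot q$ with $j \ge 1$; then $i^* = (j-1)q$ is the largest multiple of $q$ strictly below $\hat{i}$, so $\hat{i} - i^* = q$, while $i - \hat{i} = p \le q$ by the definition of $\hat{i}$. Hence $i - i^* = (i - \hat{i}) + (\hat{i} - i^*) \le q + q = 2q$ in all cases. Combining the two bounds, $|E(G_i^s)| \le |E(G_{\hat{i}}^{f})| + (i - i^*) \le q + 2q = 3q$, which is the claim.

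The ``hard part'' here is purely bookkeeping: getting the two-level indexing ($\hat{i}$ versus $i^*$) right and handling the boundary case $\hat{i}=0$ so that the inequality $\hat{i} - i^* \le q$ still holds. There is no genuine mathematical difficulty — all the content is already packaged in the recursive definitions and in Observation~\ref{obs:1}, which itself follows immediately from the fact that $\RC(\cdot,\cdot)$ retains at most $q = k(16k-1)$ edges.
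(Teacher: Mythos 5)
Your proof is correct and is essentially the argument the paper leaves implicit: the observation is stated without proof precisely because it follows from the definitions by the bookkeeping you carry out, namely $|E(G_{\hat{i}}^{f})|\le q$ (Observation~\ref{obs:1}, with $G_0^f=\emptyset$ in the boundary case) plus $|E(G_{i^*+1,i})| = i-i^* \le 2q$ since $i-\hat{i}\le q$ and $\hat{i}-i^*\le q$. Your case split on $\hat{i}=0$ and the union bound $|E(A\cup B)|\le |E(A)|+|E(B)|$ are exactly the intended reasoning, so there is nothing to add.
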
 
\fi

\begin{lemma}\ifshort {\rm ($\spadesuit$)} \fi  \label{lem15}
For each $i\ge 1$, if $G_i$ contains a maximum-weight $k$-matching, then with probability at least $1/2$, 
$G_i^s$ contains a maximum-weight $k$-matching of $G_i$. 
\end{lemma}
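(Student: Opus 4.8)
The plan is to peel off the randomness first and then reduce everything to a deterministic statement about the recursion defining $G_i^s$. Fix a maximum-weight $k$-matching $M$ of $G_i$ (which exists by hypothesis), so $|V(M)|=2k$. Since $f$ is drawn uniformly at random from a universal family of hash functions from $V$ to $[4k^2]^-=[(2k)^2]^-$, Theorem~\ref{lemma-hash} applied with $r=2k$ gives that $f$ is perfect on $V(M)$ with probability strictly larger than $1/2$; call this event $\mathcal{A}$. Conditioned on $\mathcal{A}$, the matching $M$ is \emph{nice} with respect to $f$, and because niceness is only a restriction on matchings while $M$ is simultaneously of maximum weight and nice, the maximum weight of a nice $k$-matching of $G_i$ equals $wt(M)$. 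Hence, once we know that $G_i^s$ always contains a nice $k$-matching of the same maximum weight as $G_i$ does, we are done: under $\mathcal{A}$ the graph $G_i^s$ contains a nice $k$-matching $M'$ with $wt(M')=wt(M)$, and since every edge of $G_i^s$ is an edge of $G_i$, such an $M'$ is a maximum-weight $k$-matching of $G_i$ lying in $G_i^s$; the claimed probability $1/2$ is just $\Pr[\mathcal{A}]$.

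It remains to prove, for \emph{every} hash function $f$ and every index $i$, that if $G_i$ has a nice $k$-matching then $G_i^s$ has one, and the maximum weights of nice $k$-matchings in $G_i$ and in $G_i^s$ coincide. For $i\le 2q$ the recursion gives $G_i^s=G_i$ and there is nothing to prove, so suppose $i>2q$ and write $\hat i=\ell q$ with $\ell\ge 2$; then $G_i^s=G_{\ell q}^f\cup G_{(\ell-1)q+1,\,i}$, where $G_{jq}^f=\RC(G_{(j-1)q}^f\cup G_{(j-2)q+1,\,(j-1)q})$ for $j\ge 2$ and $G_q^f=\emptyset$. The engine of the argument is the following congruence, which extends Lemma~\ref{lemma:reduced}: for any graphs $X$ and $Y$, the graphs $\RC(X,f)\cup Y$ and $X\cup Y$ have nice $k$-matchings of exactly the same maximum weight. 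Granting this, peel the outermost $\RC$ off $G_{\ell q}^f$ and absorb the released block $G_{(\ell-2)q+1,\,(\ell-1)q}$ into $Y$; iterating $\ell-1$ times collapses $G_{\ell q}^f$ down to $G_q^f=\emptyset$, by which point $Y$ has grown to $G_{q+1,\,i}$ and the expression has become $G_{1,q}\cup G_{q+1,\,i}=G_i$, which is the required statement.

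For the congruence, the inequality ``$\le$'' is immediate since $\RC(X,f)\cup Y$ is a subgraph of $X\cup Y$. For ``$\ge$'', take a maximum-weight nice $k$-matching $N$ of $X\cup Y$ and split it as $N=N_X\sqcup N_Y$ with $N_X\subseteq E(X)$ and $N_Y\subseteq E(Y)$; niceness guarantees that the parts used by $N_X$ are disjoint from those used by $N_Y$. Keep $N_Y$ unchanged and route $N_X$ into $\RC(X,f)$: first replace each edge of $N_X$ joining parts $V_a$ and $V_b$ by the $\beta$-maximum $X$-edge between $V_a$ and $V_b$, which lies in $\CC(X,f)$, does not decrease the weight, and keeps each edge inside its original pair of parts; then run the swap/augmentation technique underlying the proof of Lemma~\ref{lemma:reduced}, but with the parts occupied by $N_Y$ declared off-limits in addition to the parts of the current routed matching, so as to push the routed edges first into the $8k$-degree-bounded subgraph of $\CC(X,f)$ and then into its $q$ heaviest edges, which constitute $\RC(X,f)$. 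The resulting matching together with $N_Y$ lies in $\RC(X,f)\cup Y$ and has weight at least $wt(N)$.

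I expect the delicate point to be exactly this last step: making the exchange argument of Lemma~\ref{lemma:reduced} go through while the uncompressed suffix edges $Y$ are present. One has to check that declaring the parts of $V(N_Y)$ off-limits---so that the rerouted copy of $N_X$ stays vertex-disjoint from $N_Y$---does not overrun the counting bounds behind the thresholds $8k$ and $q=k(16k-1)$; in particular, that the ``$q$ heaviest'' pruning step still leaves a legal swap target when the optimal matching is split between the compressed summary $G_{jq}^f$ and the raw recent edges, using that niceness of $N$ keeps the off-limits set within the $2k$-part budget for which those thresholds were calibrated. The remaining ingredients---the hashing bound of Theorem~\ref{lemma-hash} and the mechanical unrolling of the recursion---are routine.
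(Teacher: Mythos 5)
Your opening step (Theorem~\ref{lemma-hash} gives a perfect hash for $V(M)$ with probability larger than $1/2$, hence $M$ is nice) is exactly the paper's, as is the reduction to a deterministic statement about $G_i^s$; but the deterministic step is where you diverge, and your route has a genuine gap---precisely the one you flag at the end. Your induction rests on the ``congruence'' that for all $X,Y$ the graphs $\RC(X,f)\cup Y$ and $X\cup Y$ admit nice $k$-matchings of the same maximum weight, proved by rerouting $N_X$ inside $\RC(X,f)$ while the parts of $N_Y$ are off-limits. The exchange into the subgraph $\mathscr{H}$ of $\CC(X,f)$ kept by the $8k$-per-part filter does survive this modification (the off-limits budget is still only $2k$ parts, far below $8k$). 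The final pruning to the $q=k(16k-1)$ heaviest edges does not: in Lemma~\ref{lemma:reduced} the bound $2k(8k-1)+k=q$ crucially uses that \emph{every} occupied part carries one of its at most $8k$ incident $\mathscr{H}$-edges as a matching edge, whereas in your setting the $2|N_Y|$ parts occupied by $N_Y$ carry no matching edge of $\mathscr{H}$ (their matching edges live in $Y$), so each can contribute up to $8k$ crowding edges. The count of $\mathscr{H}$-edges meeting the off-limits parts then only bounds by $2(k-|N_Y|)(8k-1)+(k-|N_Y|)+2|N_Y|\cdot 8k = q+|N_Y|$, so for $|N_Y|\ge 1$ you cannot conclude that some edge of the top-$q$ list avoids the off-limits parts, and the swap that drives your induction is not justified. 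Rescuing it would require a further idea (e.g., using maximality of $N$ to show that crowding edges at the parts of $N_X$ must themselves run into occupied parts), which the proposal does not supply; as written, the central lemma of your unrolling is unproved.

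The paper sidesteps this entirely: it never compares $\RC(X,f)\cup Y$ with $X\cup Y$ level by level. It applies Lemma~\ref{lemma:reduced} once, to the whole graph $G_i$ with the whole nice matching (where the calibrated counting applies verbatim, since every occupied part does carry a matching edge), and then proves the purely deterministic containment $\RC(G_i,f)\subseteq G_i^s$: an edge discarded at an intermediate stage of the recursion (an edge of some $\mathscr{G}_j$) was discarded because of sufficiently many heavier competitors inside a subgraph of $G_i$, and at least as heavy competitors exist in $G_i$ itself, so such an edge cannot lie in $\RC(G_i,f)$; since $G_i\setminus G_i^s$ consists exactly of these discarded edges, $\RC(G_i,f)\subseteq G_i^s\subseteq G_i$ and the lemma follows. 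If you wish to keep your inductive formulation, you must either prove your congruence by a genuinely new argument or replace it by this one-shot ``once discarded, always discarded'' containment.
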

\iflong
\begin{proof}
Let $M=\{u_0u_1, \ldots, u_{2k-2}u_{2k-1}\}$ be a maximum-weight $k$-matching in $G_i$, and let $V_M=\{ u_0, \ldots, u_{2k-1} \}$.  Since $f$ is a hash function chosen u.a.r.~from a universal set $\mathcal{H}$ of hash functions mapping $V$ to $[4k^2]^-$, 
by Theorem \ref{lemma-hash}, with probability at least $1/2$, $f$ is perfect w.r.t.~$V_M$. 
Now, suppose that $f$ is perfect w.r.t.~$V_M$, and hence, we have 
$f(u_j) \neq f(u_l)$ for every $j \neq l \in [2k]^-$. Thus, $M$ is a nice matching (w.r.t.~$f$) in $G_i$.
By the definition of $\CC(G_i,f)$, 
there is a set $M'$ of $k$ edges $M'=\{u'_0u'_1, \ldots, u'_{2k-2}u'_{2k-1}\}$ in $\CC(G_i,f)$ such that 
$\{ f(u'_{2i}), f(u'_{2i+1}) \} = \{ f(u_{2i}), f(u_{2i+1}) \}$ and $\beta(u'_{2i}u'_{2i+1}) \ge \beta(u_{2i}u_{2i+1})$
 for $i\in [k]^-$. It follows that $wt(u'_{2i}u'_{2i+1}) \ge wt(u_{2i}u_{2i+1})$ for $i\in [k]^-$. 
 Therefore, $\CC(G_i,f)$ contains a maximum-weight $k$-matching of $G_i$, namely $\{u'_0u'_1, \ldots, u'_{2k-2}u'_{2k-1}\}$; moreover, this matching is nice. 
By Lemma \ref{lemma:reduced}, $\RC(G_i,f)$ contains a maximum-weight $k$-matching 
of $\CC(G_i,f)$.

Next, we prove that $G_i^s$ contains a maximum-weight $k$-matching of $G_i$. 
If $i\le 2q$, then $G_i^s=G_{1,i}=G_i$ by definition, and hence $G_i^s$ contains a maximum-weight $k$-matching of $G_i$. 
Suppose now that $i>2q$. 
By definition, $G_{i}^{s}= G_{\hat{i}}^{f} \cup G_{i^*+1, i}$. (Recall that, by definition, 
$G_q^f$=$\emptyset$, $G_{2q}^f=\RC(G_q^f \cup G_{1,q})$, $G_{3q}^f=\RC(G_{2q}^f \cup G_{q+1,2q}),\ldots,$
$G_{\hat{i}}^f = \RC(G_{i^*}^f \cup G_{i^*-q+1, i^*})$.)
For each $j \geq 1$ that is multiple of $q$, let $\mathscr{G}_{j}$ be the graph consisting of the edges that are
in $G_{\hat{j}}^f\cup G_{j^*+1, \hat{j}}$ but are not kept in $G_{j}^f$. Consequently, 
$(\bigcup_{q\le j <\hat{i}, j \text{ is a multiple of } q} \mathscr{G}_j) \bigcup G_{\hat{i}}^f=G_{i^*}$.
By the definition of $\RC(G_i,f)$, it is easy to verify that $\RC(G_i,f)$ does not contain the edges in $\mathscr{G}_j$, for each $j\ge 1$.
It follows that $\RC(G_i,f)$ is a subgraph of 
$G_{i}^{s}$, and hence, $G_i^s$ contains a maximum-weight $k$-matching of $\RC(G_i,f)$, and hence of $\CC(G_i,f)$ by the above discussion. Since $\CC(G_i,f)$ contains a maximum-weight $k$-matching of $G_i$, 
$G_i^s$ contains a maximum-weight $k$-matching of $G_i$. It follows that,  with probability at least $1/2$, $G_i^s$ contains a maximum-weight $k$-matching of $G_i$.
\end{proof}
\fi

\iflong
\begin{center}
\begin{figure*}[htbp]
\begin{subfigure}{\textwidth}

\scalebox{0.9}{
\begin{tikzpicture}

\tkzDefPoint(0,0){e_1}
\tkzDefPoint(2,0){e_q}
\tkzDefPoint(4,0){e_2q}

\draw[shift={(5.5, 0.5)}]   node[]{.   .   .   .   .};

\tkzDefPoint(7,0){e_i*}
\tkzDefPoint(7.6,0){e_i*+1}
\tkzDefPoint(9,0){e_i^}
\tkzDefPoint(11,0){e_i}

\tkzLabelPoint[above, yshift=0.2cm](e_1){$e_1$}
\tkzLabelPoint[above, yshift=0.2cm](e_q){$e_q$}
\tkzLabelPoint[above, yshift=0.2cm](e_2q){$e_{2q}$}

\tkzLabelPoint[above, yshift=0.2cm](e_i*){$e_{i^*}$}
\tkzLabelPoint[above, xshift=0.2cm, yshift=0.2cm](e_i*+1){$e_{i^*+1}$}
\tkzLabelPoint[above, xshift=-0.1cm, yshift=0.2cm](e_i^){$e_{\hat{i}}$}
\tkzLabelPoint[above, xshift=0.2cm, yshift=0.2cm](e_i){$e_{i}$}

\tkzLabelPoint[below, yshift=-0.2cm](e_1){1}
\tkzLabelPoint[below, yshift=-0.2cm](e_q){$q$}
\tkzLabelPoint[below, yshift=-0.2cm](e_2q){$2q$}

\tkzLabelPoint[below, yshift=-0.2cm](e_i*){$i^*$}
\tkzLabelPoint[below, xshift=0.2cm, yshift=-0.2cm](e_i*+1){$i^*$+1}
\tkzLabelPoint[below, yshift=-0.2cm](e_i^){$\hat{i}$}
\tkzLabelPoint[below, yshift=-0.2cm](e_i){$i = jq$}

\foreach \n in {e_1, e_q, e_2q, e_i*, e_i*+1, e_i^, e_i}
  \node at (\n)[circle,fill,inner sep=1.5pt]{};

    \draw[color=black, dashed] (0,0) -- (7,0);
    \draw[color=black, dashed] (7.6,0) -- (12,0);
    \draw[color=black] (7,0) -- (7.6,0);

    \draw[->](8.3,0.3) -- (8.3,1.3);
    \draw[shift={(8.2,1.6)}]   node[]{$\RC(G_{i^*+1,\hat{i}} \cup G_{\hat{i}}^f)$};

    \draw[->](9,0) -- (9.5,1.3);

    \draw[->](11,0) -- (10.8,1.3);
    \draw[shift={(10.5,1.6)}]   node[]{$ = G_{i}^f$};


\draw [decorate,decoration={brace,amplitude=6pt,mirror,raise=3pt},yshift=0pt]
(9,0) -- (7.6,0) node [black,midway,xshift=1cm] {};


\end{tikzpicture}

}
\caption{The definition of $G_i^f$}


\end{subfigure}
\begin{subfigure}{\textwidth}
\scalebox{0.9}{
\begin{tikzpicture}

\tkzDefPoint(0,0){e_1}
\tkzDefPoint(2,0){e_q}
\tkzDefPoint(4,0){e_2q}

\draw[shift={(5.5, 0.5)}]   node[]{.   .   .   .   .};

\tkzDefPoint(7,0){e_i*}
\tkzDefPoint(7.6,0){e_i*+1}
\tkzDefPoint(9,0){e_i^}
\tkzDefPoint(9.6,0){e_i}

\tkzLabelPoint[above, yshift=0.2cm](e_1){$e_1$}
\tkzLabelPoint[above, yshift=0.2cm](e_q){$e_q$}
\tkzLabelPoint[above, yshift=0.2cm](e_2q){$e_{2q}$}

\tkzLabelPoint[above, yshift=0.2cm](e_i*){$e_{i^*}$}
\tkzLabelPoint[above, xshift=0.2cm, yshift=0.2cm](e_i*+1){$e_{i^*+1}$}
\tkzLabelPoint[above, xshift=-0.1cm, yshift=0.2cm](e_i^){$e_{\hat{i}}$}
\tkzLabelPoint[above, xshift=-0.1cm, yshift=0.2cm](e_i){$e_{i}$}

\tkzLabelPoint[below, yshift=-0.2cm](e_1){1}
\tkzLabelPoint[below, yshift=-0.2cm](e_q){$q$}
\tkzLabelPoint[below, yshift=-0.2cm](e_2q){$2q$}

\tkzLabelPoint[below, yshift=-0.2cm](e_i*){$i^*$}
\tkzLabelPoint[below, xshift=0.2cm, yshift=-0.2cm](e_i*+1){$i^*$+1}
\tkzLabelPoint[below, yshift=-0.2cm](e_i^){$\hat{i}$}
\tkzLabelPoint[below, xshift=0.3cm, yshift=-0.2cm](e_i){$i \neq jq$}

\foreach \n in {e_1, e_q, e_2q, e_i*, e_i*+1, e_i^, e_i}
  \node at (\n)[circle,fill,inner sep=1.5pt]{};

    \draw[color=black, dashed] (0,0) -- (7,0);
    \draw[color=black, dashed] (7.6,0) -- (12,0);
    \draw[color=black] (7,0) -- (7.6,0);

    \draw[->](8.6,0.3) -- (8.3,1.3);
    \draw[shift={(9.2,1.6)}]   node[]{$G_{i^*+1,i} \cup G_{\hat{i}}^f = G_{i}^s$};

    \draw[->](9,0) -- (9.4,1.3);

    \draw[->](9.6,0) -- (10.4,1.3);


\draw [decorate,decoration={brace,amplitude=6pt,mirror,raise=3pt},yshift=0pt]
(9.6,0) -- (7.6,0) node [black,midway,xshift=1cm] {};


\end{tikzpicture}
}
\caption{The definition of $G_i^s$.}


\end{subfigure}

\caption{Illustration of the definitions of $G_i^f$ and $G_i^s$.}
\label{fig:compact}

\end{figure*}
\end{center}
\fi

The algorithm ${\cal A}_{Insert}$, when queried at the end of the stream, either returns a maximum-weight $k$-matching of $G$ or the empty set. To do so, at every instant $i$, it will maintain a subgraph $G_{i}^{s}$ that will contain the edges of the desired matching, from which this matching can be extracted. To maintain $G_{i}^{s}$, the algorithm keeps track of the subgraphs $G_{i-1}^s$, $G_{\hat{i}}^{f}$, the edges $e_{i^*+1}, \ldots, e_i$, and will use them in the computation of the subgraph $G_{i}^s$ as follows. If $i$ is not a multiple of $q$, then $G_{i}^s=G_{i-1}^s +e_i$, and the algorithm simply computes $G_{i}^s$ as such. Otherwise (i.e., $i$ is a multiple of $q$), $G_{i}^{s} = G_{\hat{i}}^{f} \cup G_{i^*+1, i}$, and the algorithm uses $G_{\hat{i}}^{f}$ and $G_{i^*+1, i} = \{e_{i^*+1}, \ldots, e_i\}$ to compute and return $G_{i}^s$; however, in this case (i.e., $i$ is a multiple of $q$), the algorithm will additionally need to have $G_{i}^f$ already computed, in preparation for the potential computations of subsequent $G_{j}^s$, for $j \ge i$. By Lemma~\ref{lemma:rc}, the subgraph $G_{i}^{f}$ can be computed by invoking the \textbf{Alg-Reduce} in Lemma~\ref{lemma:rc} on $G_{\hat{i}}^{f} \cup G_{i^*+1, \hat{i}}$, which runs in time $\bigoh(q)$. Note that both $G_{\hat{i}}^{f}$ and $G_{i^*+1, \hat{i}}$ are available to ${\cal A'}$ at each of the steps $\hat{i}+1, \ldots, i$. Therefore, the algorithm will stagger the $\bigoh(q)$ many operations needed for the computation of $G_{i}^{f}$ uniformly (roughly equally) over each of the steps $\hat{i}+1, \ldots, i$, yielding an $\bigoh(1)$ operations per step. Note that all the operations in~\textbf{Alg-Reduce} can be explicitly listed, and hence, splitting them over an interval of $q$ steps is easily achievable. \ifshort Combining the above discussions and lemmas, we conclude with:
\fi
\iflong Suppose that \textbf{Alg-Reduce} is split into $q$ operations $\Lambda_1, \ldots, \Lambda_q$ such that each 
operation takes time $\bigoh(1)$.  The algorithm $\cal A'$ is given in Figure \ref{fig4}:

\begin{algorithm}[htbp]
\caption{The streaming algorithm $\mathcal{A}_{Insert}$ in the insert-only streaming model}\label{fig4}

\begin{algorithmic}[1]

\vspace*{3mm}
\Algphase{{\bf $\mathcal{A}_{Insert}$-Preprocessing}: The preprocessing algorithm}
\Require $n=|V(G)|$ and a parameter $k \in \nat$

\State let $f \in_{u.a.r.} \mathcal{H}$, where $\mathcal{H}$ is a set of universal hash functions from $V$ to $[4k^2]^-$
 
\vspace*{3mm}
\Algphase{{\bf $\mathcal{A}_{Insert}$-Update}: The update algorithm}
\Require The $i$-th element $(e_i=uv, wt(e)) \in \SSS$
\setcounter{ALG@line}{0}

       \State let $j=i\mod q$
        \If { $j$ is $0$ }
            \State execute $\Lambda_q$
            \State $G_i^s=G_{\hat{i}}^f \cup G_{i^*+1,i}$
        \Else
            \State execute $\Lambda_j$
            \State $G_i^s = G_{i-1}^s \cup e_i$
        \EndIf
      
\vspace*{3mm}
\Algphase{{\bf $\mathcal{A}_{Insert}$-Query}: An algorithm to answer query after the $i$-th update}

\setcounter{ALG@line}{0}

\State return a maximum-weight $k$-matching in $G_i^s$
if any; otherwise, return $\emptyset$
       
\end{algorithmic}
\end{algorithm}

\begin{lemma}\ifshort {\rm ($\spadesuit$)} \fi  \label{lem16}
The algorithm $\mathcal{A}_{Insert}$ runs in space $\bigoh(k^2)$ and has update time $\bigoh(1)$. 
\end{lemma}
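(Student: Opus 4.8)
The plan is to bound separately the working storage and the per-step work of $\mathcal{A}_{Insert}$, using Observations~\ref{obs:1} and~\ref{obs:2} together with Lemma~\ref{lemma:rc}. First I would itemize exactly what the algorithm must keep alive at step $i$: the hash function $f$, the current running subgraph $G_{i-1}^{s}$ (which will be updated to $G_i^{s}$), the ``anchor'' subgraph $G_{\hat i}^{f}$, the block of edges $e_{i^*+1},\ldots,e_i$ encountered since the last-but-one multiple of $q$, and the partially-executed state of \textbf{Alg-Reduce} (the operations $\Lambda_1,\ldots,\Lambda_q$ staggered across the current block). Storing $f$ from a universal family over $V\to[4k^2]^-$ takes $\bigoh(\log n)$ bits, i.e.\ $\bigoh(1)$ words. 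By Observation~\ref{obs:2}, $|G_{i}^{s}|\le 3q=\bigoh(k^2)$; by Observation~\ref{obs:1}, $|G_{\hat i}^{f}|\le q=\bigoh(k^2)$; the buffered edge block $e_{i^*+1},\ldots,e_i$ has length at most $2q=\bigoh(k^2)$ since $i-i^*\le 2q$. By Lemma~\ref{lemma:rc}, running \textbf{Alg-Reduce} on an input of size $\bigoh(|H|+k^2)$ uses $\bigoh(|H|+k^2)$ space, and here $|H|=|G_{\hat i}^f\cup G_{i^*+1,\hat i}|=\bigoh(k^2)$, so the intermediate state of the staggered computation is also $\bigoh(k^2)$. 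Summing these contributions gives total space $\bigoh(k^2)$.

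For the update time, I would go through \textbf{$\mathcal{A}_{Insert}$-Update} line by line. Computing $j=i\bmod q$ and the test whether $j=0$ is $\bigoh(1)$. In the case $j\ne 0$, the algorithm executes a single staggered operation $\Lambda_j$, which by construction takes $\bigoh(1)$ time, and appends $e_i$ to $G_{i-1}^{s}$ to form $G_i^{s}$, which is $\bigoh(1)$ (a single edge insertion into an adjacency-list / edge-list representation). In the case $j=0$, the algorithm executes the final staggered operation $\Lambda_q$ (completing the computation of $G_{i}^{f}=\RC(G_{\hat i}^{f}\cup G_{i^*+1,\hat i})$, which is now ready exactly at step $i$), and forms $G_i^{s}=G_{\hat i}^{f}\cup G_{i^*+1,i}$; the latter union is the concatenation of two already-maintained edge lists of total size $\bigoh(k^2)$ --- but I would note that this bookkeeping can itself be folded into the staggering, or alternatively $G_i^s$ is simply the list $G_{i-1}^s$ already built incrementally over the block plus the pointer to $G_{\hat i}^f$, so no $\bigoh(k^2)$ burst is incurred. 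The key accounting point is that \textbf{Alg-Reduce} is invoked once per block of $q$ steps, costs $\bigoh(q)$ total by Lemma~\ref{lemma:rc}, and is amortized --- indeed de-amortized via the explicit split into $\Lambda_1,\ldots,\Lambda_q$ --- to $\bigoh(1)$ per step; since all of $G_{\hat i}^f$ and $G_{i^*+1,\hat i}$ are available throughout the block $\hat i+1,\ldots,i$, the staggering is well-defined.

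I would then close by checking consistency of the timing: $G_{i}^{f}$ is needed starting at the first step after $i$ that consults it (namely the next multiple of $q$), and since its computation is spread over the $q$ steps $\hat i+1,\ldots,i$ and finishes at step $i$, it is ready in time; similarly the edge buffer is reset at each multiple of $q$ so it never exceeds $2q$ edges. Therefore every step does $\bigoh(1)$ work and the stored data never exceeds $\bigoh(k^2)$ words, giving the claimed bounds.

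The main obstacle, and the only place requiring care, is the de-amortization argument: one must verify that splitting \textbf{Alg-Reduce} into $q$ equal-cost pieces $\Lambda_1,\ldots,\Lambda_q$ is actually possible with each piece running in $\bigoh(1)$ time, and that the inputs $G_{\hat i}^{f}$ and $G_{i^*+1,\hat i}$ stay fixed and accessible for the entire block over which the computation is spread. This is handled by the remark preceding the lemma --- all operations of \textbf{Alg-Reduce} (the Radix sorts and the linear-time selections from~\cite{Cormen}) can be written as an explicit sequence of $\bigoh(q)$ elementary operations, and chopping that sequence into $q$ consecutive chunks of $\bigoh(1)$ operations each yields the $\Lambda_j$; the fixedness of the inputs over the block follows because $G_{\hat i}^{f}$ and the edges $e_{i^*+1},\ldots,e_{\hat i}$ are all determined by step $\hat i$, before the block begins. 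Everything else is a routine size count using Observations~\ref{obs:1} and~\ref{obs:2}.
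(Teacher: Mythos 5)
Your proposal is correct and follows essentially the same route as the paper: itemize the stored objects ($f$, $G_i^s$, $G_{\hat i}^f$, the buffered edge block, and the partial state of \textbf{Alg-Reduce}), bound each by $\bigoh(q)=\bigoh(k^2)$ via Observations~\ref{obs:1} and~\ref{obs:2} and Lemma~\ref{lemma:rc}, and obtain $\bigoh(1)$ update time by staggering the $\bigoh(q)$ operations of \textbf{Alg-Reduce} over the $q$ steps of a block (your extra care about avoiding an $\bigoh(k^2)$ burst when forming $G_i^s$ at multiples of $q$, and about the inputs staying fixed during the block, is a welcome refinement of what the paper states tersely). The only piece the paper includes that you omit is the space of the query step, which extracts a maximum-weight $k$-matching from $G_i^s$ and is $\bigoh(q)=\bigoh(k^2)$ by the cited matching algorithms, so the bound is unaffected.
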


\begin{proof}
{\bf $\mathcal{A}_{Insert}$-Preprocessing} takes $\bigoh(1)$ space to store $f$. 
The space needed for {\bf $\mathcal{A}_{Insert}$-Update}
is dominated by that needed for storing $G_i^s, G_i^f$, $G_{i^*+1,i}, G_{i^*+1,\hat{i}}$, 
and the space needed to execute~\textbf{Alg-Reduce}. 
By the definition of the subgraphs $G_i^s$, $G_{i}^{f}$, 
$G_{i^*+1,i}$, $G_{i^*+1,\hat{i}}$, each has size $\bigoh(q)$, and hence can be stored using  $\bigoh(q)=\bigoh(k^2)$ space. By Lemma~\ref{lemma:rc}, 
\textbf{Alg-Reduce} runs in space $\bigoh(k^2)$. Hence, the overall space complexity of ${\cal A'}$ is $\bigoh(k^2)$. 
{\bf $\mathcal{A}_{Insert}$-Query} takes space $\bigoh(q)=\bigoh(k^2)$ \cite{gabow1,gabow2}, since $G_i^s$ contains at most $\bigoh(q)$ edges.
Altogether, {\bf $\mathcal{A}_{Insert}$} runs in space $\bigoh(k^2)$. 

For the update time, as discussed above, we can take the operations performed by 
Algorithm~\textbf{Alg-Reduce} to compute $G_{i}^{f}$, during any interval of $q$ steps, and stagger them uniformly over the $q$ steps of the interval.  
Since~\textbf{Alg-Reduce} performs $\bigoh(q)$ many operations to compute $G_{i}^{f}$ by Lemma~\ref{lemma:rc}, this yields an $\bigoh(1)$ operations per step. It follows that the update time of $\mathcal{A}_{Insert}$ is $\bigoh(1)$, thus completing the proof. 
\end{proof}

Without loss of generality, and for convenience, we will assume that the algorithm is 
queried at the end of the stream $\SSS$, even though the query could take place after any arbitrary operation $i$. \fi
\begin{theorem}\ifshort {\rm ($\spadesuit$)} \fi
\label{thm:insert}
Let $0< \delta <1$ be a parameter. In the insert-only streaming model, there is an algorithm
for {\sc p-WT-Matching} such that, on input $(\SSS, k)$, the algorithm outputs a matching $M'$ satisfying that (1) if $G$ contains a $k$-matching then, 
with probability at least $1-\delta$, $M'$ is a maximum-weight $k$-matching of $G$; and
(2) if $G$ does not contain a $k$-matching then $M'=\emptyset$. The algorithm runs in 
$\bigoh(k^2\log \frac{1}{\delta})$ space and has $\bigoh(\log \frac{1}{\delta})$ update time. In particular, for any constant $\delta$, the algorithm runs in space $\bigoh(k^2)$ and has $\bigoh(1)$ update time.
\end{theorem}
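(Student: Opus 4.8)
The plan is to amplify the success probability of the one-pass algorithm $\mathcal{A}_{Insert}$ built above by standard independent repetition. Set $t=\lceil \log_2(1/\delta)\rceil$ and run $t$ mutually independent copies $\mathcal{A}^{(1)},\dots,\mathcal{A}^{(t)}$ of $\mathcal{A}_{Insert}$ in parallel on $\SSS$, where copy $\mathcal{A}^{(\ell)}$ uses its own hash function $f^{(\ell)}$ drawn u.a.r.\ and independently from the universal family $\mathcal{H}$ mapping $V$ to $[4k^2]^-$; each copy maintains its own subgraph $G_i^{s,(\ell)}$ exactly as prescribed by $\mathcal{A}_{Insert}$. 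When queried (w.l.o.g.\ at the end of $\SSS$, so that $G_i=G$), each copy returns a maximum-weight $k$-matching of $G_i^{s,(\ell)}$ if one exists and $\emptyset$ otherwise, and the combined algorithm outputs $M'$ equal to a maximum-weight $k$-matching among the non-empty copy outputs, or $\emptyset$ if all $t$ outputs are empty.

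For correctness, first suppose $G$ has no $k$-matching. Every $G_i^{s,(\ell)}$ is a subgraph of $G_i\subseteq G$, so none of them contains a $k$-matching, every copy outputs $\emptyset$, and hence $M'=\emptyset$, giving (2). Now suppose $G$ has a $k$-matching. By Lemma~\ref{lem15} (applied with $i$ equal to the stream length, so $G_i=G$), for each fixed $\ell$, with probability at least $1/2$ the subgraph $G_i^{s,(\ell)}$ contains a maximum-weight $k$-matching of $G$; since the $f^{(\ell)}$ are independent these events are independent, so the probability that they all fail is at most $2^{-t}\le \delta$. Condition on some copy $\ell_0$ succeeding. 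Because $G_i^{s,(\ell_0)}\subseteq G$, a maximum-weight $k$-matching of $G_i^{s,(\ell_0)}$ has weight at most that of a maximum-weight $k$-matching of $G$, and because $G_i^{s,(\ell_0)}$ contains such a matching it has weight at least that much, hence exactly equal; so $\mathcal{A}^{(\ell_0)}$ returns a maximum-weight $k$-matching of $G$. Every other copy returns $\emptyset$ or a $k$-matching of a subgraph of $G$, whose weight is at most that of a maximum-weight $k$-matching of $G$, so taking the heaviest non-empty output yields a maximum-weight $k$-matching of $G$ with probability at least $1-\delta$, giving (1).

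For the resource bounds, by Lemma~\ref{lem16} each copy uses $\bigoh(k^2)$ space and $\bigoh(1)$ update time (worst-case, via the staggering of \textbf{Alg-Reduce} over $q$ steps), so feeding each stream element to all $t$ copies and storing their sketches costs $\bigoh(t\cdot k^2)=\bigoh(k^2\log(1/\delta))$ space and $\bigoh(t)=\bigoh(\log(1/\delta))$ update time; for constant $\delta$ this is $\bigoh(k^2)$ space and $\bigoh(1)$ update time. The argument is essentially routine; the only point that needs care---and the one place where sloppiness would be fatal---is checking that the "take the heaviest copy output" combination can never produce an invalid or over-weight answer, which is exactly what the containment $G_i^{s,(\ell)}\subseteq G$ guarantees. (A minor bookkeeping subtlety: when $G$ has a $k$-matching of total weight $0$ one must still prefer a non-empty output to $\emptyset$ so that $M'\neq\emptyset$; this is handled by outputting $\emptyset$ only when all copies output $\emptyset$.)
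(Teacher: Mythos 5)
Your proposal is correct and follows essentially the same route as the paper: run $\lceil\log(1/\delta)\rceil$ independent copies of $\mathcal{A}_{Insert}$ in parallel, use Lemma~\ref{lem15} to get success probability $1/2$ per copy and independence to amplify to $1-\delta$, and invoke Lemma~\ref{lem16} for the $\bigoh(k^2\log\frac{1}{\delta})$ space and $\bigoh(\log\frac{1}{\delta})$ update time. The only (immaterial) difference is the combining step: the paper takes the union $G'$ of the copies' subgraphs $G_m^s$ and extracts a single maximum-weight $k$-matching from $G'$, whereas you extract a matching from each copy and return the heaviest non-empty output; both are justified by the containment of each $G_m^s$ in $G$.
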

\iflong
\begin{proof}
Run $\lceil \log \frac{1}{\delta} \rceil$-many copies of algorithm $\mathcal{A}_{Insert}$  in parallel (i.e., using dove-tailing). Then, 
by the end of the stream, there are $\lceil \log \frac{1}{\delta} \rceil$ copies of $G_m^s$, where $m$ is the length of the stream. Let $G'$ be the union of 
all the $G_m^s$'s produced by the runs of $\mathcal{A}_{Insert}$. If $G'$ has a $k$-matching, let $M'$ be a maximum-weight $k$-matching of $G'$; otherwise, let $M'=\emptyset$.

By Lemma \ref{lem15}, if $G_m$, i.e., $G$, contains a maximum-weight $k$-matching, with probability at least $1/2$,
one copy of $G_m^s$ contains a maximum-weight $k$-matching 
of $G$. Hence, with probability at least 
$1-(1/2)^{\lceil \log \frac{1}{\delta} \rceil} \ge 1-\delta$, $G'$ contains a 
maximum-weight $k$-matching of $G$. It follows that 
 if $G$ contains a maximum-weight $k$-matching $M$ then, 
with probability at least $1-\delta$, $G'$ contains a 
maximum-weight $k$-matching of the same weight as $M$ and hence $M'$ is a maximum-weight $k$-matching of $G$.

Observe that the graph $G'$ is a subgraph of $G$. Therefore, 
statement (2) in the theorem clearly holds true.

By Lemma \ref{lem16}, the above algorithm runs in space $\bigoh(k^2\log \frac{1}{\delta})$ and has 
update time $\bigoh(\log \frac{1}{\delta})$, thus completing the proof. 
\end{proof}
\fi

\section{Concluding Remarks}
\label{sec:conclusion}
In this paper, we presented streaming algorithms for the $k$-matching problem in both the dynamic and insert-only streaming models. Our results improve previous works and achieve optimal space and update-time complexity. Our result for the weighted $k$-matching problem was achieved using a newly-developed  structural result that is of independent interest. 

An obvious open question that ensues from our work is whether the dependency on the number of distinct weights $W$ in our result, and the result in~\cite{Chitnis2016} as well, for weighted $k$-matching in the dynamic streaming model can be lifted. More specifically, does there exist a dynamic streaming algorithm for {\sc p-WT-Matching} whose space complexity is $\tilde{O}(k^2)$ and update time is $\tilde{O}(1)$? We leave this as an (important) open question for future research.

\newpage
\bibliography{ref}

\ifshort
\newpage 
\section{Appendix: List of Figures}
\label{sec:figures}
\begin{figure*}[ht]
\begin{center}
\scalebox{0.9}{
\begin{tikzpicture}
\draw[shift={(0,3)}]   node[above]{$U$};
\draw (0,0) ellipse (1cm and 3cm);
\draw[shift={(4,3)}]   node[above]{$U$};
\draw (4,0) ellipse (1cm and 3cm);

     \draw[shift={(2,0)}]   node[above]{$f \in_{u.a.r.} {\mathcal H}$};
     \draw[->](1.1,0) -- (2.9,0);

\draw[shift={(4,2.5)}]   node[]{$U_0$};
\draw[shift={(4,1.6)}]   node[rotate=90]{. . .};
\draw[shift={(4,0.55)}]   node[]{$U_{j-1}$};
\draw[shift={(4,-0.15)}]   node[]{$U_{j}$};
\draw[shift={(4,-0.9)}]   node[]{$U_{j+1}$};
\draw[shift={(4,-1.85)}]   node[rotate=90]{. . .};
\draw[shift={(4,-2.6)}]   node[]{$U_{d_1-1}$};

\tkzDefPoint(3.35, 2.3){A}\tkzDefPoint(4, 2.15){B}\tkzDefPoint(4.65, 2.3){C}
\tkzCircumCenter(A,B,C)\tkzGetPoint{O}
\tkzDrawArc(O,A)(C)

\tkzDefPoint(3.08, 1.2){A}\tkzDefPoint(4, 1.05){B}\tkzDefPoint(4.92, 1.2){C}
\tkzCircumCenter(A,B,C)\tkzGetPoint{O}
\tkzDrawArc(O,A)(C)   

\tkzDefPoint(3, 0.4){A}\tkzDefPoint(4, 0.25){B}\tkzDefPoint(5, 0.4){C}
\tkzCircumCenter(A,B,C)\tkzGetPoint{O}
\tkzDrawArc(O,A)(C)  

\tkzDefPoint(3, -0.4){A}\tkzDefPoint(4, -0.55){B}\tkzDefPoint(5, -0.4){C}
\tkzCircumCenter(A,B,C)\tkzGetPoint{O}
\tkzDrawArc(O,A)(C) 

\tkzDefPoint(3.08, -1.2){A}\tkzDefPoint(4, -1.35){B}\tkzDefPoint(4.92, -1.2){C}
\tkzCircumCenter(A,B,C)\tkzGetPoint{O}
\tkzDrawArc(O,A)(C)  

\tkzDefPoint(3.35, -2.25){A}\tkzDefPoint(4, -2.4){B}\tkzDefPoint(4.65, -2.25){C}
\tkzCircumCenter(A,B,C)\tkzGetPoint{O}
\tkzDrawArc(O,A)(C)

    \draw[dotted, color=black, line width=1] (10,4.4) -- (10,-4.4);

    \draw[line width=1, color=green] (10,4) -- (10,2);
    \draw[line width=1, color=red] (10,2) -- (10,-2);
    \draw[line width=1, color=blue] (10,-2) -- (10,-4);
    
\draw[-](9.8,4) -- (10.2,4);
\draw[-](9.8,2) -- (10.2,2);
    \draw[-](9.9,1.2) -- (10.1,1.2);
    \draw[-](9.9,0.4) -- (10.1,0.4);
    \draw[-](9.9,-1.2) -- (10.1,-1.2);
\draw[-](9.8,-2) -- (10.2,-2);
\draw[-](9.8,-4) -- (10.2,-4);

\tkzDefPoint(4.8,0){x}
\tkzLabelPoint[below](x){$x$}
\foreach \n in {x}
  \node at (\n)[circle,fill,inner sep=1.5pt]{};

    \draw[shift={(8.2, 1)}]   node[above, rotate=18]{using $h_1 \in F_j$};
    \draw[->](4.8,0) -- (9.8,1.6);
    \draw[shift={(10,1.6)}]   node[right]{$I_{j\cdot d_2}$};
    
    \draw[->](4.8,0) -- (9.8,0.75);
    \draw[shift={(10,0.75)}]   node[right]{$I_{j\cdot d_2+1}$};
    
    \draw[shift={(8.2, -1.18)}]   node[above, rotate=-18]{using $h_{d_2} \in F_j$};
    \draw[->](4.8,0) -- (9.8,-1.6);
    \draw[shift={(10,-1.6)}]   node[right]{$I_{(j+1)\cdot d_2-1}$};

\draw [decorate,decoration={brace,amplitude=10pt,mirror,raise=4pt},yshift=0pt]
(11.05,2) -- (11.05,4) node [black,midway,xshift=1cm] {\footnotesize
$I'_{j-1}$};

\draw [decorate,decoration={brace,amplitude=10pt,mirror,raise=4pt},yshift=0pt]
(11.05,-2) -- (11.05,2) node [black,midway,xshift=0.8cm] {\footnotesize
$I'_{j}$};

\draw [decorate,decoration={brace,amplitude=10pt,mirror,raise=4pt},yshift=0pt]
(11.05,-4) -- (11.05,-2) node [black,midway,xshift=1cm] {\footnotesize
$I'_{j+1}$};

\draw [decorate,decoration={brace,amplitude=8pt,mirror,raise=3pt},yshift=0pt]
(6,-0.5) -- (6,0.5) node [black,midway,xshift=1cm] {\footnotesize
${\mathcal G}(x)$};

\end{tikzpicture}
}
\end{center}
\caption{Illustration for Algorithms~\ref{good-hash} and ~\ref{alg-mapping}.}
\label{fig:process}

\end{figure*}

\begin{center}
\begin{figure*}[ht]
\begin{subfigure}{\textwidth}

\scalebox{0.9}{
\begin{tikzpicture}

\tkzDefPoint(0,0){e_1}
\tkzDefPoint(2,0){e_q}
\tkzDefPoint(4,0){e_2q}

\draw[shift={(5.5, 0.5)}]   node[]{.   .   .   .   .};

\tkzDefPoint(7,0){e_i*}
\tkzDefPoint(7.6,0){e_i*+1}
\tkzDefPoint(9,0){e_i^}
\tkzDefPoint(11,0){e_i}

\tkzLabelPoint[above, yshift=0.2cm](e_1){$e_1$}
\tkzLabelPoint[above, yshift=0.2cm](e_q){$e_q$}
\tkzLabelPoint[above, yshift=0.2cm](e_2q){$e_{2q}$}

\tkzLabelPoint[above, yshift=0.2cm](e_i*){$e_{i^*}$}
\tkzLabelPoint[above, xshift=0.2cm, yshift=0.2cm](e_i*+1){$e_{i^*+1}$}
\tkzLabelPoint[above, xshift=-0.1cm, yshift=0.2cm](e_i^){$e_{\hat{i}}$}
\tkzLabelPoint[above, xshift=0.2cm, yshift=0.2cm](e_i){$e_{i}$}

\tkzLabelPoint[below, yshift=-0.2cm](e_1){1}
\tkzLabelPoint[below, yshift=-0.2cm](e_q){$q$}
\tkzLabelPoint[below, yshift=-0.2cm](e_2q){$2q$}

\tkzLabelPoint[below, yshift=-0.2cm](e_i*){$i^*$}
\tkzLabelPoint[below, xshift=0.2cm, yshift=-0.2cm](e_i*+1){$i^*$+1}
\tkzLabelPoint[below, yshift=-0.2cm](e_i^){$\hat{i}$}
\tkzLabelPoint[below, yshift=-0.2cm](e_i){$i = jq$}

\foreach \n in {e_1, e_q, e_2q, e_i*, e_i*+1, e_i^, e_i}
  \node at (\n)[circle,fill,inner sep=1.5pt]{};

    \draw[color=black, dashed] (0,0) -- (7,0);
    \draw[color=black, dashed] (7.6,0) -- (12,0);
    \draw[color=black] (7,0) -- (7.6,0);

    \draw[->](8.3,0.3) -- (8.3,1.3);
    \draw[shift={(8.2,1.6)}]   node[]{$\RC(G_{i^*+1,\hat{i}} \cup G_{\hat{i}}^f)$};

    \draw[->](9,0) -- (9.5,1.3);

    \draw[->](11,0) -- (10.8,1.3);
    \draw[shift={(10.5,1.6)}]   node[]{$ = G_{i}^f$};


\draw [decorate,decoration={brace,amplitude=6pt,mirror,raise=3pt},yshift=0pt]
(9,0) -- (7.6,0) node [black,midway,xshift=1cm] {};


\end{tikzpicture}

}
\caption{The definition of $G_i^f$}


\end{subfigure}

\begin{subfigure}{\textwidth}
\scalebox{0.9}{
\begin{tikzpicture}

\tkzDefPoint(0,0){e_1}
\tkzDefPoint(2,0){e_q}
\tkzDefPoint(4,0){e_2q}

\draw[shift={(5.5, 0.5)}]   node[]{.   .   .   .   .};

\tkzDefPoint(7,0){e_i*}
\tkzDefPoint(7.6,0){e_i*+1}
\tkzDefPoint(9,0){e_i^}
\tkzDefPoint(9.6,0){e_i}

\tkzLabelPoint[above, yshift=0.2cm](e_1){$e_1$}
\tkzLabelPoint[above, yshift=0.2cm](e_q){$e_q$}
\tkzLabelPoint[above, yshift=0.2cm](e_2q){$e_{2q}$}

\tkzLabelPoint[above, yshift=0.2cm](e_i*){$e_{i^*}$}
\tkzLabelPoint[above, xshift=0.2cm, yshift=0.2cm](e_i*+1){$e_{i^*+1}$}
\tkzLabelPoint[above, xshift=-0.1cm, yshift=0.2cm](e_i^){$e_{\hat{i}}$}
\tkzLabelPoint[above, xshift=-0.1cm, yshift=0.2cm](e_i){$e_{i}$}

\tkzLabelPoint[below, yshift=-0.2cm](e_1){1}
\tkzLabelPoint[below, yshift=-0.2cm](e_q){$q$}
\tkzLabelPoint[below, yshift=-0.2cm](e_2q){$2q$}

\tkzLabelPoint[below, yshift=-0.2cm](e_i*){$i^*$}
\tkzLabelPoint[below, xshift=0.2cm, yshift=-0.2cm](e_i*+1){$i^*$+1}
\tkzLabelPoint[below, yshift=-0.2cm](e_i^){$\hat{i}$}
\tkzLabelPoint[below, xshift=0.3cm, yshift=-0.2cm](e_i){$i \neq jq$}

\foreach \n in {e_1, e_q, e_2q, e_i*, e_i*+1, e_i^, e_i}
  \node at (\n)[circle,fill,inner sep=1.5pt]{};

    \draw[color=black, dashed] (0,0) -- (7,0);
    \draw[color=black, dashed] (7.6,0) -- (12,0);
    \draw[color=black] (7,0) -- (7.6,0);

    \draw[->](8.6,0.3) -- (8.3,1.3);
    \draw[shift={(9.2,1.6)}]   node[]{$G_{i^*+1,i} \cup G_{\hat{i}}^f = G_{i}^s$};

    \draw[->](9,0) -- (9.4,1.3);

    \draw[->](9.6,0) -- (10.4,1.3);


\draw [decorate,decoration={brace,amplitude=6pt,mirror,raise=3pt},yshift=0pt]
(9.6,0) -- (7.6,0) node [black,midway,xshift=1cm] {};


\end{tikzpicture}
}
\caption{The definition of $G_i^s$.}


\end{subfigure}

\caption{Illustration of the definitions of $G_i^f$ and $G_i^s$.}
\label{fig:compact}

\end{figure*}
\end{center}
\fi

\end{document}